\title{Risk-dependent centrality in economic and financial networks
}
\author{Paolo Bartesaghi\thanks{Department of Statistics and Quantitative Methods, University of Milano - Bicocca, Via Bicocca degli Arcimboldi 8, 20126, Milano, Italy
		(\email{paolo.bartesaghi@unimib.it}, \email{rosanna.grassi@unimib.it}).}
	\and 
	Michele Benzi\thanks{Scuola Normale Superiore, Piazza dei Cavalieri 7, 56126, Pisa, Italy
		(\email{michele.benzi@sns.it}).}
	\and Gian Paolo Clemente\thanks{Department of Mathematics for Economics, Financial and Actuarial Sciences, Università Cattolica del Sacro Cuore, Largo Gemelli 1, 20123, Milano
		(\email{gianpaolo.clemente@unicatt.it}).}
	\and Rosanna Grassi\footnotemark[2]
	\and Ernesto Estrada\thanks{Institute of Mathematics and Applications (IUMA), Universidad de Zaragoza, Pedro Cerbuna 12, E-50009 Zaragoza, Spain; ARAID Foundation, Government of Aragón, 50018 Zaragoza, Spain
		(\email{estrada66@unizar.es}).}
}
\def\1{\rm 1\!\!{\mathbb I}}
\setlist[enumerate]{leftmargin=.5in}
\setlist[itemize]{leftmargin=.5in}
\crefname{hypothesis}{Hypothesis}{Hypotheses}
\begin{document}
	
	\maketitle
	
	\begin{abstract}
		Node centrality is one of the most important and widely used concepts
		in the study of complex networks. Here, we extend the paradigm of
		node centrality in financial and economic networks to consider the
		changes of node ``importance'' produced not only by the variation
		of the topology of the system but also as a consequence of the external   levels of risk to which the network as a whole is submitted. Starting   from the ``Susceptible-Infected'' (SI) model of epidemics and its relation to the communicability functions of networks we develop a series of risk-dependent centralities for nodes in (financial and economic) networks. We analyze here some of the most important mathematical properties of these risk-dependent centrality measures. In particular, we study the newly observed phenomenon of ranking interlacement, by means of which two entities may interlace their ranking positions in terms of risk in the network as a consequence of the change in the external conditions only, i.e., without any change in the topology. We test the risk-dependent centralities by studying two real-world systems: the network generated by collecting assets of the S\&P 100 and the corporate board network of the US top companies, according to Forbes in 1999. We found that a high position in the ranking of the analyzed financial companies according to their risk-dependent centrality corresponds to companies more sensitive to the external market variations during the periods of crisis.
	\end{abstract}
	
	\begin{keywords}
		Communicability, Complex networks, SI models, Centrality measures, Risk propagation
	\end{keywords}
	
	\begin{AMS}
		05C82, 90B15, 91G80
	\end{AMS}

\section{Introduction}

Modern economic and financial systems are characterized by a vast
collection of interacting agents \cite{Allen_Babus_2009,Battiston_DelliGatti 2012,Bongini, Furusawa_Konishi_2007,  Goyal_Moraga-Gonzalez,Kirman_1997}.
In economic systems, for instance, the interdependence among entities
characterizes the trade and exchange of goods in non-anonymous markets
as well as in risk sharing agreements in developing countries \cite{Allen_Babus_2009}.
In this framework, the agents' interaction is responsible for the
nature of the relations between the individual behavior and the aggregate behavior \cite{Kirman_1997}.

The human factor which underlies these economic and financial systems
is also characterized by the interconnectivity. The existence of networks of interpersonal relations has been empirically observed to constitute a fundamental factor in shaping the inter-institution networks, or in accounting for the networks of risk-sharing agreements \cite{Fafchamp_Gupert_2007,Fafchamp_Lund_2003},
the formation of buyer-seller networks \cite{Corominas-Bosch_2004, Gale_Kariv_2007, Kranton_Minehart_2001},
product adoption decisions \cite{Economides_1996, Katz_Shapiro}, diffusive
processes \cite{Galeotti_Goyal_2007,Golub_Jackson_2007,Lopez-pintado 2008},
industrial organization \cite{Goyal_Moraga-Gonzalez}, trade agreements
\cite{Furusawa_Konishi_2007} and even for the existence of interbank
networks \cite{Allen_Babus_2009}. This is not surprising as humans
are responsible for the execution of deals between the institutions
to which they belong to \cite{Cohen_Frazzini_Malloy_2008,Kramarz_Thesmar_2013,Nguyen_Dang_2007}.

From a mathematical perspective all these interdependencies between
economic and financial entities can be captured by the formal concept of
network, in which nodes represent the entities (individuals, firms,
countries, etc.) and edges account for the relations between such
entities, ranging from social relations to trade agreements \cite{Estrada book}.
Hence, it is possible to use the tools of network theory to analyze
the structure, the evolution and the dynamic processes that take
place on these systems. On one side, researchers have studied the
topological properties of these networks (sometimes called \textit{static
	analysis}), which do not assume mechanisms of transmission of effects
through the economic and financial entities \cite{Gai_2011,Kanno_2015,Tirado_2012}.
Among such studies, it is frequent to find analyses of clusters formed
by groups of institutions, as well as the centrality of individual nodes in the networks \cite{Battiston_Puliga2012,Boss_2004,Puhr_2012}.
Specifically, centrality measures (see Chapter 5 in \cite{Estrada book} for a detailed analysis) are topological characterizations of the nodes and their neighborhood in a network. 
In the analysis of financial and economic networks, the use of centrality measures is not so effective,
as the classical ones provide a static view of the network and even other measures based on dynamic processes, such as random walks based centralities \cite{random walk centrality}, do not capture the changing conditions to which these networks could be submitted in relatively short periods of time. As an illustrative example, let us consider a hypothetical interbank network for which we are interested in analyzing the risk-dependent
exposure of the various entities of the system. Any centrality measure will point out a specific and static ranking of the nodes. However, a bank which is very central at a low-level of external risk is not necessarily central when such level of external risk increases, and vice versa.
On the other hand, the propagation of shocks
through these networks is considered and it is usually known as \textit{dynamic analysis} \cite{Amini, Cocco_2009,Cont,  Elsinger_2006,contagion_4, Haldane_May, Glass}.
In these studies, a specific way of transmission of these shocks through
the network is assumed -- as in the case of ``Susceptible-Infected'' and ``Susceptible-Infected-Recovered''
epidemiological models  \cite{Lee,Mei,Contagion on networks} -- and then a systemic risk analysis 
is based on the contagion effects observed through such models.

In this work we develop a mathematical model to account
	for the risk exposure of an entity in a networked (economic or financial)
	system. This model is based on the relation between the Susceptible-Infected
	epidemiological model and the so-called communicability functions
	of a network \cite{Communicability}. Using this connection we derive new centrality indices that quantify the level of risk at which an entity is exposed to as a function of the global external level of risk. 
	Our approach takes advantage of the benefits of both static and dynamic analyses. Indeed, unlike the standard approaches followed in the literature, these risk-dependent centralities are not static indices, as most of centrality indices are, but they vary with the change of the external global risk level at which the system is submitted to.
	More importantly, the ranking of the nodes in these networks also depends on this global external
	level of risk. This means that an entity -- a node in the network -- which
	is at low (high) level of risk under external conditions can be at
	high (low) level under different conditions. 
	
	We test our model by
	using two different systems, a network of assets based on the daily
	returns of the components of the S\&P 100 for the period ranging from
	January 2001 to December of 2017 and a network representing the interconnection
	between companies in the US top corporates according to Forbes in
1999. In the first case we extract the essential information about
asset correlations through the minimum spanning tree. We measure how
the centrality of the assets changes at different values of the
external risk. What emerges is a high volatility in the rankings during
the financial crisis of 2007-2008, when the node centrality proves
to be more sensitive to the external risk. In the case of the corporate
network we analyze a sample of significant companies, looking
	for a correlation between the shareholders value creation (SVC) and
	their behavior during and after the crisis period at which data were
	collected. We find that a remarkable increase in their risk-centrality
	ranking during a crisis corresponds to a less resilient reaction to
	the external market turmoil.

The paper is structured as follows. In Subsection \ref{sec:rel} we recall the main literature about the use of epidemiological models for modeling financial contagion and we motivate the choice of a  Susceptible-Infected model. The necessary mathematical preliminaries are given in Section \ref{sec:prelim}. Therefore, we describe a Susceptible-Infected (SI) model on a financial network (Section \ref{sec:model}) and we define the risk-dependent centrality proving some mathematical properties (Section \ref{sec:RDcent}). We perform numerical analyses of the proposed centrality for random networks (Section \ref{sec:Numerical}), then we apply the proposed measure to real-world financial networks  (Section \ref{sec:RWfin}) and we analyze the ranking interlacement problem (Section \ref{sec:Ranking}). Section \ref{sec:COV} remarks how the proposed model could provide additional insights in the analysis of the economic and financial impacts of the crisis related to the diffusion of the new coronavirus named SARS-COV-2. 
Conclusions follow in Section \ref{sec:conclusions}.

\subsection{Related literature and motivations}\label{sec:rel}
The process in which one financial institution spreads negative effects
to another institution resembles very much the propagation of epidemics
on networks \cite{May_2008,Haldane_May}. The fact that such processes
are known as ``financial contagion'' already captures part of these
similarities. Then, it is not strange that epidemiological models
are frequently used to capture the subtleties of financial contagion
processes. There are many of such compartmental models in epidemiology,
but the most widely used for modeling financial contagion are the Susceptible-Infected-Recovered
(SIR) \cite{contagion_9,SIR_2,SIR_3,Kost,SIR_5,SIR_6} and the Susceptible-Infected-Susceptible
(SIS) \cite{SIS_1,SIS_2} ones. They are not only used to model financial
contagion per se, but also for the propagation of rumors and innovations
of interest for financial institutions \cite{SIR_rumors,SIR_rumors-1}.
These models are well-suited in depicting financial contagion because
they do not require arbitrary assumption on loss rates and balance
sheets. As remarked by Toivanen \cite{contagion_8}, they capture the
psychological aspects of contagion process ``by relating a bank\textquoteright s
relative financial strength with the perceived counterparty risk and
expectations''.

The previously mentioned SIS/SIR models and their variants are mainly
used in studying the dynamics of contagion in a system in a post-mortem
way. As it is well-known, both SIS and SIR models are characterized
by the presence of a threshold $\tau$, which is defined as the reciprocal
of the principal eigenvalue $\lambda_{1}$ of the adjacency matrix.
The below-the-threshold or above-the-threshold behavior of the spreading
process depends on whether the effective infection rate is less than
or greater than such a threshold. Below the threshold, we have the
extinction of the contagion and above the threshold a non-zero fraction
of infected nodes persists in the network even over a wide range of
timescales. The effective infection rate depends on both the infection
rate per link $\gamma$ and on the curing or recovering rate $\delta$.
For instance in Figure \ref{SISModel}(a) we illustrate the evolution
of a contagion dynamics for an Erd\H{o}s-Rényi graph with 100 nodes
and connection probability 0.1 by using the SIS model. The principal
eigenvalue of the adjacency matrix is $\lambda_{1}\approx10.71$ so that the epidemic
threshold is $\tau\approx0.093$. The infectivity rate per link is
0.002 for both curves and the initial infection probability is 0.2
(20 nodes over 100 initially infected). The curing rate is 0.001
for the dashed red line (epidemic) and 0.04 for the solid blue line
(extinction). Then, the effective infection rate is $2>0.093$ for
the the dashed red line (epidemic) and $0.05<0.093$ for the solid
blue line (extinction).

\begin{figure}[H]
	\begin{centering}
	\subfloat[]{\begin{centering}
			\includegraphics[width=0.45\textwidth]{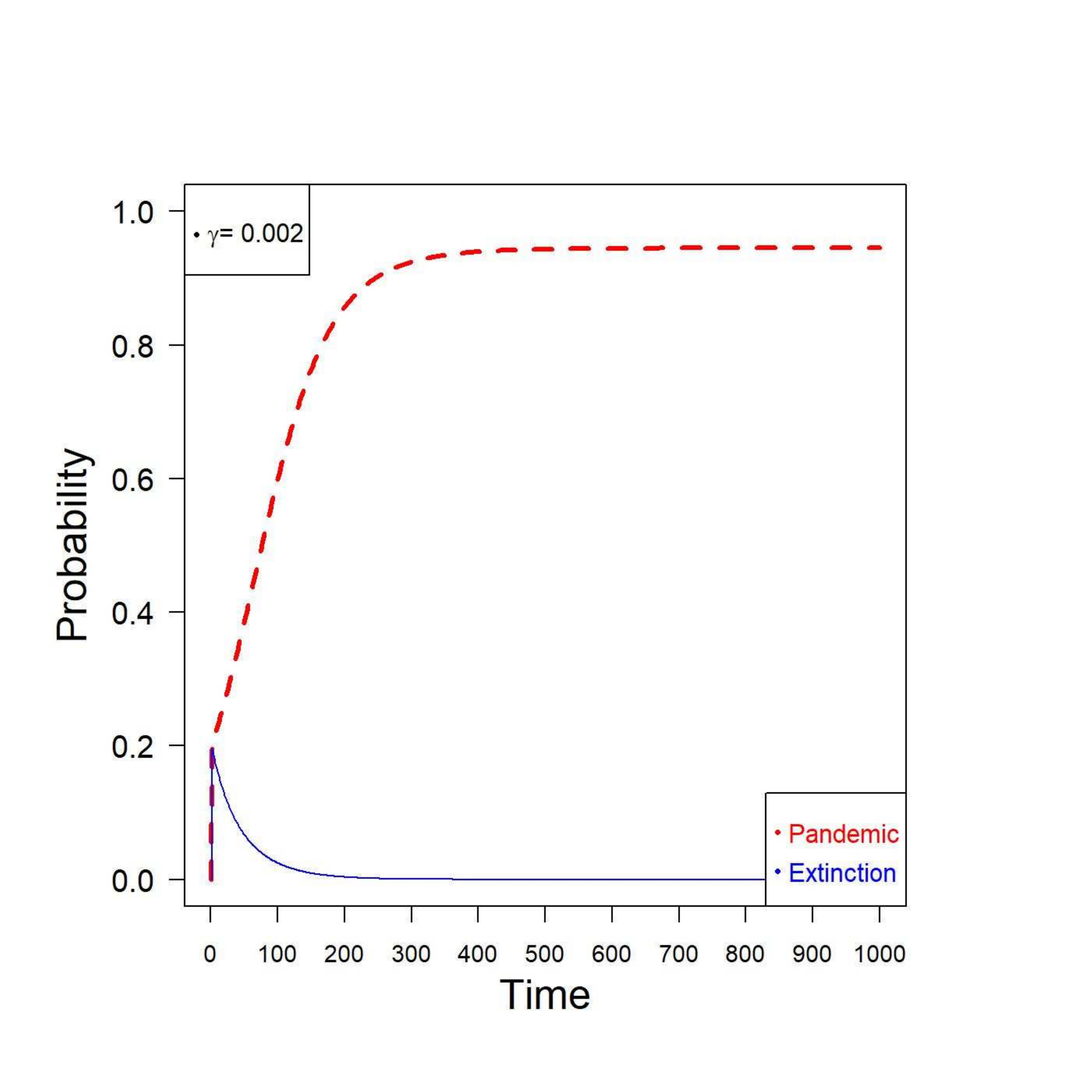}
			\par\end{centering}
	}\subfloat[]{\begin{centering}
			\includegraphics[width=0.45\textwidth]{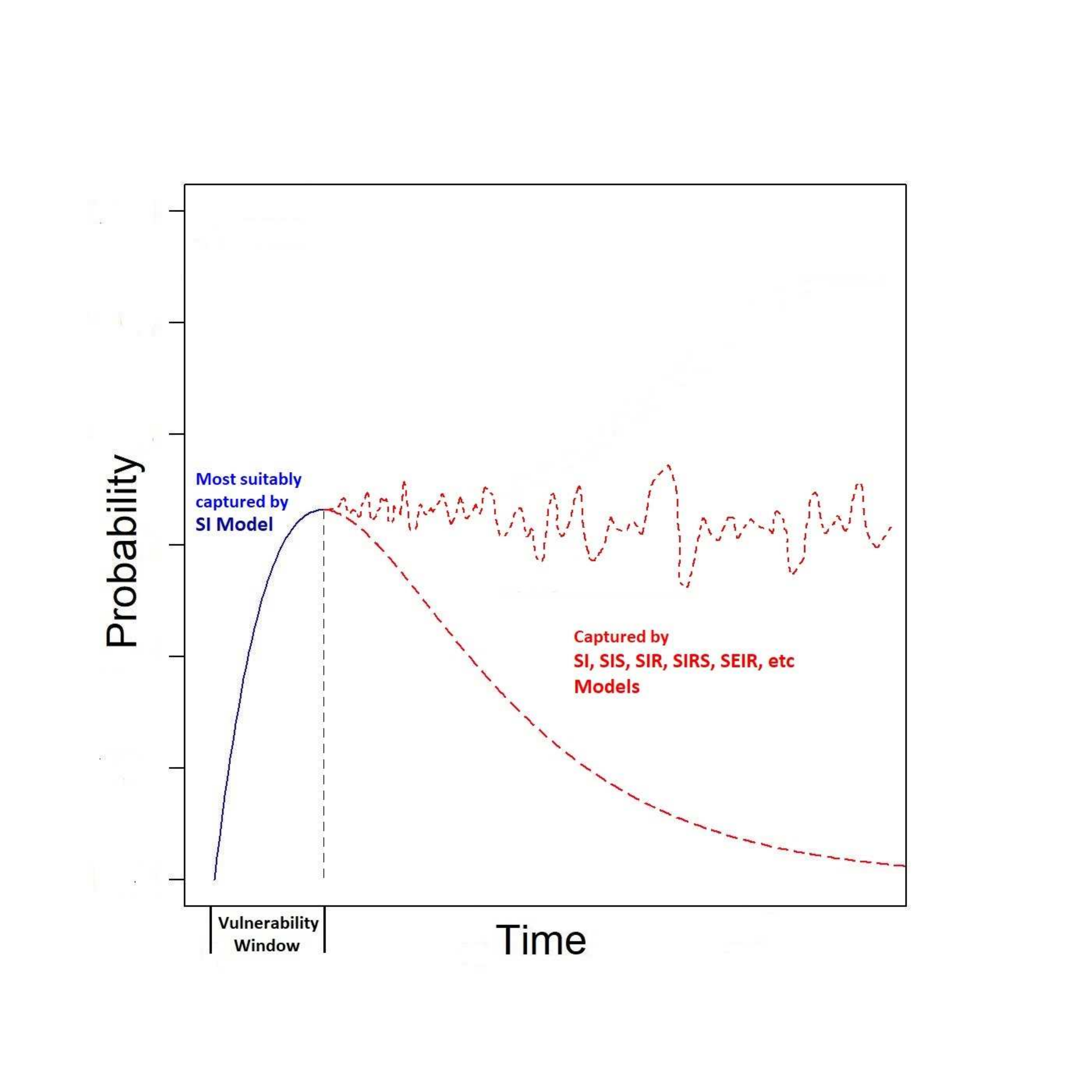}
			\par\end{centering}
	}
    \par\end{centering}
		\caption{(a) Evolution of a SIS dynamics over an Erd\H{o}s-Rényi graph (b)
		Evolution of contagion before and after the window of vulnerability.
		Adapted from Lee et al.\cite{lee_2019}}
	
	\label{SISModel}
\end{figure}

In this work we are interested in the very early signals that the
system can provide for alerting about a propagation of a financial
contagion. In this case it is very important to consider the window
of vulnerability between the time the contagion phenomenon is firstly
recognized and the time an action is taken to face the infection.
This window could be arbitrarily wide. In any real condition, there
is a non negligible time interval in which a recovery tool is not
available yet and the recovering rate is equal to zero. It has been
recently shown by Lee et al. \cite{lee_2019} that, within
this window, the spreading phenomenon is better described by a SI
model than by any other model with a non-null recovering rate, e.g.,
SIR and SIS (see Figure \ref{SISModel}(b)). In this framework, a key point is to predict
the ``most at risk'' nodes in the network. Therefore, we are interested
in the early times of the epidemic where it is possible to limit or
avoid the distress propagation by introducing specific measures on
the risky nodes in the network.

Moreover, in order to be effective in reducing the spreading phenomenon,
the curing rate has to be large enough. More precisely, since $\lambda_{1}>\max\left(\bar{k},\sqrt{k_{\max}}\right)$ (being $\bar{k}$ the mean degree and $k_{\max}$ the maximum degree),
the curing rate has to be at least $\delta>\gamma\cdot\max\left(\bar{k},\sqrt{k_{\max}}\right)$
to get a below-the-threshold behavior. But for a big real network
$\sqrt{k_{\max}}$ can be very large, even if the mean degree
is small. This implies that $\delta$ has to be significantly bigger than $\gamma$,
or in other words, the infection significantly weaker than the self-recovering
process. This fact could be totally unlikely in a real contagion process
on a real network and it makes the use of the SIS or SIR model extremely
unrealistic as remarked by Lee et al. \cite{lee_2019}. Even
when $\delta$ is small and the node infection process is dominant, the corresponding
epidemic dynamic is better captured by the SI model. From an application
point of view, this is possibly true over a wide range of timescales
under constrained environments where applying massive action to limit
contagion is practically infeasible.

As mentioned before, the detection of risky nodes in a network could be relevant for limiting the risk propagation effects (see, e.g., \cite{Glick, Battiston_Gatti}). Hence, centrality of a given institution as best spreader node in a contagion process has been widely explored (see, for instance, \cite{Lee}) in order to identify the most dangerous crisis epicenter. \color{black}
 The idea of best spreader node has also been studied in \cite{Mieghem2} in terms of topological centralities, which was previously investigated under the name of \lq\lq vibrational centrality\rq\rq (see, e.g., \cite{Estrada_Hatano_2010}). 
Centralities have been also used as measures to assess contagion in the interbank market.  In this framework, Dimitros and Vasileios \cite{Dimitros} recommended the use of well-established centrality measures as a way to identify the most important variables in a network. Battiston et al.  \cite{Battiston_Puliga2012} introduce DebtRank, a centrality measure that accounts for distress in one or more banks, based on the possibility of losses occurring prior to default. The concept that some banks might be too central to fail originates from this work (see, e.g., \cite{Battiston_Puliga2012}).

\section{Preliminaries}
\label{sec:prelim}
Here we use indistinctly the terms graphs and networks. Most of the
network theoretic concepts defined hereafter can be found in \cite{Estrada book}.
A \textit{graph} $\Gamma=(V,E)$ is defined by a set of $n$
nodes (vertices) $V$ and a set of $m$ edges $E=\{(u,v)|u,v\in V\}$
between the nodes. $(u, u) \in E$ is a loop starting and ending in $u$.
The \textit{degree} of a node, denoted by $k_{u}$,
is the number of edges incident to $u$ in $\Gamma$. The \textit{adjacency
	matrix} of the graph $A=\left(A_{uv}\right)_{n\times n}$ with entries
$A_{uv}=1$ if $\left(u,v\right)\in E$ or zero otherwise. We consider here simple graphs, i.e. without loops and multiedges.
The theoretical model will be developed for unweighted networks, we also recall here the definition of weighted graphs, as we consider in the paper two empirical real examples for which the network is weighted.
A weighted \textit{graph} $\Gamma'=(V,E,W)$ is a graph in which
$w_{uv}\in W$ is a positive number assigned to the corresponding
edge $\left(u,v\right)\in E$. In this case the sum of the weights
for all edges incident to a node is known as the \textit{weighted
	degree} or \textit{strength}. We consider here only undirected networks,
such that $\left(u,v\right)\in E$ implies that $\left(v,u\right)\in E$.
In this case the matrix $A$ can be expressed as $A=U\varLambda U^{T}$
where $U=\left[\vec{\psi}_{1}\cdots\vec{\psi}_{n}\right]$ is an orthogonal
matrix of the eigenvectors of $A$ and $\varLambda$ is the diagonal
matrix of eigenvalues $\lambda_{1}\geq\lambda_{2}\geq\cdots\geq\lambda_{n}$.
The entries of $\vec{\psi}_j$ are denoted by ${\psi}_{j,1},\ldots, {\psi}_{j,n}$.

\textcolor{black}{An important quantity for studying communication
	processes in networks is the communicability function
	\cite{Communicability}, defined for a pair of nodes $u$
	and $v$ as}

\textcolor{black}{
	\[
	G_{uv}=\sum_{k=0}^{\infty}\frac{\left(A^{k}\right)_{uv}}{k!}=\left(\exp\left(A\right)\right)_{uv}=\sum_{j=1}^{n}e^{\lambda_{j}}\mathbf{\mathbf{\psi}}_{j,u}\mathbf{\psi}_{j,v}.
	\]
}

\textcolor{black}{It counts the total number of walks starting at
	node $u$ and ending at node $v$, weighted in decreasing order of
	their length by a factor of $\frac{1}{k!}$. A }\textit{\textcolor{black}{walk}}\textcolor{black}{{}
	of length $k$ in $\Gamma$ is a set of nodes $i_{1},i_{2},\ldots,i_{k},i_{k+1}$
	such that for all $1\leq l\leq k$, $(i_{l},i_{l+1})\in E$. A }\textit{\textcolor{black}{closed
		walk}}\textcolor{black}{{} is a walk for which $i_{1}=i_{k+1}$. Therefore,
	$G_{uv}$ is considering shorter walks as more influential than longer
	ones. The matrix exponential is an example of a general class of matrix
	functions which are expressible as}

\textcolor{black}{
	\begin{equation}
	\big(f(A)\big)_{uv}=\sum_{k=0}^{\infty}c_{k}\big(A^{k}\big)_{uv},
	\end{equation}
	where $c_{k}$ are coefficients giving more weight to the shorter
	than to the longer walks, and making the series converge. The term
	$G_{uu}$, which counts the number of closed walks starting at the
	node $u$ giving more weight to the shorter than to the longer ones,
	is known as the subgraph centrality of the node $u$.}

We also consider here a Susceptible-Infected (SI) model over an undirected network. Each susceptible node
becomes infected at the infection rate $\gamma$ \textit{per link} times the number
of infected neighboring nodes. Let $t^*$ be the instant in which a node $i$ is infected. Node $i$ remains in this state $\forall t\geq t^*$ and does not come back susceptible.
Let us introduce a random variable $X_{i}(t)$ denoting the state of a node $i$ at time $t$ 

\begin{alignat}{1}
X_{i}\left(t\right) & =\left\{ \begin{array}{c}
1\\
0
\end{array}\begin{array}{c}
\textnormal{if \ensuremath{t\geq t^*}}\\
\textnormal{otherwise}
\end{array}\right.
\end{alignat}

Then we define

\begin{equation}
x_{i}(t)=P[X_{i}(t)=1]=\mathbb{E}[X_{i}(t)]\in[0,1],
\end{equation}

which is the probability that node $i$ is infected at time $t$. In other words, node $i$ is healthy at time $t$ with probability $1-x_i(t)$. 
For the whole network, we define the vector of probabilities:

\begin{equation}
\vec{x}(t)=[x_{1}(t),\ldots,x_{n}(t)]^{T}.
\end{equation}

\section{Model}
\label{sec:model}
Let us consider a SI model on a financial network. The nodes of a graph $\Gamma=\left(V,E\right)$ represent financial institutions and the edges connecting them represent an interaction that can transmit a ``disease'' from one institution
to another. A node can be susceptible and then get infected from a
nearest neighbor or it is infected and can transmit the infection
to other susceptible nodes. Let $\gamma$ be the infection rate and
let $x_{i}\left(t\right)$ be the probability that node $i$ get infected
at time $t$ from any infected nearest neighbor. Then,

\begin{equation}
	\dfrac{dx_{i}\left(t\right)}{dt}=\vec{\dot{x}}\left(t\right)=\gamma\left[1-x_{i}\left(t\right)\right]\sum_{j=1}^{n}A_{ij}x_{j}\left(t\right)\label{eq:SI_original}
\end{equation}
which in matrix-vector form becomes:

\begin{equation}
	\vec{\dot{x}}\left(t\right)=\gamma\left[1-\textnormal{diag}\left(\vec{x}\left(t\right)\right)\right]A\vec{x}\left(t\right),\label{nonlineq}
\end{equation}
with initial condition $\vec{x}\left(0\right)=\vec{x}_{0}.$

It is well-known that on a strongly connected network\footnote{A graph  $\Gamma = (V,E)$ is strongly connected if and only if for each pair of nodes $i,j\in V$ there is a directed walk starting at $i$ and ending at $j$, and a directed walk starting at $j$ and ending at $i$}\cite{Mei}:
\begin{enumerate}
	\item if $\vec{x}_{0}\in[0,1]^{n}$ then $\vec{x}(t)\in[0,1]^{n}$ for all
	$t>0$; 
	\item $\vec{x}(t)$ is monotonically non-decreasing in $t$; 
	\item there are two equilibrium points: $\vec{x}=\vec{0}$, i.e.
	no epidemic, and $\vec{x}=\vec{1}$ (the vector of all ones), i.e. full contagion; 
	\item the linearization of the model around the point $\vec{0}$ is given
	by
	
	\begin{equation}
	\vec{\dot{x}}(t)=\gamma A\,\vec{x}(t)\label{eq:model}
	\end{equation}
	
	and it is exponentially unstable; in fact, since, in a non-empty undirected graph, $A$ has at least one positive eigenvalue, any solution component in the direction of the corresponding eigenvector grows unboundedly as $t$ increases;
	\item each trajectory with $\vec{x}_{0}\neq \vec{0}$ converges asymptotically
	to $\vec{x}=\vec{1}$, i.e. the epidemic spreads monotonically
	to the entire network.
\end{enumerate}
In particular, the linearized problem comes from the following observation.
It can be checked that

\begin{equation}
	\dot{x}_{i}(t)=\gamma[1-x_{i}(t)]\sum_{j=1}^{n}A_{ij}x_{j}(t)\leq\gamma\sum_{j=1}^{n}A_{ij}x_{j}(t)
\end{equation}
or

\begin{equation}
	\vec{\dot{x}}(t)\leq\gamma A\,\vec{x}(t),
\end{equation}
$\forall i$ and $\forall t$. Then, we can use the linear dynamical system

\begin{equation}
\vec{\dot{x}}^{\star}\left(t\right)=\gamma A\vec{x}^{\star}\left(t\right),
\end{equation}

as an upper-bound for the original non-linear dynamical system, that has been used in the literature (see \cite{Mei}) as an approximation of the exact problem. One of its main advantages is that it can be solved analytically and its solution $\vec{x}^{\star}\left(t\right)$ can be written as:

\begin{equation}\label{eq:sol}
	\vec{x}^{\star}(t)=e^{\gamma tA}\vec{x}^{\star}_{0},
\end{equation}
which using the spectral decomposition of $A$ can be written as

\begin{equation}
	\vec{x}^{\star}(t)=\sum_{j=1}^{n} e^{\gamma t\lambda_{j}}\vec{\psi}_{j}\vec{\psi}_{j}^{T}\vec{x}^{\star}_{0}.
\end{equation}

This solution to the linearized model is affected by the following
main problems:
\begin{enumerate}
	\item $\vec{x}^{\star}(t)$ grows quickly without bound, in spite of the fact that
	$\vec{x}^{\star}(t)$ is a vector of probabilities which should not exceed the unit; 
	\item $\vec{x}^{\star}(t)$ is an accurate solution to the nonlinear SI problem
	only if $t\to0$ and $\vec{x}^{\star}_{0}\to0$.
\end{enumerate}

The mathematical properties of the linear dynamical system \ref{eq:model} as well as of the solution \ref{eq:sol} have been extensively studied by Mugnolo in \cite{Mugnolo}. We direct the reader to this reference for the details.

Hereafter we will follow the recent work of Lee et al. \cite{lee_2019}, who proposed the following change of variable to avoid the aforementioned problems with the solution of the linearized SI model:

\begin{equation}
	y_{i}\left(t\right)\coloneqq-\log\left(1-x_{i}\left(t\right)\right),
\end{equation}
which is an increasing convex function. Then, as $1-x_{i}(t)$ is the
probability that node $i$ is not infected at a given time $t$, the new
variable $y_{i}\left(t\right)$ can be interpreted as the information content of the node $i$ or surprise of not being infected (see, e.g., \cite{cover_2006}). According to \cite{lee_2019}, the SI model \cref{eq:SI_original} can be now written as

\begin{equation}
	\dfrac{dy_{i}\left(t\right)}{dt}=\dot{y}_{i}\left(t\right)=\gamma\sum_{j=1}^{n}A_{ij}x_{i}\left(t\right)
	\label{eq:310}
\end{equation}
or

\begin{equation}
	\vec{\dot{y}}\left(t\right)=\gamma A\,\vec{x}(t).
\end{equation}

The approximate solution to the SI model provided by \cite{lee_2019} is then given by

\begin{equation}
	\vec{x}(t)=\vec{1}-e^{-\vec{y}\left(t\right)},
\end{equation}
where $e^{-\vec{y}\left(t\right)}$ is the vector in which the $i$th
entry is $e^{-y_{i}\left(t\right)}$ and

\begin{align}
	 \vec{y}\left(t\right)=  \qquad & 	e^{\gamma tA\textnormal{diag}\left(\vec{1}-\vec{x}_{0}\right)}\left[-\log\left(1-\vec{x}_{0}\right)\right] \nonumber \\
       & +\sum_{j=0}^{\infty}\frac{\left(\gamma t\right)^{j+1}}{\left(j+1\right)!}\left[A{\rm diag}\big(\vec{1}-\vec{x}_{0}\big)\right]^{j}A\left(\vec{x}_{0}+\big(\vec{1}-\vec{x}_{0}\big)\log\big(\vec{1}-\vec{x}_{0}\big)\right).
\end{align}

As stressed by \cite{lee_2019}, the interesting case of the dynamics is when $\vec{x}_{0}<\vec{1}$, in which case the solution simplifies to

\begin{equation}
	\vec{y}(t)=\vec{y}_{0}+\left[e^{\gamma tA{\rm diag}\left(\vec{1}-\vec{x}_{0}\right)}- I \right]\cdot\textnormal{diag}\left(\vec{1}-\vec{x}_{0}\right)^{-1}\vec{x}_{0}.
\end{equation}

Now, we can make the further assumption that the initial probabilities of being infected are equal for every node, i.e. that at the beginning every node has the same probability $\beta$ to be infected and to be the one from which the epidemic starts. This means that we are asking for
\begin{equation}
	x_{0i}=\beta=\frac{c}{n},\ \forall i=1,\ldots,n
\end{equation}
for some scalar constant $c$. In this case ${\rm diag}\left(\vec{1}-\vec{x}_{0}\right)=\left(1-\frac{c}{n}\right) I =(1-\beta) I$.
If we set $\alpha=1-\beta$, the approximate solution of the SI on the network becomes:
\begin{equation}
	\vec{y}(t)=\vec{y}_{0}+\frac{1-\alpha}{\alpha}\left[e^{\alpha\gamma tA}-I\right]\vec{1}.
\end{equation}
and since $\vec{y}_{0}=\left(-\log\alpha\right)\vec{1}$,

\begin{equation}
	\vec{y}(t)=\left(\frac{1}{\alpha}-1\right)e^{\alpha\gamma tA\,}\vec{1}-\left(\log\alpha+\frac{1-\alpha}{\alpha}\right)\vec{1}.\label{solution}
\end{equation}

The component $(e^{\alpha \gamma t A}\vec{1})_{i}$ is called total communicability of node $i$ and it will be denoted by ${\mathscr{R}}_{i}$. Hence, component-wise we have:%

\begin{equation}
	y_{i}\left(t\right)=\left(\frac{1}{\alpha}-1\right)\mathscr{R}_{i}-\left(\log\alpha+\frac{1-\alpha}{\alpha}\right).
\end{equation}

Keeping in mind that $-\log\alpha=y_{i}(0)$ and $\alpha=1-\beta$, we can write the previous equation also as

\begin{equation}
	\Delta y_{i}\left(t\right)=y_{i}\left(t\right)-y_{i}\left(0\right)=\frac{\beta}{\alpha}(\mathscr{R}_{i}-1),
\end{equation}

which means that $\mathscr{R}_{i}-1$ at time $t$ is proportional to the variation in the information content of node $i$ from time 0 to time $t$. Finally, the probability of node $i$ of being infected at time $t$ can be expressed in terms of ${\mathscr R}_{i}$ as
\begin{equation}
x_{i}(t)=1-(1-\beta)e^{-\frac{\beta}{1-\beta}({\mathscr R}_{i}-1)}.\label{probability}
\end{equation}

When the parameter $\beta$ is fixed, the number of infected nodes depends only on the term $e^{\alpha \gamma t A}\vec{1}$ and then on the total communicabilities $\mathscr{R}_{i}$. It is worth noticing that the probability given by \ref{probability} for a node $i$ represents an upper bound for the exact solution of the SI model. Hence, in this way we do not underestimate the contagion probabilities. Let us consider, for instance, the time evolution of an infection propagation on an Erd\H{o}s-Rényi network with 100 nodes and edge density $\delta=0.1$. Results are illustrated in \cref{Figure_1} for two different values of the infectivity rate, $\gamma=0.001$ (left) and $\gamma=0.002$ (right). The dashed red lines represent the mean probability that a node is infected at time $t$ as given by equation \ref{probability}. The solid blue lines represent the same probability as given by the exact solution of the Kermack-McKendrick SI model with the same mean degree. In both plots, the initial probability is $\beta=0.01$.

\begin{figure}[H]
	\centering
	\subfloat[]{\includegraphics[width=0.45\textwidth]{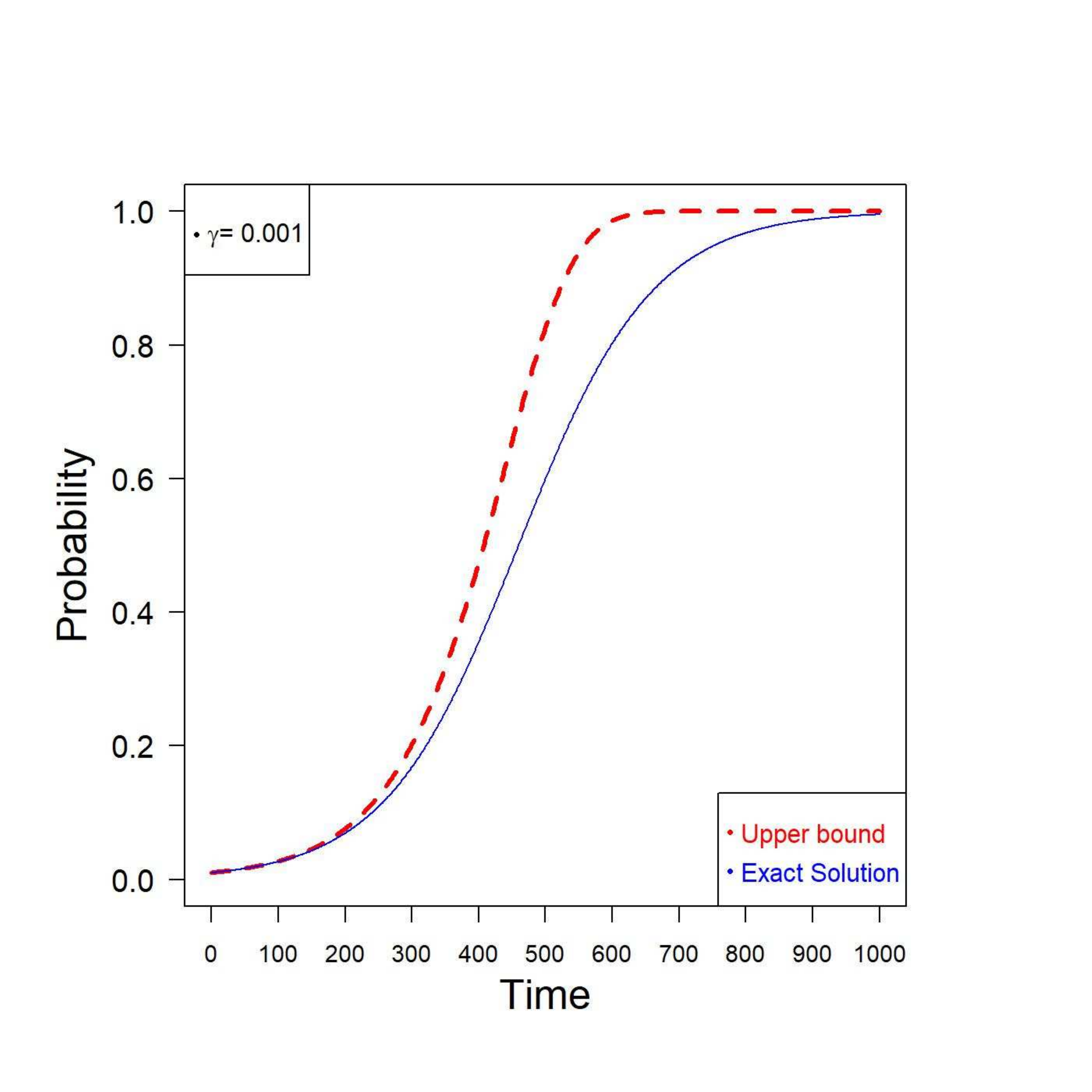}
		
	}\subfloat[]{\includegraphics[width=0.45\textwidth]{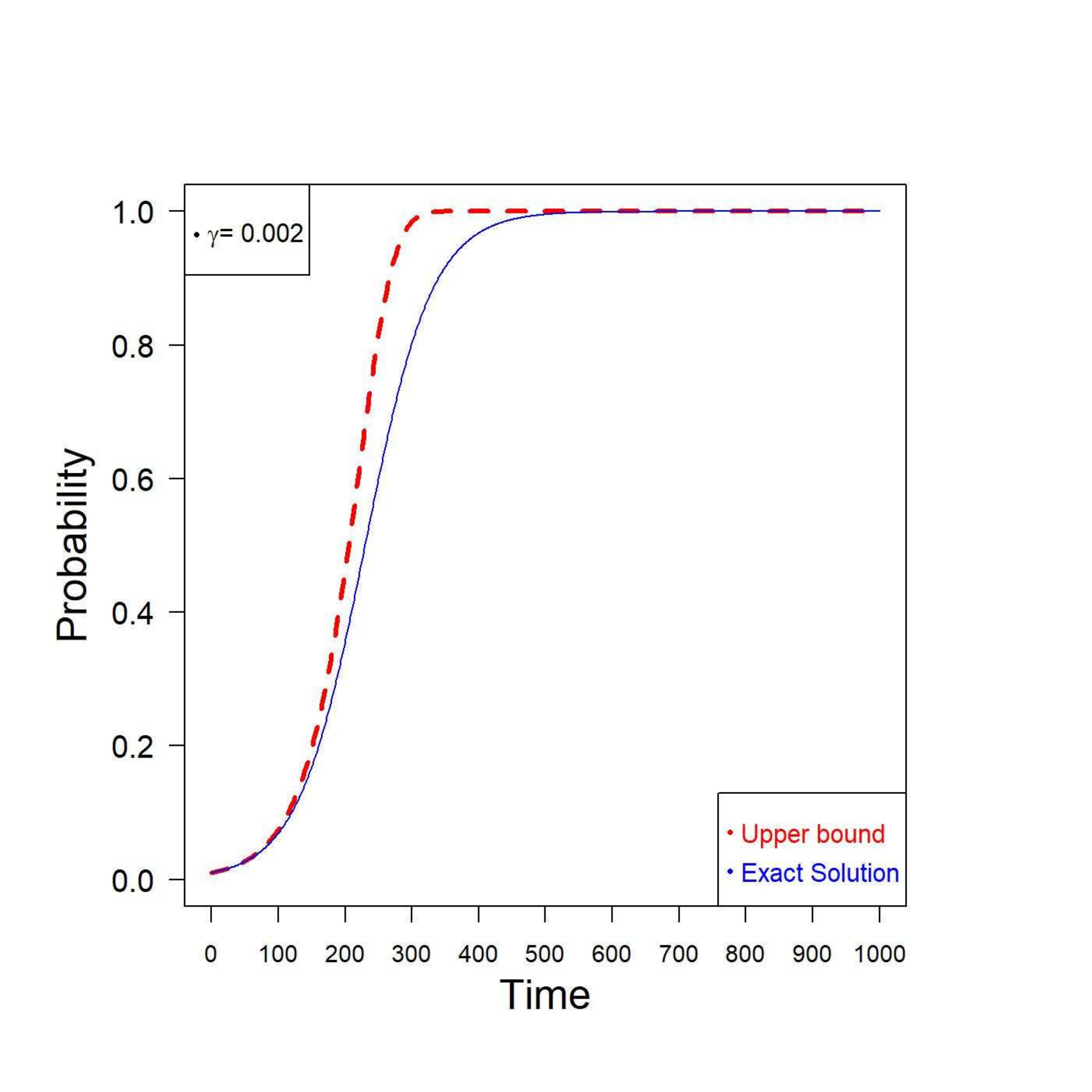}
		
	}
	\caption{Simulation of the progression of a SI epidemics on an Erd\H{o}s-Rényi
	network with 100 nodes and edge density $\delta=0.1$. The parameters used in the model are: $\beta=0.01$ and $\gamma=0.001$ (left) and $\gamma=0.002$ (right). Dashed (red) lines represent the upper bound given by \ref{probability}; solid (blue) lines represent the value of the same probability in a Kermack-McKendrick SI model with the same mean degree $\bar{k}=(n-1)\delta$.}
\label{Figure_1}
\end{figure}

\section{Risk-dependent centrality}
\label{sec:RDcent}
Let us designate $\zeta=\alpha\gamma t$, which determines the level
of risk to which the whole network is submitted at time $t$. For instance, for $\gamma=0$, i.e. $\zeta=0$, there is no risk of infection on the network as a node cannot transmit the disease to a nearest neighbor. This situation
corresponds to the case of isolated nodes (no edges). When $\zeta\rightarrow\infty$
the risk of infection is very high due to the fact that for a fixed
value of $c$ the infectivity is infinite. Therefore, we call $\mathscr{R}_{i}=\left(e^{\zeta A}\vec{1}\right)_{i}$
the risk-dependent centrality of the node $i$. That is, the values
of $\mathscr{R}_{i}$ reflects how central a node is in ``developing''
the epidemics on the network. As the networks considered are
undirected, this centrality accounts
for both the facility with which the node gets infected as well as
the propensity of this node to infect other nodes. The index $\mathscr{R}_{i}$
can be expressed as

\begin{equation}
	\mathscr{R}_{i}=\left[ \left( I+\zeta A+\zeta^{2}\frac{A^{2}}{2!}+\zeta^{3}\frac{A^{3}}{3!}+\cdots \right)\vec{1} \right]_{i},
\end{equation}

which indicates that it counts the number of walks of different lengths,
that have started at the corresponding node, weighted by a factor $\dfrac{\zeta^{k}}{k!}$.
It is straightforward to realize from the definition of the risk-dependent
centrality that it can be split into two contributions. That is, $\mathscr{R}_{i}$ is composed by a weighted sum of all closed walks that start and end at $i$, $\left(e^{\zeta A}\right)_{ii}$ and by the weighted sum of walks that start at the node $i$ and end elsewhere, $\sum_{j\neq i}\left(e^{\zeta A}\right)_{ij}$

\begin{equation}
	\mathscr{R}_{i}=\left(e^{\zeta A}\right)_{ii}+\sum_{j\neq i}\left(e^{\zeta A}\right)_{ij}:=\mathscr{C}_{i}+\mathscr{T}_{i},
\end{equation}
where the first term in the right-hand side represents the circulability
of the disease around a given node and the second one represents the
transmissibility of the disease from the given node to any other in
the network. The circulability is very important because it accounts
for the ways the disease has to become endemic. For instance, a large
circulability for a node $i$ implies that the disease can infect
its nearest neighbors and will keep coming back to $i$ over and over
again in a circular way. We start now by proving some results about
these risk-dependent centralities. The following theorem is a special
case of results found, for instance, in \cite{BK15}.
\begin{theorem}
	\label{thm1} The node ranking given by the risk dependent centralities ${\mathscr{R}}_i(\zeta)$, with $i=1,\ldots ,n$,
	reduces to the ranking given by the degree $k_i$ in the limit as the risk $\zeta \to 0$, and to the ranking given
	by eigenvector centrality as $\zeta \to \infty$.
\end{theorem}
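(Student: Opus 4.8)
The plan is to treat the two limits separately by expanding $\mathscr{R}_i(\zeta)=\left(e^{\zeta A}\vec{1}\right)_i$: a Taylor expansion in $\zeta$ for the small-risk regime, and the spectral decomposition of $A$ for the large-risk regime. Throughout I read the claim about ``rankings'' as a statement about pairwise comparisons, namely that for any two nodes with distinct degree (resp. distinct principal-eigenvector entry) the sign of $\mathscr{R}_i(\zeta)-\mathscr{R}_j(\zeta)$ agrees with the one prescribed by degree (resp. by eigenvector centrality) once $\zeta$ is small enough (resp. large enough).

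For $\zeta\to 0$ I would use the series already displayed for $\mathscr{R}_i$, namely
\begin{equation}
\mathscr{R}_i(\zeta)=\left[\left(I+\zeta A+\tfrac{\zeta^{2}}{2}A^{2}+\cdots\right)\vec{1}\right]_i=1+\zeta\,(A\vec{1})_i+O(\zeta^{2}).
\end{equation}
Since $(A\vec{1})_i=\sum_{j}A_{ij}=k_i$, this yields $\mathscr{R}_i(\zeta)=1+\zeta k_i+O(\zeta^{2})$ and hence
\begin{equation}
\mathscr{R}_i(\zeta)-\mathscr{R}_j(\zeta)=\zeta\,(k_i-k_j)+O(\zeta^{2}).
\end{equation}
Whenever $k_i\neq k_j$ the linear term dominates for small $\zeta$, so the order of the $\mathscr{R}_i$ coincides with the order of the degrees. (If $k_i=k_j$ one passes to the next nonvanishing term, governed by $(A^{2}\vec{1})_i$, i.e. the number of walks of length two emanating from the node; I would record this only as the natural tie-breaking refinement.)

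For $\zeta\to\infty$ I would substitute the spectral decomposition $A=\sum_{j=1}^{n}\lambda_j\vec{\psi}_j\vec{\psi}_j^{T}$ to get
\begin{equation}
\mathscr{R}_i(\zeta)=\sum_{j=1}^{n}e^{\zeta\lambda_j}\,s_j\,\psi_{j,i},\qquad s_j:=\vec{\psi}_j^{T}\vec{1}.
\end{equation}
Factoring out the dominant exponential $e^{\zeta\lambda_1}$ gives
\begin{equation}
\mathscr{R}_i(\zeta)=e^{\zeta\lambda_1}\left(s_1\psi_{1,i}+\sum_{j\geq 2}e^{-\zeta(\lambda_1-\lambda_j)}\,s_j\psi_{j,i}\right),
\end{equation}
and, since $\lambda_1-\lambda_j>0$ for every $j\geq 2$, the sum tends to $0$ as $\zeta\to\infty$. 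Thus $\mathscr{R}_i(\zeta)\sim e^{\zeta\lambda_1}s_1\,\psi_{1,i}$, the common factor $e^{\zeta\lambda_1}s_1$ cancels in any pairwise comparison, and $\mathscr{R}_i(\zeta)/\mathscr{R}_j(\zeta)\to\psi_{1,i}/\psi_{1,j}$. The ranking therefore converges to the order induced by the entries of $\vec{\psi}_1$, which is exactly eigenvector centrality.

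The main obstacle, and the point where connectedness is used, is justifying the large-$\zeta$ step through the Perron--Frobenius theorem. For a connected (hence irreducible) non-empty graph the principal eigenvalue $\lambda_1$ is simple and strictly dominant, $\lambda_1>\lambda_2$, which is precisely what forces the subdominant terms to decay; moreover $\vec{\psi}_1$ may be chosen entrywise positive, so that $s_1=\vec{\psi}_1^{T}\vec{1}>0$ and each $\psi_{1,i}>0$, legitimizing both the factorization and the division by $\psi_{1,j}$. Without irreducibility $\lambda_1$ could be repeated and the limit would fail to single out a unique eigenvector, so this hypothesis is essential rather than cosmetic; I would state it explicitly at the start of the argument.
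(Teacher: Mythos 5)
Your proposal is correct and follows essentially the same route as the paper: the Taylor expansion $\mathscr{R}_i(\zeta)=1+\zeta k_i+O(\zeta^2)$ for the small-$\zeta$ limit (the paper normalizes via $(\mathscr{R}_i-1)/\zeta$, which is equivalent to your pairwise-difference argument) and the spectral decomposition with the factor $e^{\zeta\lambda_1}(\vec{\psi}_1^{T}\vec{1})$ pulled out, together with the Perron--Frobenius gap $\lambda_1>\lambda_2$, for the large-$\zeta$ limit. Your explicit remarks on the positivity of $\psi_{1,i}$ and on where connectedness is used are sound and consistent with the paper's argument.
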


\begin{proof}
	We begin by observing that the ranking of nodes, in terms of their
	risk-dependent centrality, is unaffected if all the centralities ${\mathscr{R}}_{i}$
	are shifted and rescaled by the same amount. That is, the same ranking
	is obtained using either ${\mathscr{R}}_{i}$ or the equivalent measure
\begin{displaymath}	
	\hat{\mathscr{R}}_{i}=\frac{{\mathscr{R}}_{i}-1}{\zeta},
\end{displaymath}
	where $\zeta>0$. Now, we have 
	\begin{equation}
     \hat{\mathscr{R}}_{i}=\left[\left(A+\frac{\zeta}{2!}A^{2}+\cdots\right){\vec 1}\right]_{i}=k_{i}+\frac{\zeta}{2!}(A^{2}{\vec 1})_{i}+O(\zeta^{2}).\label{eq_Thm1_1}
	\end{equation}
	Hence, in the limit of $\zeta\to0$, the ranking given by ${\mathscr{R}}_{i}$
	is identical to degree ranking.
	
	To study the limit for $\zeta$ large we write 
	\begin{equation}
		{\mathscr{R}}_{i}=\left[e^{\zeta A}{\vec 1}\right]_{i}=\sum_{k=1}^{n}e^{\zeta\lambda_{k}}(\psi_{k}^{T}{\vec 1})\psi_{k,i}=e^{\zeta\lambda_{1}}(\psi_{1}^{T}{\vec 1})\psi_{1,i}+\sum_{k=2}^{n}e^{\zeta\lambda_{k}}(\psi_{k}^{T}{\vec 1})\psi_{k,i}.\label{eq_Thm1_2}
	\end{equation}
	We note again that for ranking purposes we can use the equivalent
	measure obtained by dividing all risk-dependent centralities by the
	same quantity, $e^{\zeta\lambda_{1}}(\psi_{1}^{T}{\vec 1})$, which
	is strictly positive. That is, we can use 
	\begin{equation}
		\tilde{\mathscr{R}}_{i}=\psi_{1,i}+\frac{1}{\psi_{1}^{T}{\vec 1}}\sum_{k=2}^{n}e^{\zeta(\lambda_{k}-\lambda_{1})}(\psi_{k}^{T}{\vec 1})\psi_{k,i}.\label{eq_Thm1_3}
	\end{equation}
	Since the network is connected, the Perron--Frobenius Theorem insures
	that $\lambda_{1}>\lambda_{2}\ge\cdots\ge\lambda_{n}$. Hence, each
	term $e^{\zeta(\lambda_{k}-\lambda_{1})}$ for $k=2,\ldots,n$ vanishes
	in the limit as $\zeta\to\infty$, and we see from \cref{eq_Thm1_3}
	that the risk-dependent centrality measure gives the same ranking
	as eigenvector centrality for $\zeta$ large.
\end{proof}
It is interesting to observe that the risk-dependent centrality of
every node also depends on the (strictly positive) quantity 
\[
\psi_{1}^{T}{\vec 1}=\sum_{j=1}^{n}\psi_{1,j},
\]
see Equation \cref{eq_Thm1_2}. The larger this quantity is, the higher is
the risk-dependent centrality of each node. Assuming that the dominant
eigenvector is normalized so as to have Euclidean norm equal to 1,
it is well known that this quantity is always between 1 and $\sqrt{n}$.
The value 1 is never attained for a connected graph. It can only be
approached in the limit as all the eigenvector centrality is concentrated
on one node, say node $i$, where it takes values arbitrarily close
to 1, with the values $\psi_{1,j}$ for all $j\ne i$ taking arbitrarily
small values. An example of this would be the star graph\footnote{We recall that the star graph $S_n$ consists of $n-1$ nodes $v_1 , \ldots , v_{n-1}$, each attached to a central node $v_n$ by an edge.} $S_{n}$ for $n\to\infty$. The maximum value is attained in the case where
all nodes have the same eigenvector centrality: $\psi_{1,1}=\psi_{1,2}=\cdots=\psi_{1,n}$
(i.e., in the case of regular graphs).

Let us return to the decomposition ${\mathscr{R}}_{i}={\mathscr{C}}_{i}+{\mathscr{T}}_{i}$
of the risk-dependent centrality of a node into its two components,
circulability and transmissibility. Similar considerations apply to
these quantities. We summarize them in the following result.
\begin{theorem}
	\label{thm2} The node rankings given by the degree $k_{i}$ and the eigenvector centrality
	are obtained as limiting cases of the risk-dependent circulability ${\mathscr{C}}_{i}(\zeta)$ as the external level of risk $\zeta$ decreases to zero or increases to infinity, 
	respectively. The same is true for the risk dependent transmissibility ${\mathscr{T}}_{i}(\zeta)$.
\end{theorem}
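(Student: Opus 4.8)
The plan is to mimic the structure of the proof of \cref{thm1}: I treat the circulability $\mathscr{C}_i$ and the transmissibility $\mathscr{T}_i$ separately, and for each I analyze the two regimes $\zeta\to 0$ and $\zeta\to\infty$ by means of a Taylor expansion in $\zeta$ and a spectral expansion of $e^{\zeta A}$, respectively. Throughout I use, exactly as in \cref{thm1}, that the ranking is invariant under a common shift and a common positive rescaling of all the quantities being compared.

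First I would dispose of the circulability. For the small-risk limit I expand
\[
\mathscr{C}_i = (e^{\zeta A})_{ii} = \sum_{k\ge 0}\frac{\zeta^k}{k!}(A^k)_{ii} = 1 + \frac{\zeta^2}{2}(A^2)_{ii} + O(\zeta^3),
\]
using that $(A^0)_{ii}=1$ and $(A)_{ii}=0$ (the graph is simple, hence loop-free), together with the standard identity $(A^2)_{ii}=k_i$. Shifting by $1$ and rescaling by $2/\zeta^2$ leaves $k_i + O(\zeta)$, so the ordering coincides with the degree ranking as $\zeta\to 0$. For the large-risk limit I use $\mathscr{C}_i = \sum_{k=1}^n e^{\zeta\lambda_k}\psi_{k,i}^2$, factor out the strictly positive $e^{\zeta\lambda_1}$, and invoke Perron--Frobenius ($\lambda_1>\lambda_2\ge\cdots$) to make the subdominant exponentials vanish, leaving $\psi_{1,i}^2 + o(1)$. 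Since the Perron eigenvector has strictly positive entries, ranking by $\psi_{1,i}^2$ is the same as ranking by $\psi_{1,i}$, i.e.\ eigenvector centrality.

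The transmissibility in the limit $\zeta\to 0$ is equally direct: from $\mathscr{T}_i = \sum_{j\ne i}(e^{\zeta A})_{ij}$ the constant term vanishes (off-diagonal entries of $I$ are zero) and the linear term equals $\zeta\sum_{j\ne i}A_{ij}=\zeta k_i$, so dividing by $\zeta$ recovers the degree ranking. The delicate case is the large-risk limit. Writing $\mathscr{T}_i = \mathscr{R}_i - \mathscr{C}_i$ and inserting the leading spectral terms of each (the term for $\mathscr{R}_i$ is read off from \cref{eq_Thm1_2}) gives
\[
\mathscr{T}_i \sim e^{\zeta\lambda_1}\,\psi_{1,i}\bigl[(\psi_1^{T}\vec 1) - \psi_{1,i}\bigr], \qquad \zeta\to\infty,
\]
so after dividing by $e^{\zeta\lambda_1}$ the ranking is governed by $g(\psi_{1,i}):=\psi_{1,i}(S-\psi_{1,i})$, where $S:=\psi_1^{T}\vec 1>0$, rather than by $\psi_{1,i}$ itself.

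The main obstacle is therefore to show that $g$ orders the nodes in the same way as $\psi_{1,i}$. Since $g(x)=x(S-x)$ is strictly increasing only on $[0,S/2]$, I must verify that every Perron component satisfies $\psi_{1,i}\le S/2$. I would prove this from the eigenvector equation: if $i^*$ attains $\psi_{\max}=\max_i\psi_{1,i}$, then
\[
\lambda_1\psi_{1,i^*} = \sum_{j\in N(i^*)}\psi_{1,j} \le \sum_{j\ne i^*}\psi_{1,j} = S-\psi_{1,i^*},
\]
whence $\psi_{\max}\le S/(\lambda_1+1)\le S/2$, using $\lambda_1\ge 1$ for any connected graph with at least one edge. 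Consequently all components lie in the increasing branch of $g$, so $g$ preserves the order of the $\psi_{1,i}$ and the transmissibility ranking converges to eigenvector centrality. I expect this monotonicity-plus-Perron-bound step to be the only genuinely nontrivial part; the remaining cases are routine expansions entirely in the spirit of \cref{thm1}.
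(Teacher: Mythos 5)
Your proposal is correct, and on the one genuinely delicate step it is actually more careful than the paper. For the circulability (both limits) and for the small-$\zeta$ transmissibility your expansions coincide with the paper's argument (the paper simply defers the circulability to \cite{BK15} and spells out the $\zeta\to 0$ transmissibility exactly as you do). The difference is in the large-$\zeta$ transmissibility. The paper writes ${\mathscr{T}}_{i}=e^{\zeta\lambda_{1}}\psi_{1,i}\sum_{j\ne i}\psi_{1,j}+\cdots$ and then divides by ``the positive constant'' $e^{\zeta\lambda_{1}}\sum_{j\ne i}\psi_{1,j}$ to read off $\psi_{1,i}$ directly; but that divisor depends on $i$, so the rescaling is not uniform across nodes and does not, by itself, justify that the ranking is preserved. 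You instead rescale only by the node-independent factor $e^{\zeta\lambda_{1}}$, identify the limit as $g(\psi_{1,i})=\psi_{1,i}\left(S-\psi_{1,i}\right)$ with $S=\psi_{1}^{T}\vec{1}$, and then prove via the eigenvector equation that $\psi_{1,i}\le S/(\lambda_{1}+1)\le S/2$ for every $i$ (using $\lambda_{1}\ge 1$ on a connected graph with an edge), so that all components lie on the strictly increasing branch of $g$ and the order of the $\psi_{1,i}$ is preserved. This monotonicity-plus-Perron-bound lemma is correct (for $x<y\le S/2$ one has $g(y)-g(x)=(y-x)(S-x-y)>0$) and it closes a gap that the paper's phrasing leaves implicit; the paper's route is shorter but rests on an $i$-dependent normalization that strictly speaking requires exactly the justification you supply.
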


\begin{proof}
	The proof for the circulability is a straightforward adaptation of
	that for the total communicability; see also \cite{BK15}.
	
	We give the details for the transmissibility, which has not been analyzed
	before. We have for $i\ne j$ that 
	\[
	\left(e^{\zeta A}\right)_{ij}=\zeta A_{ij}+\frac{\zeta^{2}}{2!}w_{i,j}^{(2)}+O(\zeta^{3}),
	\]
	where $w_{i,j}^{(2)}$ denotes the number of walks of length two between
	node $i$ and node $j$. Dividing by $\zeta>0$, summing over all
	$j\ne i$ and taking the limit as $\zeta\to0$, we find 
	\[
	\zeta^{-1}{\mathscr{T}}_{i}=\zeta^{-1}\sum_{j\ne i}\left(e^{\zeta A}\right)_{ij}\to\sum_{j\ne i}A_{ij}=k_{i},
	\]
	where we have used the fact that $A_{ii}=0$, for all $i$. Hence,
	transmissibility is equivalent to node degree in the small $\zeta$
	limit. For the large $\zeta$ limit we write 
	\[
	{\mathscr{T}}_{i}=\sum_{j\ne i}\sum_{k=1}^{n}e^{\zeta\lambda_{k}}\psi_{k,i}\psi_{k,j}=e^{\zeta\lambda_{1}}\psi_{1,i}\sum_{j\ne i}\psi_{1,j}+\sum_{k=2}^{n}e^{\zeta\lambda_{k}}\left[\sum_{j\ne i}\psi_{k,i}\psi_{k,j}\right].
	\]
	Dividing by the positive constant $e^{\zeta\lambda_{1}}\sum_{j\ne i}\psi_{1,j}$
	and taking the limit as $\zeta\to\infty$, the second part of the
	right-hand side vanishes and we obtain again the eigenvector centrality
	$\psi_{1,i}$ of node $i$.
\end{proof}

\begin{remark}

A natural question is how rapidly the degree (for $\zeta \to 0$) and eigenvector 
(for $\zeta \to \infty$) centrality limits are approached if the number of nodes $n$
in the network goes to infinity. From the Taylor expansions (see for example
equation \ref{eq_Thm1_1}) we see that the degree limit is reached more slowly if the row sums of $A^2$ grow as $n\to \infty$. In this case, as $n$ increases $\zeta$ must be taken smaller and smaller before the ranking reduces to the one given by the degree.
On the other hand, if the network grows in such a way that the maximum degree of any node remains uniformly bounded, then the rate of convergence is independent of the number $n$ of nodes, at least asymptotically. 

The rate of convergence to the eigenvector centrality ranking is largely determined
by the spectral gap, $\lambda_1 - \lambda_2$. If the gap remains bounded below by a
positive constant as $n\to \infty$, the value of $\zeta$ necessary to reach the eigenvector centrality limit is easily seen to grow at most like $O(\ln n)$, and in practice the rate of convergence is scarcely affected by the size of the network. If, on the other hand, the gap closes as $n\to \infty$, then the rate of convergence to the eigenvector centrality will become arbitrarily slow. The faster the gap closes for $n\to \infty$, the more rapidly the rate of convergence deteriorates.

\end{remark}

We conclude this section with some comments on the measures
${\mathscr{R}}_{i}$, ${\mathscr{C}}_{i}$ and ${\mathscr{T}}_{i}$.
While they all display the same limiting behavior and provide identical
rankings in the small and large $\zeta$ limits, they provide different
insights on the network structure (and therefore on node risk). For
instance, it is well known that subgraph centrality (which is the
same as circulability, see \cite{Estrada05,Estrada book}) can discriminate
between the nodes of certain regular graphs, that is, graphs in which
all the nodes have the same degree. The same holds for transmissibility.
Total communicability, on the other hand, is unable to discriminate
between the nodes of regular graphs (and neither are degree and eigenvector
centrality, of course). These measures are also different from a computational
viewpoint. One advantage of the risk centrality based on total
communicability is that it only requires the computation of the action
of the matrix exponential $e^{\zeta A}$ on the vector ${\vec 1}$.
The entries of the resulting vector can be computed efficiently without
having to compute any entry of $e^{\zeta A}$, see \cite{BK13}. Modern
Krylov-type iterative methods (like those based on the Lanczos or
Arnoldi process) can handle huge networks (with many millions of nodes)
without any difficulty. In contrast, the computation of the circulability
requires the explicit computation of the diagonal entries of $e^{\zeta A}$
(the node transmissibility is then easily obtained by subtracting
the circulability from the total communicability). Although there
are techniques that can handle fairly large graphs (see \cite{BB10}),
these calculations are much more expensive than those for the total
communicability. This limits the size of the networks that they can
be applied to. However, for most financial networks the computation
of the circulability is still feasible. 

A final consideration regards the values assumed by the external risk parameter $\zeta$. Although, in principle, it can vary between $0$ and infinity, for the purposes of most of the applications that follow, it may be sufficient to vary $\zeta$ between $0$ and $1$. The rationale for using the interval $[0,1]$ relies on the fact that, at $\zeta = 1$, the rankings given by ${\mathscr R}_i$ are already stabilizing around those provided by eigenvector centrality and therefore no more interlacings between rankings are possible. As we will show, we typically observe a single point of interlacement and it usually occurs before reaching the value $\zeta = 1$. Furthermore, this choice is equivalent to fix $t=1$ in the epidemic model solution \ref{solution}, and, already as $\zeta$ approaches $1$, all the probabilities involved in that model become completely negligible or equal to $1$.

\section{Risk-dependent centrality on a random network}
\label{sec:Numerical}
For the analysis of real-world (financial and economic) networks it is necessary to investigate how informative the results obtained are with respect to the real system under analysis. This significance is typically addressed by comparing to those properties obtained from network null models. As such null models we consider here Erd\H{o}s-Rényi (ER) random networks $\Gamma_{ER}\left(n,p\right)$ with $n$ nodes
and wiring probability $p$ (see \cite{ER60,ER59}), for which, in this section, we provide
a series of analytical results.
We start by generating a family of simulated ER graphs and discarding simulations for which the obtained graph is not connected.

In particular, we aim at testing how the external risk $\zeta$ and the probability $p$, and hence the graph density. $\delta$, affect the results. For this purpose, we generate $1000$ graphs $\Gamma_{ER}(n;p)$ with $n=100$ at different values of $p$. For each graph, we compute the main measures for alternative values of $\zeta$.
Firstly, we report in \cref{fig:fig1} the behavior of risk-dependent centrality $\mathscr{R}_{i}$, circulability $\mathscr{C}_{i}$ and transmissibility $\mathscr{T}_{i}$ as functions of the density, assuming a fixed high level of external risk, $\zeta=1$. Since the values of $\mathscr{R}_{i}$ are significantly
increasing when the density of the graph increases, we display, in
\cref{fig:fig1}(a), the distributions of the ratio between
the risk-dependent centrality of each node $\mathscr{R}_{i}$ and
its average value ${\mathbb E}(\mathscr{R}_{i})$.

\noindent As might be expected, the centralities of nodes tend to be similar when $\delta\to 1$ and we move towards the complete graph, i.e. we observe a lower variability of the distribution of the ratios. Similar behaviors are also observed for $\mathscr{C}_{i}$
and $\mathscr{T}_{i}$, with an higher volatility for the circulability
(see \cref{fig:fig1}(b) and \cref{fig:fig1}(c)).

In \cref{fig:fig1}(d), we show the distributions of the incidence
of the circulability $\mathscr{C}_{i}$ on the risk-dependent centrality
$\mathscr{R}_{i}$, that is the distribution of the ratio  $\frac{\mathscr{C}_{i}}{\mathscr{R}_{i}}$ again as a function of the density $\delta$. When $\zeta=1$, for all the graphs analyzed, the average value is around $\frac{1}{n}$, implying that the transmissibility has an average incidence of $\frac{n-1}{n}$ on $\mathscr{R}_{i}$.
It is noteworthy to look at the variability of the distributions. When the density is extremely low, i.e. we refer to a very sparse graph, the heterogeneity of the nodes degree affects the ratio $\frac{\mathscr{C}_{i}}{\mathscr{R}_{i}}$. For instance,
when $\delta=0.1$, the circulability of a node ranges
approximately from 0.15\% to 2.5\% of the risk-dependent centrality
for the same node. 
A lower variability is observed for higher densities. For instance, for $\delta=0.5$, the ratio $\frac{\mathscr{C}_{i}}{\mathscr{R}_{i}}$ varies between 0.6\% and 1.3\%. For $\delta=0.95$, we observe a ratio between 0.9\% and 1.15\%.

\begin{figure}[H]
	\subfloat[]{\includegraphics[scale=0.07]{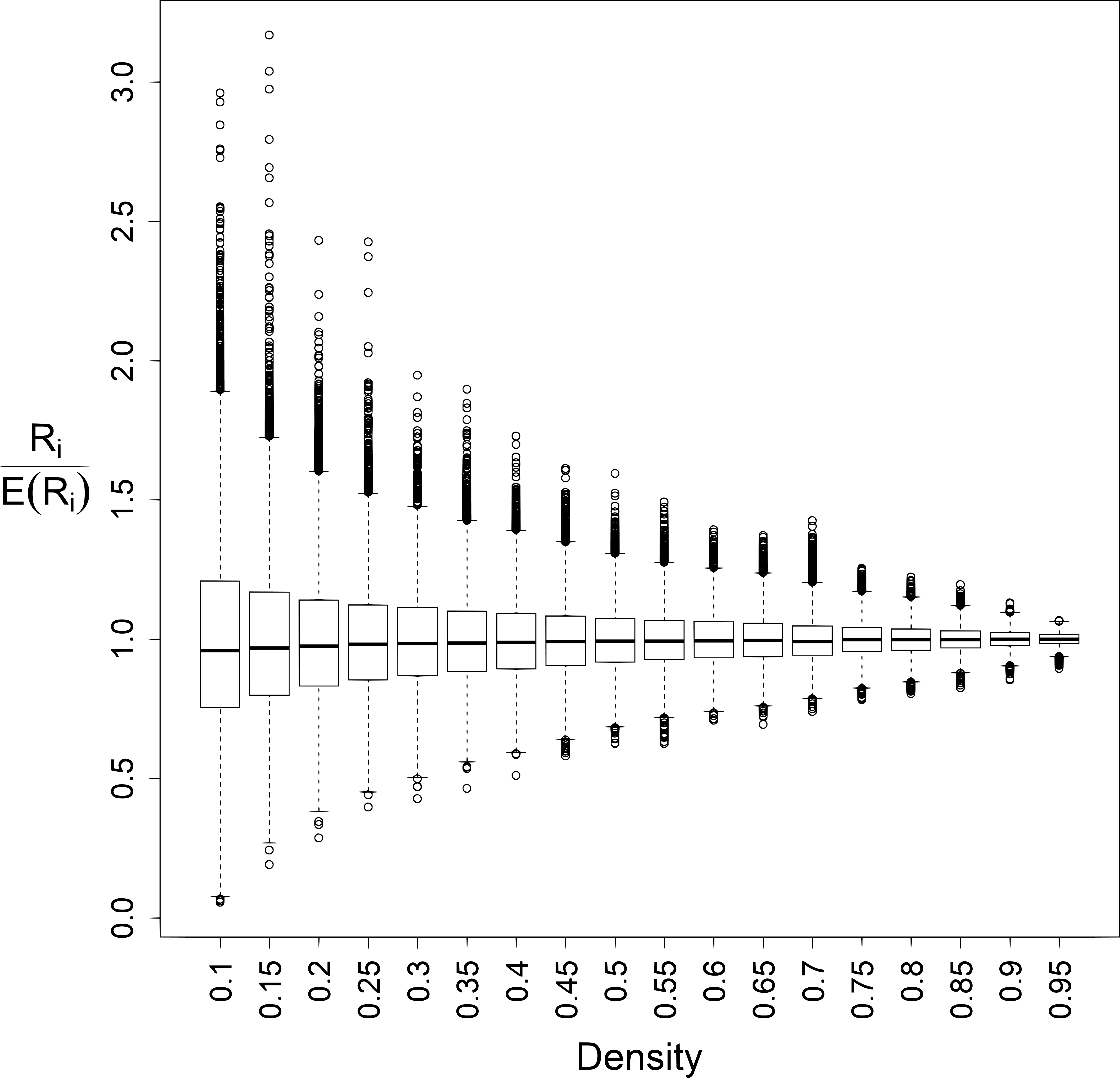}}
	\subfloat[]{\includegraphics[scale=0.07]{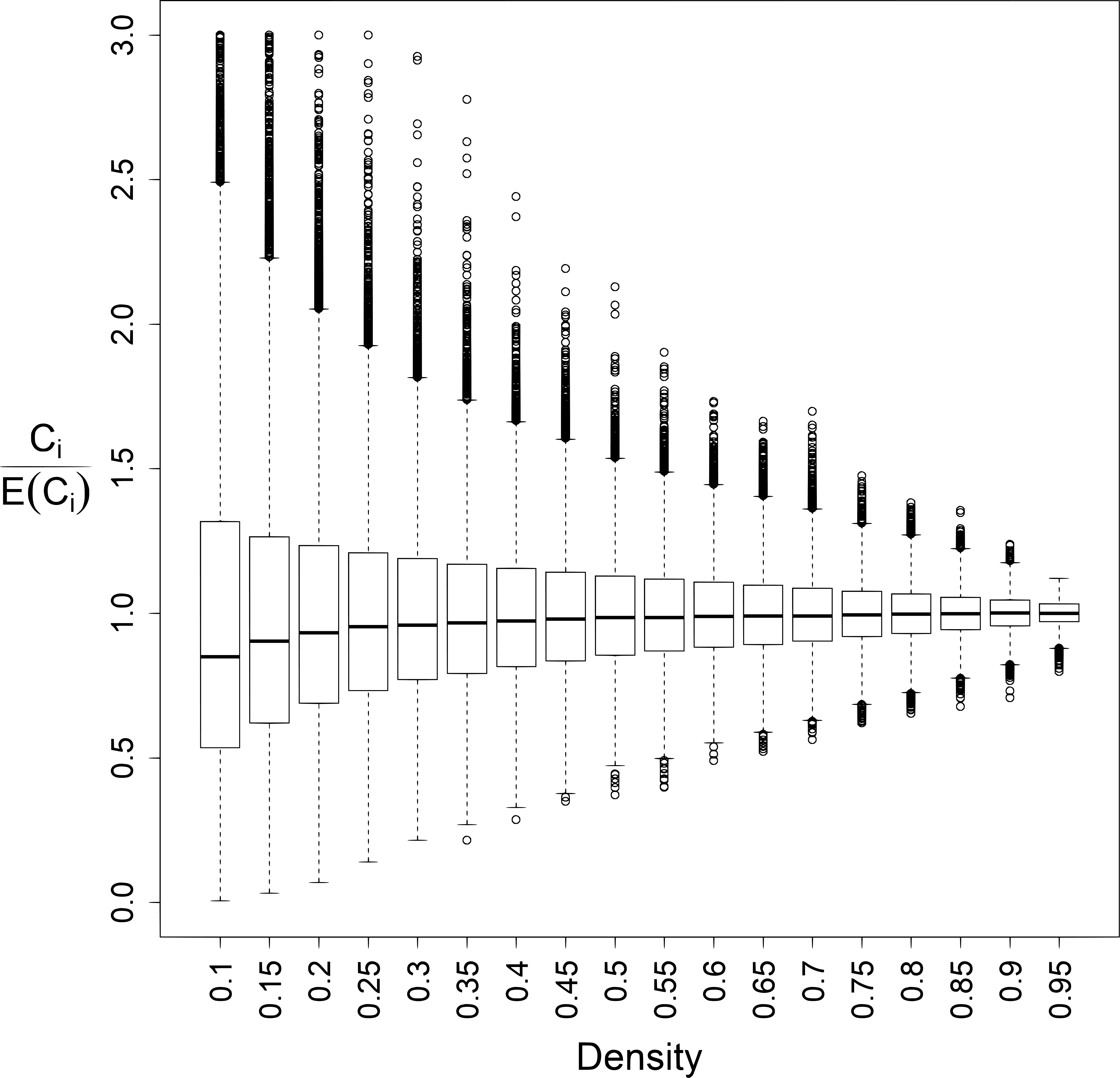}}
	
	\subfloat[]{\includegraphics[scale=0.07]{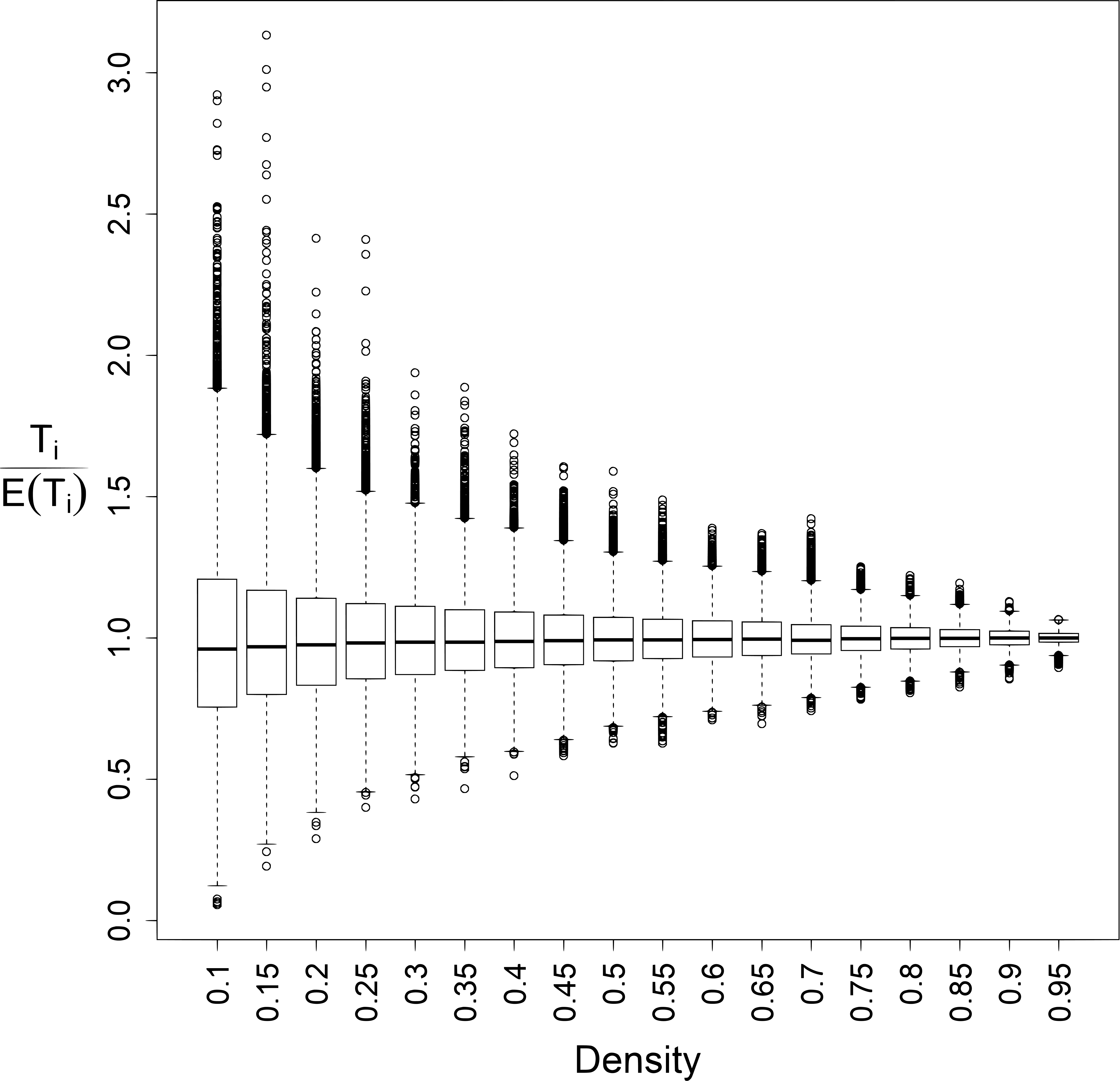}}
    \subfloat[]{\includegraphics[scale=0.07]{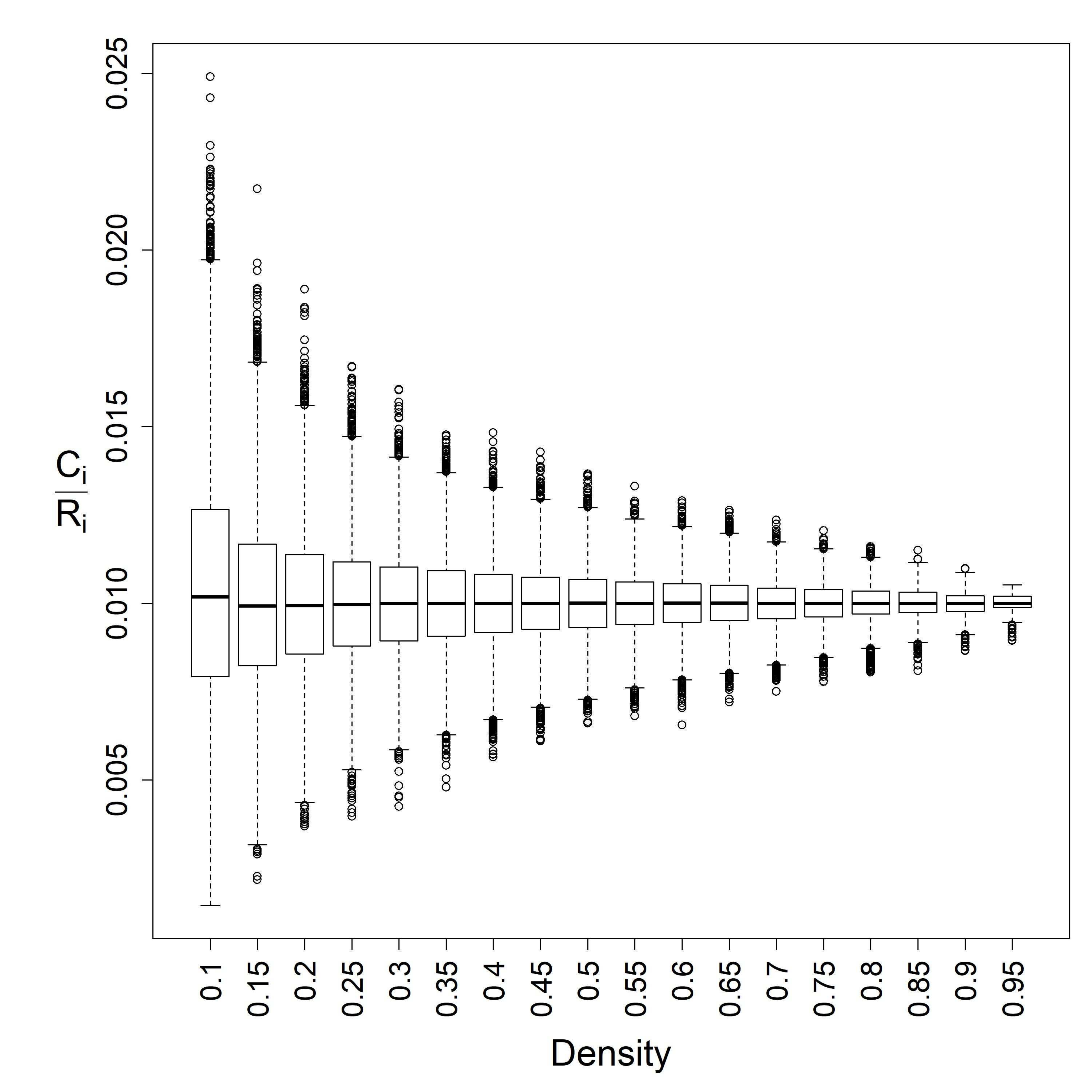}}

\caption{Figure a) displays the distributions of the ratios between the risk-dependent centrality of each node $\mathscr{R}_{i}$ and the average risk-dependent
centrality ${\mathbb E}\left({\mathscr{R}}_{i}\right)$, computed assuming $\zeta=1$.
Figure b) and c) display the analogous distributions for circulability and transmissibility. Figure d) shows the distributions of the ratios
between the circulability ${\mathscr{C}}_{i}$ and the risk-dependent
centrality of each node ${\mathscr{R}}_{i}$, computed assuming $\zeta=1$.
All Figures are based on 1000 randomly generated ER networks $\Gamma_{ER}(n;p)$ with a density varying between 0.1 and 0.9.}
\label{fig:fig1}
\end{figure}

In \cref{fig:fig2}, we show the corresponding behaviors of risk-dependent centrality $\mathscr{R}_{i}$, circulability $\mathscr{C}_{i}$ and transmissibility $\mathscr{T}_{i}$ as functions of the density, but assuming a fixed low level of external risk, $\zeta=0.1$. Again all Figures are based on 1000 randomly generated ER networks $\Gamma_{ER}(n;p)$ with $\delta$ varying between 0.1 and 0.9.

Focusing on the risk-dependent centrality ratio $\frac{\mathscr{R}_{i}}{{\mathbb E}\left({\mathscr{R}}_{i}\right)}$, we observe that the standard deviation between nodes is lower in the low-risk framework ($\zeta=0.1$) than in the high-risk one ($\zeta=1$).
For instance, when the density is equal to $0.1$, the standard deviation
of the ratio moves from $0.20$ for $\zeta=0.1$ to $0.37$ for $\zeta=1$. At a phenomenological level, this behavior can be justified by the fact that differences between nodes tend to be enhanced when the network is highly risk-exposed. \\
Furthermore, the pattern of $\frac{\mathscr{C}_{i}}{{\mathbb E}\left(\mathscr{C}_{i}\right)}$ for $\zeta=0.1$ is very peculiar. In this case, when the network is very sparse, nodes show a similar circulability, while
higher differences are observed when the density is around 0.5. 



\begin{figure}[H]
	\subfloat[]{\includegraphics[scale=0.07]{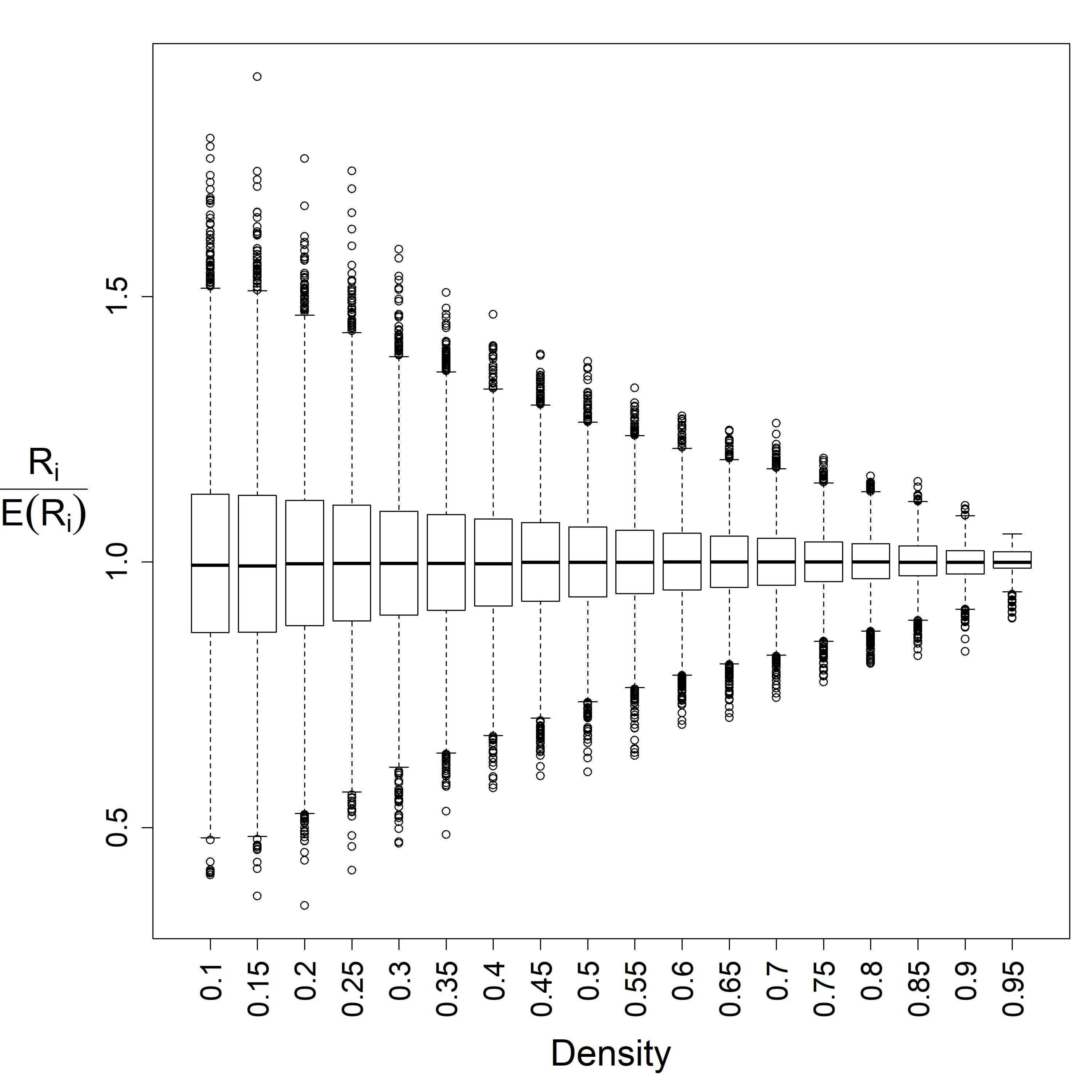}
		
	}\subfloat[]{\includegraphics[scale=0.07]{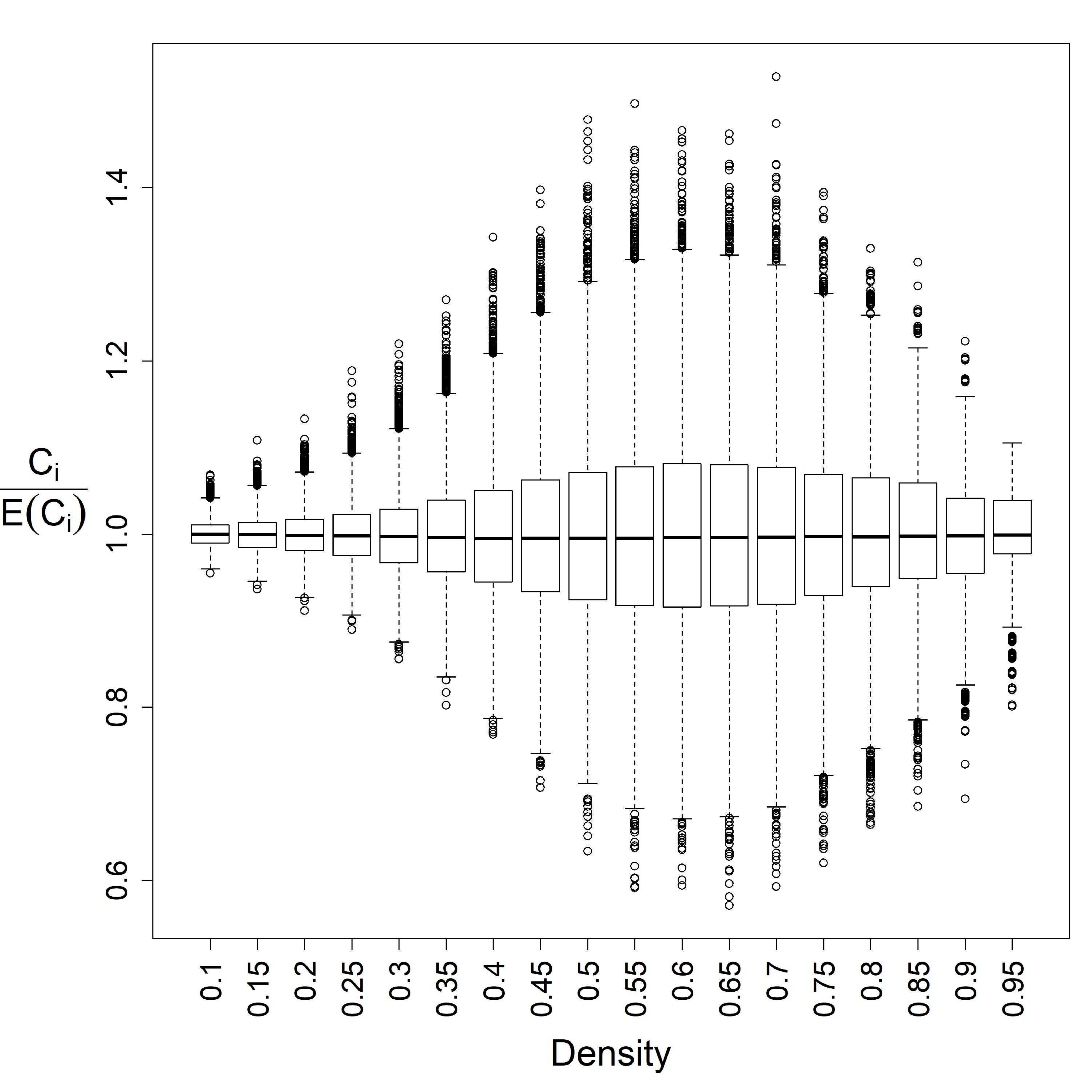}
		
	}
	
	\subfloat[]{\includegraphics[scale=0.07]{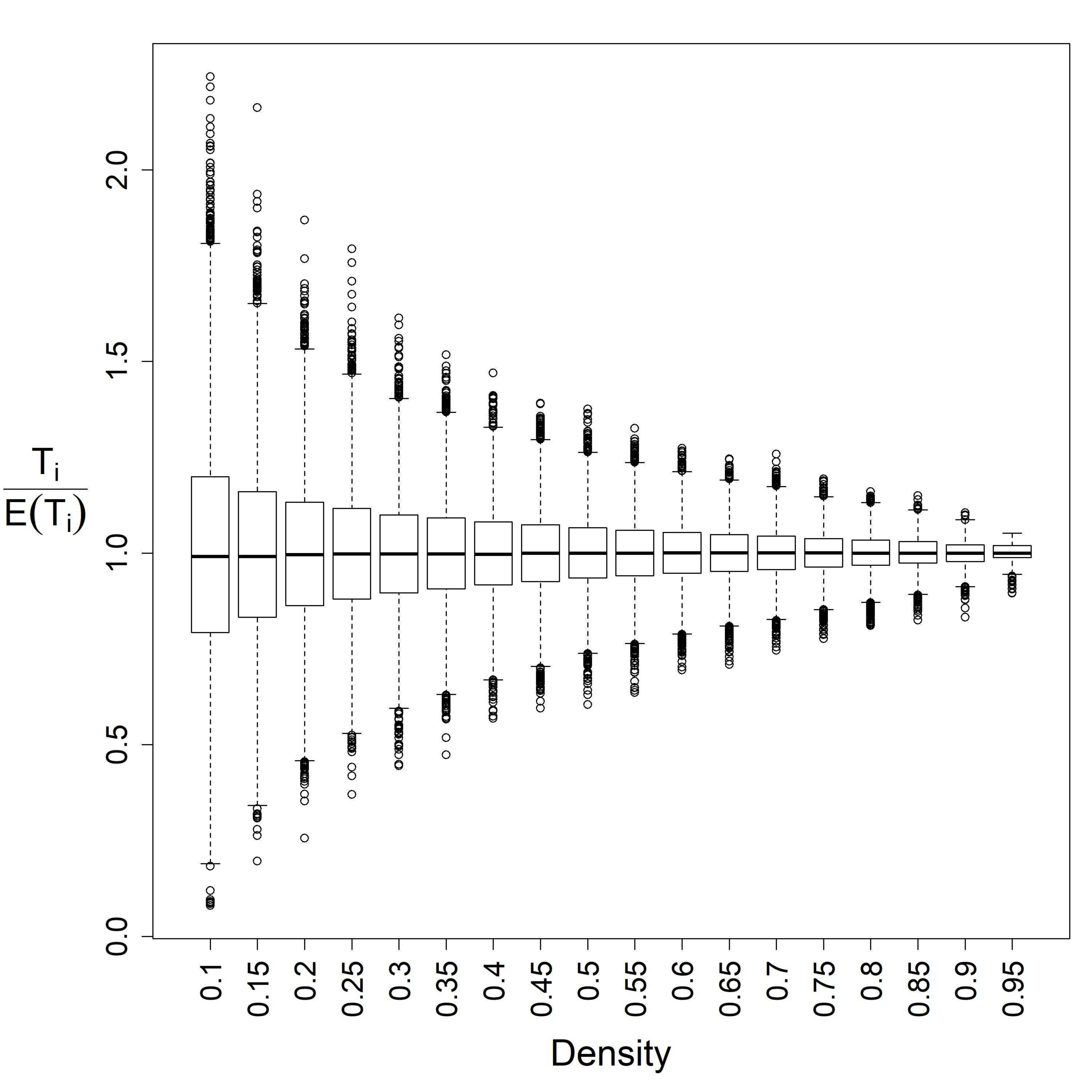}
		
	}\subfloat[]{\includegraphics[scale=0.07]{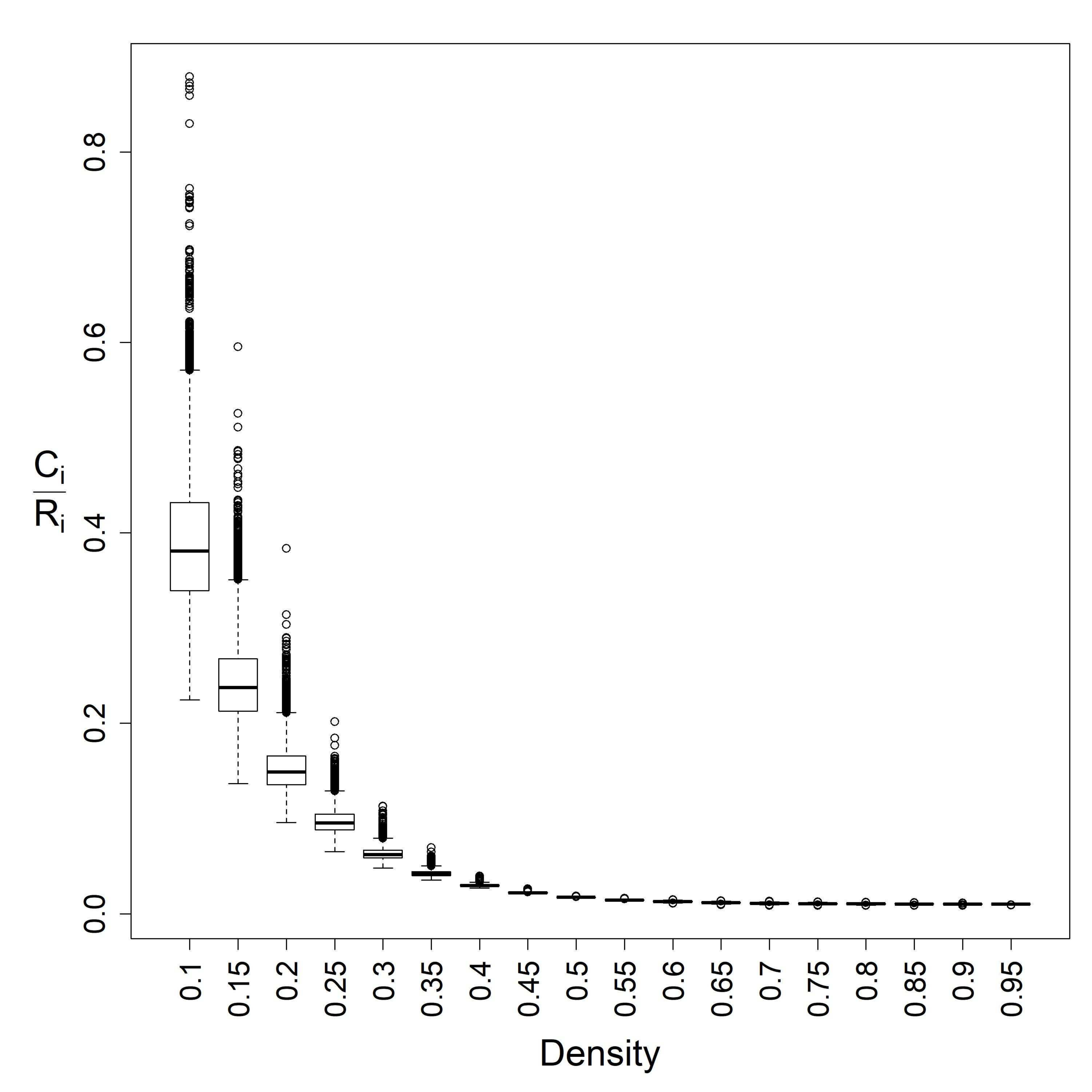}
		
	}
	
	\caption{Figures a), b), c) and d) display the distributions of ratios $\frac{{\mathscr{R}}_{i}}{{\mathbb E}\left({\mathscr{R}}_{i}\right)}$,
		$\frac{{\mathscr{C}}_{i}}{{\mathbb E}\left({\mathscr{C}}_{i}\right)}$, $\frac{{\mathscr{T}}_{i}}{{\mathbb E}\left({\mathscr{T}}_{i}\right)}$ and $\frac{{\mathscr{C}}_{i}}{{\mathscr{R}}_{i}}$ respectively, computed in case of a low external risk ($\zeta=0.1$). All Figures are based on 1000 randomly generated ER networks $\Gamma_{ER}(n;p)$ with a density varying between 0.1 and 0.9.}
	\label{fig:fig2}
\end{figure}

Lastly, in \cref{fig:fig3}, we focus on the ratio $\frac{{\mathscr{C}}_{i}}{{\mathscr{R}}_{i}}$ and we report the incidence of the circulability on the risk-dependent centrality as a function of the external risk $\zeta$. In case of sparse networks (\cref{fig:fig3}(a)), when the external risk is low, we have
that the infection remains in larger part circulating in a loopy way around the nodes, while only a lower proportion of risk tends to be transmitted to other nodes. 
This is due to the fact that, for $A$ sparse and $\zeta$ small, the matrix
$e^{\zeta A} = I + \zeta A + \frac{\zeta^2}{2} A^2 + O(\zeta ^3)$
is strongly diagonally dominant. When the external risk is high, as already observed, we have an average incidence of the circulability $\mathscr{C}_{i}$ on the risk-dependent centrality around $\frac{1}{n}$. On the contrary, when a very dense network is considered, the ratio $\frac{\mathscr{C}_{i}}{\mathscr{R}_{i}}$ is very little
affected by the external risk. In this case, both $\mathscr{C}_{i}$ and $\mathscr{R}_{i}$ increase on average at the same rate when $\zeta$ increases. However, the decreasing behavior of $\frac{\mathscr{C}_{i}}{\mathscr{R}_{i}}$ is noticeable for very low values of $\zeta$.

\begin{figure}[H]
	\subfloat[]{\includegraphics[width=0.45\textwidth]{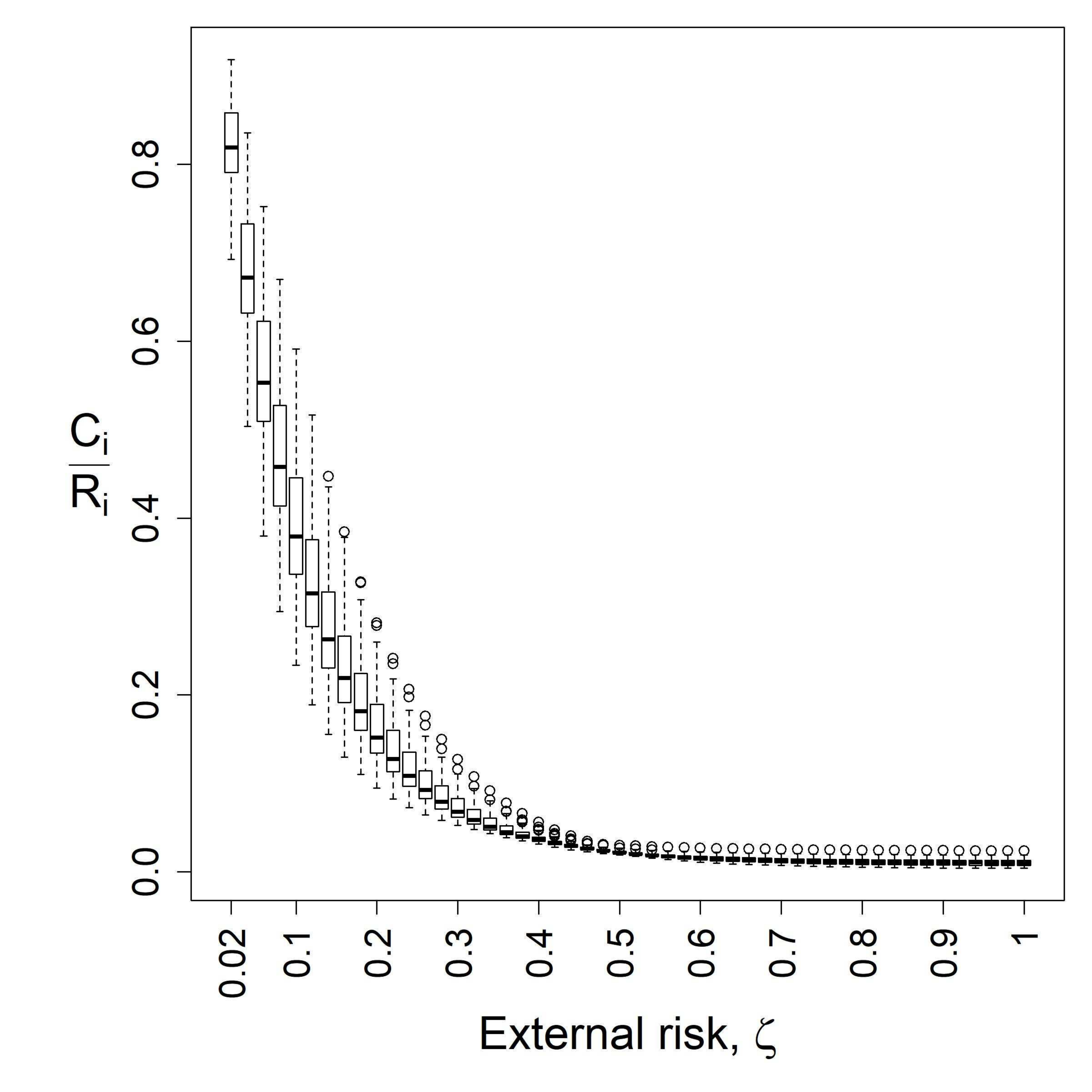}
		
	}\subfloat[]{\includegraphics[width=0.45\textwidth]{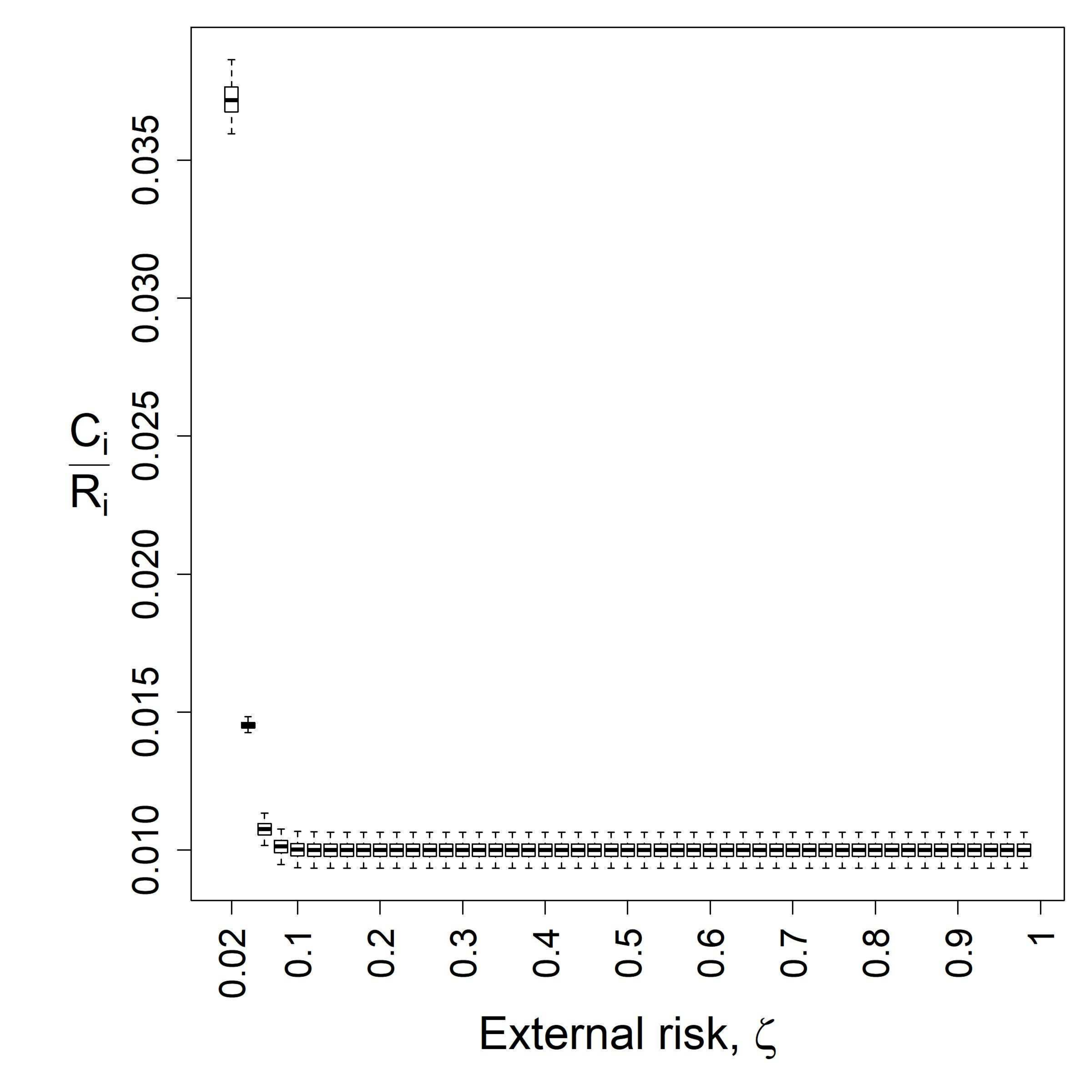}
		
	}
	
	\caption{Figures report the distribution of the ratios between the circulability
		${\mathscr{C}}_{i}$ and the risk-dependent centrality of each node
		${\mathscr{R}}_{i}$, computed for different $\zeta$ and by using
		generated ER graphs with a density equal to 0.1 (Figure a) and 0.9
		(Figure b), respectively. Both Figures are based on 1000 randomly generated ER networks $\Gamma_{ER}(n;p)$.}
	\label{fig:fig3}
\end{figure}

In what follows we provide an exhaustive proof of the behaviors observed so far.
Let us start with the pattern of the ratio $\frac{{\mathscr{C}}_{i}}{\mathscr{R}_{i}}$ at high density (see Figures \ref{fig:fig1}(d), \ref{fig:fig2}(d) and \ref{fig:fig3}(b)).

The asymptotic behavior of this ratio can be explained as a consequence of Theorem \ref{thm3} in the Appendix, where we derive the close expressions of the three risk-dependent centrality measures for a complete graph. In fact, as $\delta  \to 1$, the ER network approaches a complete network and, for $\zeta$ increasing, the ratio $\frac{{\mathscr{C}}_{i}}{\mathscr{R}_{i}}$ approaches $1/n$, as shown in \ref{ratio}.

Nonetheless, this result can be generalized. In fact, for an ER network which is dense enough, the following property holds for any $\zeta$.

\begin{theorem}\label{thm4}
	Let $\Gamma_{ER}(n;p)$ be an Erd\H{o}s-Rényi random graph with $n$
	nodes and probability $p$. If the edge density of the graph
	is $\delta>\left(\log n\right)^{6}/n$ and $p\left(1-p\right)>\left(\log n\right){}^{4}/n$, then for any node $i$
	
	\begin{equation}
	\underset{n\rightarrow\infty}{\lim}\, \dfrac{n{\mathscr{C}}_{i}}{\mathscr{R}_{i}}=1, 
	\end{equation}
	independently of $\zeta$.
\end{theorem}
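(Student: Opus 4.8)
The plan is to approximate the random adjacency matrix $A$ by its expectation and then argue that the sought ratio is governed entirely by the dominant eigenpair of $A$. Write $\bar A=\mathbb{E}[A]=p\,(J-I)$ with $J=\vec 1\,\vec 1^{T}$, and observe that $\bar A$ has a single large eigenvalue $p(n-1)$ with the uniform eigenvector $\vec 1/\sqrt n$, while all remaining eigenvalues equal $-p$. Using $J^{2}=nJ$ one obtains in closed form
\[
e^{\zeta\bar A}=e^{-\zeta p}\Big(I+\tfrac{e^{\zeta pn}-1}{n}\,J\Big),
\]
so that $(e^{\zeta\bar A})_{ii}=e^{-\zeta p}\big(1+\tfrac{e^{\zeta pn}-1}{n}\big)$ and $(e^{\zeta\bar A}\vec 1)_{i}=e^{-\zeta p}e^{\zeta pn}$. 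Hence for the mean-field matrix
\[
\frac{n\,(e^{\zeta\bar A})_{ii}}{(e^{\zeta\bar A}\vec 1)_{i}}=1+\frac{n-1}{e^{\zeta pn}},
\]
which tends to $1$ because the density hypothesis $\delta>(\log n)^{6}/n$ forces the expected degree $pn$ to exceed $(\log n)^{6}$, so $e^{\zeta pn}$ outgrows every power of $n$ for each fixed $\zeta>0$. This computation both identifies the limit and pinpoints why the stated thresholds appear.

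I would then transfer this to the random matrix $A$ itself by concentration. Writing $E=A-\bar A$, the hypothesis $p(1-p)>(\log n)^{4}/n$ places us in the regime where the standard spectral-norm bound $\|E\|=O(\sqrt{np})$ holds with high probability. Combined with Weyl's inequality this gives $\lambda_{1}(A)=pn\,(1+o(1))$ and $|\lambda_{k}(A)|=O(\sqrt{np})$ for all $k\ge 2$, hence a spectral gap $\lambda_{1}-\lambda_{2}=pn\,(1+o(1))$. In the spectral decomposition
\[
e^{\zeta A}=e^{\zeta\lambda_{1}}\vec\psi_{1}\vec\psi_{1}^{T}+\sum_{k\ge 2}e^{\zeta\lambda_{k}}\vec\psi_{k}\vec\psi_{k}^{T},
\]
the second term has operator norm at most $e^{\zeta\lambda_{2}}$, so that $\mathscr{C}_{i}=e^{\zeta\lambda_{1}}\psi_{1,i}^{2}+r_{i}$ and $\mathscr{R}_{i}=e^{\zeta\lambda_{1}}\psi_{1,i}(\vec\psi_{1}^{T}\vec 1)+s_{i}$ with $|r_{i}|\le e^{\zeta\lambda_{2}}$ and $|s_{i}|\le e^{\zeta\lambda_{2}}\sqrt n$. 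Since $\psi_{1,i}\approx 1/\sqrt n$ and $\vec\psi_{1}^{T}\vec 1\approx\sqrt n$, the relative sizes of $r_{i}$ and $s_{i}$ are both controlled by $n\,e^{\zeta(\lambda_{2}-\lambda_{1})}=n\,e^{-\zeta pn(1+o(1))}$, which vanishes. Consequently
\[
\frac{n\,\mathscr{C}_{i}}{\mathscr{R}_{i}}=\frac{n\,\psi_{1,i}}{\vec\psi_{1}^{T}\vec 1}\,(1+o(1)).
\]

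It therefore remains to prove that $n\psi_{1,i}/(\vec\psi_{1}^{T}\vec 1)\to 1$, equivalently that $\sqrt n\,\psi_{1,i}\to 1$ for every fixed node $i$, and I expect this to be the main obstacle. The Davis--Kahan theorem delivers only the $\ell_{2}$ estimate $\|\vec\psi_{1}-\vec 1/\sqrt n\|_{2}=O(1/\sqrt{np})$, which suffices to give $\vec\psi_{1}^{T}\vec 1=\sqrt n\,(1+o(1))$ but is too weak to control a single coordinate near the sparse end $p\sim(\log n)^{6}/n$ of the admissible range, where the per-entry $\ell_2$ bound already exceeds $1/\sqrt n$. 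The genuine input needed is an \emph{entrywise} (sup-norm) eigenvector estimate of the form $\max_{i}|\sqrt n\,\psi_{1,i}-1|=O\big(\mathrm{polylog}(n)/\sqrt{np}\big)$, and this is exactly where the precise powers of $\log n$ are consumed: under $pn\gtrsim(\log n)^{6}$ such an error is $O(1/(\log n)^{2})\to 0$. I would secure this bound either by invoking an entrywise eigenvector perturbation result for random graphs from the literature, or by bootstrapping the eigenvalue identity $\lambda_{1}\psi_{1,i}=\sum_{j\sim i}\psi_{1,j}$ against the uniform degree concentration $k_{i}=pn\,(1+o(1))$ that also follows from the density hypotheses. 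Once $\sqrt n\,\psi_{1,i}\to 1$ is established, substitution into the last display yields $n\mathscr{C}_{i}/\mathscr{R}_{i}\to 1$ independently of $\zeta$, completing the proof.
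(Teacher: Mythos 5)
Your proposal is correct and follows essentially the same route as the paper: reduce to the dominant eigenpair via the Erd\H{o}s--R\'enyi spectral gap (using $\lambda_1 = np(1+o(1))$ and $|\lambda_k| = O(\sqrt{np})$ for $k\ge 2$ under the hypothesis $np(1-p) > (\log n)^4$), and then conclude from $\sqrt{n}\,\psi_{1,i}\to 1$. The entrywise eigenvector estimate you correctly flag as the main obstacle --- and rightly note cannot be obtained from the $\ell_2$ Davis--Kahan bound --- is exactly what the paper imports from Erd\H{o}s, Knowles, Yau and Yin (2013), namely $\bigl\Vert \vec{\psi}_1 - \tfrac{1}{\sqrt{n}}\vec{1}\bigr\Vert_\infty \le \tfrac{C}{\sqrt{n}}\tfrac{\log n}{\log(np)}\sqrt{\tfrac{\log n}{np}}$ for $np > (\log n)^6$, which is precisely the sup-norm delocalization bound with the polylog powers you anticipated.
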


\begin{proof}
	Let us consider as usual that $\lambda_{1}>\lambda{}_{2}\geq\cdots\geq\lambda_{n}$
	in a connected graph. It is known that in an ER graph the spectral gap $\left(\lambda_{1}-\lambda_{2}\right)\gg0$.
	Indeed, as proved in \cite{Janson}, $\lim_{n\rightarrow\infty}\frac{\lambda_{1}}{np}=1$,
	while $\lambda_{2}$ and $\lambda_{n}$ grow more slowly as $\lim_{n\rightarrow\infty}\frac{\lambda_{2}}{n^{\varepsilon}}=0$
	and $\lim_{n\rightarrow\infty}\frac{\lambda_{n}}{n^{\varepsilon}}=0$ for
	every $\varepsilon>0.5$, respectively.
	
	Then, if $np\left(1-p\right)>\left(\log n\right){}^{4}$,
	all but the largest eigenvalue lie with high probability in the interval
	$\sqrt{np\left(1-p\right)}\left[-2+o\left(1\right),+2+o\left(1\right)\right]$
	(see \cite{Vu2005} and \cite{Knowles2017}). Therefore,
	
	\begin{equation}
	\underset{n\rightarrow\infty}{\lim}\, \dfrac{\mathcal{\mathscr{C}}_{i}}{\mathscr{R}_{i}}=\underset{n\rightarrow\infty}{\lim}\, \dfrac{\psi_{1,i}^{2}e^{\zeta\lambda_{1}}+\sum_{k=2}^{n}\psi_{k,i}^{2}e^{\zeta\lambda_{k}}}{\psi_{1,i}\left(\vec{\psi}_{1}^{T}\vec{1}\right)e^{\zeta\lambda_{1}}+\sum_{k=2}^{n}\psi_{k,i}\left(\vec{\psi}_{k}^{T}\vec{1}\right)e^{\zeta\lambda_{k}}}=\dfrac{\psi_{1,i}}{\sum_{j=1}^{n}\psi_{1,j}}.
	\end{equation}
	
	The edge density of an ER graph is $\delta=p.$ In \cite{Erdos2013}, it was proved
	that for $np>\left(\log n\right)^{6}$, there exists a positive
	constants $C$ such that the following inequality holds
	
	\begin{equation}
	\left\Vert \vec{\psi}_{1}-\dfrac{1}{\sqrt{n}}\vec{1}\right\Vert _{\infty}<C\dfrac{1}{\sqrt{n}}\dfrac{\log n}{\log\left(np\right)}\sqrt{\dfrac{\log n}{np}},
	\end{equation}
	which in plain words means that an ER graph of density $\delta>\left(\log n\right)^{6}/n$ is ``almost'' regular when $n\rightarrow\infty$. That is $\lim_{n\to \infty}\sqrt{n}\psi_{1,i}=1$ for every node $i$. Thus, the result
	immediately follows.
\end{proof}

It is worth pointing out that,  when the density of an ER network is very low, the standard deviation of the ratio $\dfrac{{\mathscr{C}}_{i}}{\mathscr{R}_{i}}$
is very large with respect to that of ER networks with large densities (as shown in Figure \ref{fig:fig1}(d)). As we have proved before, the convergence of this ratio to the value $n^{-1}$ takes place only when the density of the graph is relatively large. Let us now analyze what happens when the edge density is very small for large graphs. In this case, we observe a slower decay of the ratio $\dfrac{{\mathscr{C}}_{i}}{\mathscr{R}_{i}}$ as a function of the external risk in the range $[0,1]$ (see Figure \ref{fig:fig3}(a)). This fact can be easily proven as follows. In general, both the numerator and denominator of this ratio can be expressed as infinite series of the type:

\begin{equation*}
{\mathscr{C\left(\zeta\right)}}_{i}=Q\left(\zeta\right)=1+a_{2}\zeta^{2}+\cdots+a_{k}\zeta^{k}+\cdots,
\end{equation*}

\begin{equation*}
{\mathscr{R\left(\zeta\right)}}_{i}=H\left(\zeta\right)  =1+b_{1}\zeta+\left(a_{2}+b_{2}\right)\zeta^{2}+\cdots+\left(a_{k}+b_{k}\right)\zeta^{k}+\cdots\\
=  Q\left(\zeta\right)+L\left(\zeta\right),
\end{equation*}

where $a_{k}$ counts the number of closed walks of length $k$ starting
and ending at node $i$ and $b_{k}$ counts all the open walks of
length $k$ starting at $i$ and ending at any node $j\neq i$. Let
us consider

\begin{equation*}
\begin{split}& \dfrac{d}{d\zeta}\left(\frac{Q\left(\zeta\right)}{Q\left(\zeta\right)+L\left(\zeta\right)}\right) =\frac{L\left(\zeta\right)Q'\left(\zeta\right)-L'\left(\zeta\right)Q\left(\zeta\right)}{\left[Q\left(\zeta\right)+L\left(\zeta\right)\right]^{2}}\\
& =\frac{\left(2a_{2}b_{1}\zeta^{2}+\cdots+2a_{2}b_{k}\zeta^{k+1}+\cdots\right)-\left(b_{1}+2b_{2}\zeta+a_{2}b_{1}\zeta^{2}+\cdots+b_{1}a_{k}\zeta^{k}+\cdots\right)}{\left[Q\left(\zeta\right)+L\left(\zeta\right)\right]^{2}}
\end{split}
\end{equation*}

Then, for certain $\zeta<1$ the numerator of the previous expression
is negative, which means that the ratio $\dfrac{\mathcal{\mathscr{C}}_{i}\left(\zeta\right)}{\mathscr{R}_{i}\left(\zeta\right)}$
is monotonically decreasing with $\zeta$. For instance, let us make
a second order approximation to the polynomials $Q\left(\zeta\right)$
and $H\left(\zeta\right)$. Then, we have

\[
\frac{Q\left(\zeta\right)}{H\left(\zeta\right)}=\frac{1+\frac{1}{2}\zeta^{2}k_{i}}{1+\zeta k_{i}+\frac{1}{2}\zeta^{2}\left(k_{i}+P_{2,i}\right)},
\]
where $P_{2,i}$ is the number of paths of length 2 (wedges) starting
at node $i$. In an ER graph $\mathbb{E}\left(k_{i}\right)=\left(n-1\right)p$
and $\mathbb{E}\left(P_{2,i}\right)=\left(n-1\right)^{2}p^{2}-\left(n-1\right)p$.
Thus,

\[
\frac{Q\left(\zeta\right)}{H\left(\zeta\right)}\approx\frac{1+\frac{1}{2}\zeta^{2}\left(n-1\right)p}{1+\zeta\left(n-1\right)p+\frac{1}{2}\zeta^{2}\left(n-1\right)^{2}p^{2}}=\frac{1+\cfrac{\bar{k}}{2}\zeta^{2}}{1+\bar{k}\zeta+\cfrac{\bar{k}^{2}}{2}\zeta^{2}},
\]
where $\bar{k}=\left(n-1\right)p$ is the mean degree. The first derivative
of this rational function is

\[
\dfrac{d}{d\zeta}\left(\frac{Q\left(\zeta\right)}{H\left(\zeta\right)}\right)=\frac{2\bar{k}^{2}\zeta^{2}-\left(4\bar{k}\left(\bar{k}-1\right)\zeta+4\bar{k}\right)}{\left(2+2\bar{k}\zeta+\bar{k}^{2}\zeta^{2}\right)^{2}},
\]
which is always negative for any $\bar{k}\geq1$ and $0\leq\zeta\leq1$
as can be seen in Figure \ref{plot}. Moreover, the absolute value of this derivative increases as $\bar{k}$ decreases, implying a slower decay in the function $\dfrac{\mathcal{\mathscr{C}}_{i}\left(\zeta\right)}{\mathscr{R}_{i}\left(\zeta\right)}$ for lower densities.

\begin{figure}
	\begin{centering}
		\includegraphics[width=0.6\textwidth]{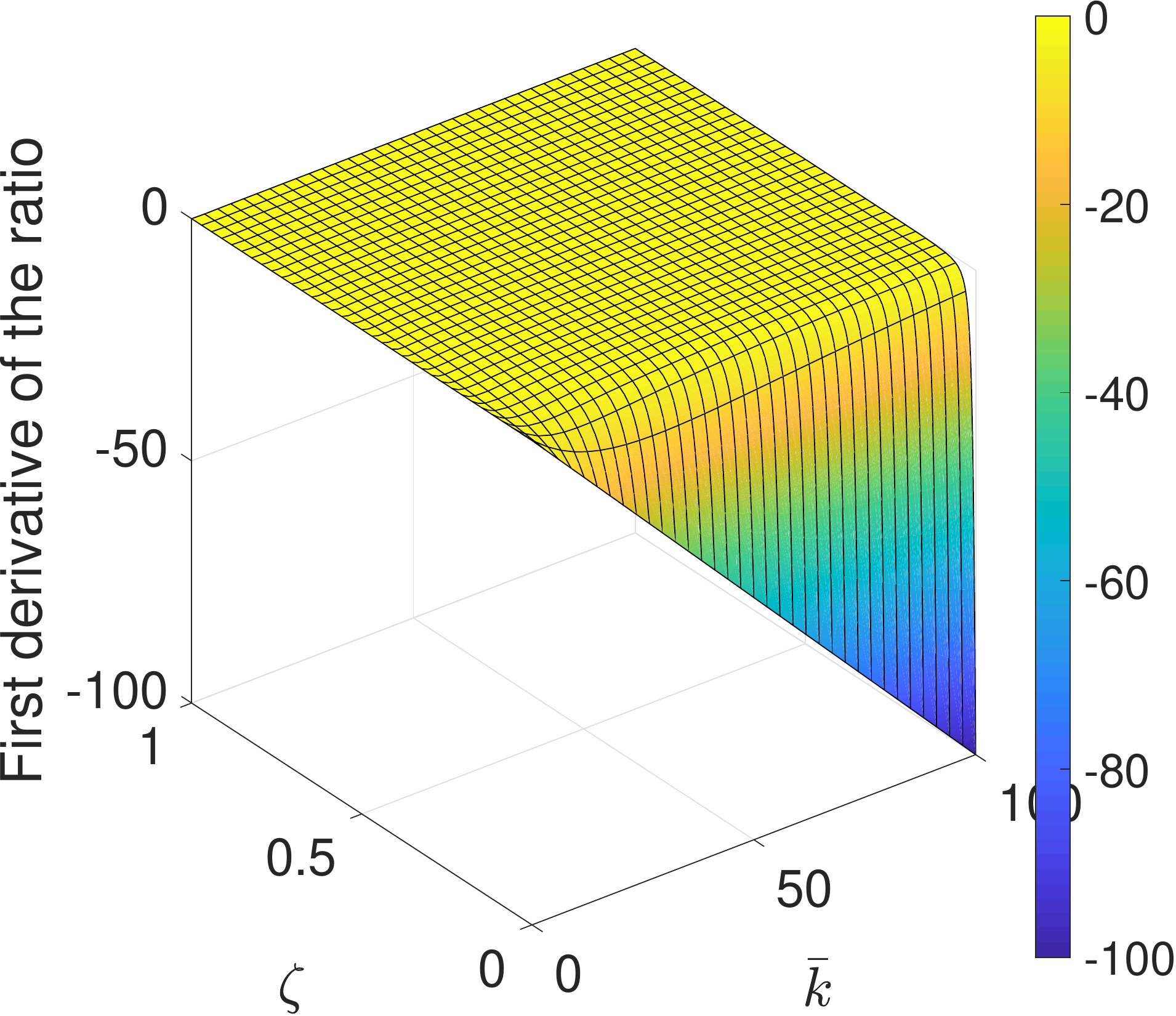}
		\par\end{centering}
	\caption{Illustration of the behavior of the derivative of the ratio $\dfrac{\mathcal{\mathscr{C}}_{i}\left(\zeta\right)}{\mathscr{R}_{i}\left(\zeta\right)}$
		for values of $0\protect\leq\zeta\protect\leq1$ and for the parameter
		$\bar{k}\protect\geq1.$}
	
	\label{plot}
\end{figure}
\color{black}

To conclude this section, we want to focus on the rankings produced by the two main centrality measures ${\mathscr{R}}_{i}$ and ${\mathscr{C}}_{i}$ and on the similarities between them. In particular, we are interested in determining if, or for what type of networks, the different centrality measures provide similar rankings. To this
end, we display in \cref{tab:RankERnew} the Spearman correlation
coefficient between the risk dependent centrality ${\mathscr{R}}_{i}$ and the circulability ${\mathscr{C}}_{i}$ for different graph densities and for various values of $\zeta$.
On average, we observe a strong positive monotonic dependence between
the two centrality measures. As expected, the two measures tend towards
the perfect monotonicity as the density arises. It is noteworthy the
behavior with respect to $\zeta$. The higher dependence is observed
in a low-risk framework ($\zeta=0.1$), while a slight reduction is
noticeable when higher risk contexts are analyzed, providing again
an empirical evidence of the fact that differences between nodes are
increased in stressed conditions. Furthermore, this result is in line
with the higher incidence of ${\mathscr{C}}_{i}$ on ${\mathscr{R}}_{i}$
as $\zeta$ vanishes, discussed in the previous lines. For the sake of brevity, we do not report the Spearman correlation between $\mathscr{R}_{i}$ and $\mathscr{T}_{i}$.
However, in all cases, the coefficient is larger than 0.9999.

\begin{table}[H]
	\centering{}%
	\caption{Spearman correlation coefficients between $\mathscr{C}_{i}$
		and $\mathscr{R}_{i}$ in ER graphs with 100 vertices at different
		densities and different values of $\zeta$.}
	\begin{tabular}{|lc|ccccc|}
		\hline 
		&  & \multicolumn{5}{c|}{\bf Density}\tabularnewline
		&  & 0.1 & 0.3 & 0.5 & 0.7 & 0.9\tabularnewline
		\hline 
		& 0.1 & 0.9947 & 0.9967 & 0.9971 & 0.9994 & 0.9998\tabularnewline
		$\zeta$ & 0.5 & 0.9844 & 0.9950 & 0.9966 & 0.9994 & 0.9998\tabularnewline
		& 1.0 & 0.9813 & 0.9950 & 0.9966 & 0.9994 & 0.9998\tabularnewline
		\hline 
	\end{tabular}
	\label{tab:RankERnew}
\end{table}

\section{Analysis of real-world financial networks}
\label{sec:RWfin}
In this section, we perform some empirical studies in order to
assess the effectiveness of the proposed approaches. We consider two
different networks. In the first one, we collected daily returns of a dataset
referred to the time-period ranging from January 2001 to December
2017, that includes $102$ leading U.S. stocks constituents of the
$S\&P$ 100 index at the end of 2017. Data have been downloaded from
Bloomberg. Returns have been split by using monthly stepped six-months
windows. It means that the data of the first in-sample window of width
six-month are used to build the first network. The process is repeated
rolling the window one month forward until the end of the dataset
is reached, obtaining a total of 199 networks. The first network,
denoted as \lq\lq1-2001\rq\rq covers the period $1^{st}$ of January
2001 to $30^{th}$ of June 2001. The latter one (\lq\lq 7-2017\rq\rq)
covers the period $1^{st}$ of July 2017 to $31^{th}$ of December
2017.

Hence, for each window, we have a network $\Gamma_{t}=(V_{t},E_{t})$ (with
$t=1,...,199$), where assets are nodes and links are weighted by
computing the correlation coefficient $_{t}\rho_{i,j}$ between the
empirical returns of each couple of assets. Notice that the number
of assets can vary over time. Indeed, as mentioned, we have considered the 102 assets
constituents of the $S\&P$ 100 index at the end of 2017. Some of
these assets have no information available for some specific time
periods. Therefore, in each window, we have considered only assets,
whose observations are sufficiently large to assure a significant
estimation of the correlation coefficient. However, it is not the
aim of this paper to deal with the effects of alternative estimation
methods. As a consequence, the number of nodes in the 199 networks
varies from 83 to 102 during the time-period.

Then, we follow the methodology proposed in \cite{Mantegna,Onnela}
and we use the non-linear transformation, based on distances $_{t}d_{i,j}$:
$_{t}d_{i,j}=\sqrt{2(1-{}_{t}\rho_{i,j})}$. The distance matrix $D_{t}=[{}_{t}d_{i,j}]_{i,j\in V_{t}}$,
with elements $0\leq{}_{t}d_{i,j}\leq2$, becomes the weighted adjacency
matrix of the graph $\Gamma_{t}$. As proposed in \cite{Onnela}, we extract
the minimum spanning tree $T_{t}$. This is a simple connected graph
that connects all $n_{t}$ nodes of the graph with $n_{t}-1$ edges
such that the sum of all edge weights $\sum_{_{t}d_{i,j}\in T_{t}}{}_{t}d_{i,j}$
is minimum. As shown in \cite{Onnela}, this minimum spanning tree,
as a strongly reduced representative of the whole correlation matrix,
bears the essential information about asset correlations. Furthermore,
the study of the centrality of nodes and the analysis of the evolution
of the tree over time are two critical issues in portfolio selection
problem (see \cite{Onnela,Peralta,Pozzi}).

The second dataset consists of a network of the top corporates in
US in 1999 according to Forbes magazine. The network is constructed
as follows. First we consider a bipartite network in which one set
of nodes consists of companies and the other of Chief Executive Officer
(CEO)'s of such companies. As one CEO can be in more than one company,
we make a projection of this bipartite graph into the company-company
space. In this way, the nodes represent corporations and two corporations
are joined by an edge if they share at least one director. We consider
two versions of this network, in the first we use the number of directors
shared by two companies as an edge weight, and in the second we use
the binary version of the first. We will refer to these as to the weighted
and binary network, respectively. The network has $824$ nodes, made
up of one giant component of $814$ nodes. We selected the giant component,
with its binary and weighted adjacency matrices. For a comprehensive description of this network see, for instance, \cite{Davis2003}. \\
Networks, derived by both datasets, have been studied by computing
the total communicability, circulability and transmissibility for
each node with $\zeta$ varying in $(0,1]$ with step $0.01$.

\subsection{Network of assets}

Starting from the asset trees $T_{t}$, we measure the relevance of
each node by using the risk-dependent centrality $\mathscr{R}_{i}$
and by testing different values of $\zeta$. We consider in \cref{fig:figR2001} the rankings' distribution of each asset. Different
outcomes of each distribution have been obtained by computing the
rankings based on $\mathscr{R}_{i}$ for alternative values of $\zeta$
in the interval $(0,1]$ with step $0.01$. These results regard the
first network \lq\lq 1-2001\rq\rq, namely, the network based on
data that cover the period $1^{st}$ of January 2001 to $30^{th}$
of June 2001. We observe that some nodes show a significant variability
according to different values of $\zeta$. Indeed, some assets have
climbed more than 20 positions in the ranking when $\zeta$ increases.
For instance, Amazon (node 7 in \cref{fig:figR2001}) moved
from position 66 to 41 in case of low and high risk, respectively.
Vice versa, Exelon Group (node 32 in \cref{fig:figR2001}) lowered
its ranking from 15 to 46. On the other hand, the most central nodes
in the network remain very central also when external risk is very
high. We have indeed that the top 6 is quite stable for different
values of $\zeta$. Top assets only exchange a bit their position,
preserving their central role. For instance, United Technologies Corporation
(node number 79 in \cref{fig:figR2001}) is at the top of the
ranking, independent of $\zeta$.

\begin{figure}[H]
	\begin{centering}
		\includegraphics[width=0.9\textwidth]{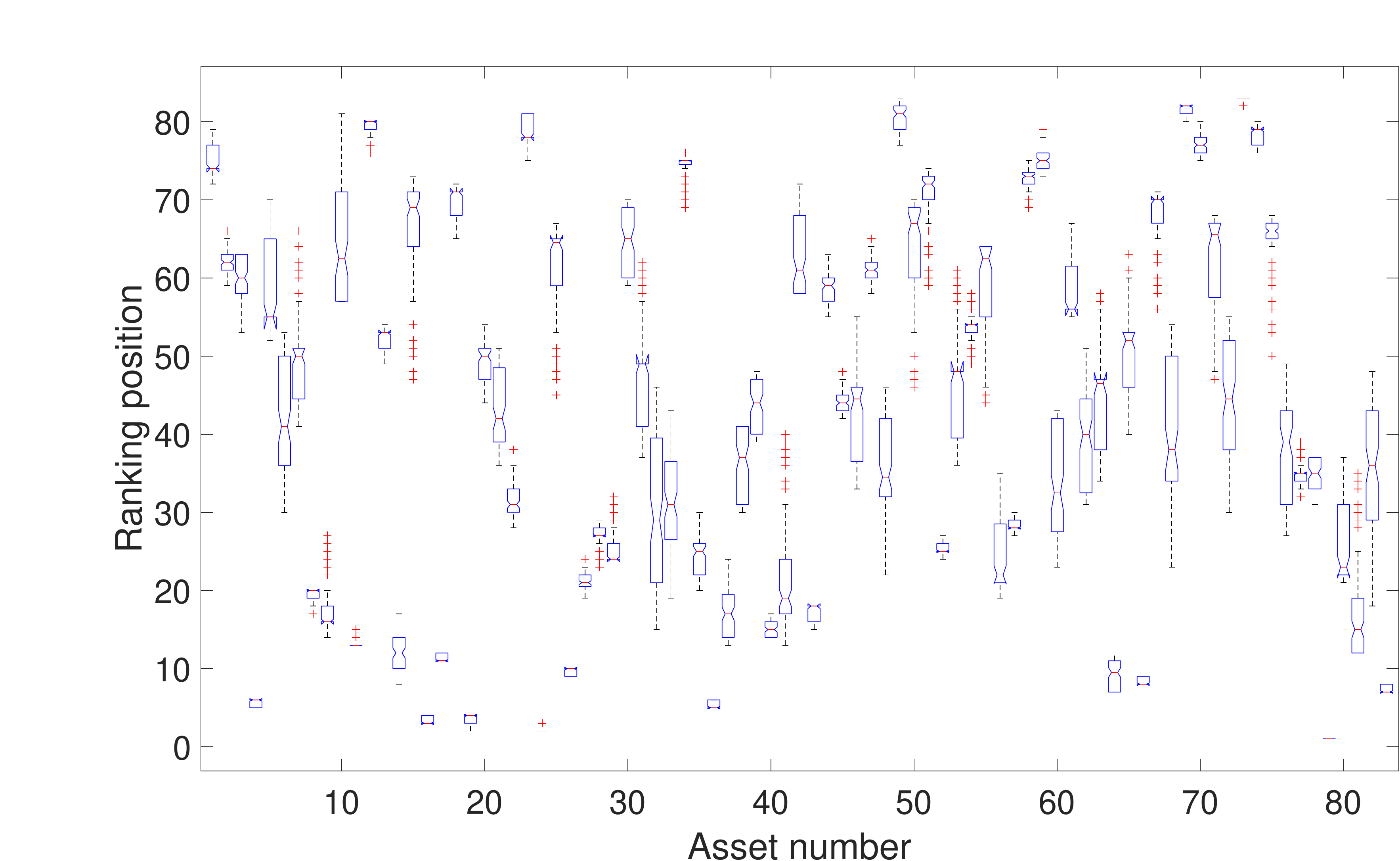}
		\par\end{centering}
	\caption{Figure reports the distribution of nodes' rankings based on $\mathscr{R}_{i}$
		with respect to $\zeta$. For each distribution, the set of outcomes
		is given by the rankings of $\mathscr{R}_{i}$ computed for alternative
		values of $\zeta$. Results regard the network $T_{1}$, i.e. the
		asset-tree in the first window $1-2001$.}
	
	\label{fig:figR2001}
\end{figure}

If we consider the period of the global financial crisis of 2007-2008
(see \cref{fig:figR2007} and \ref{fig:figR2008}), we observe
an increase in the rankings' volatility. In shock periods, centrality
of nodes is more affected by the value of $\zeta$. 
In particular, to catch rankings' volatility, we report in Figure \ref{fig:figSD} the standard deviations of rankings of each asset computed varying $\zeta$. In shocks periods, results confirm  higher average volatility as well as positive skewed distributions because of a greater number of assets whose ranking is highly affected by the value of $\zeta$. We also tested that differences in average volatility are significant by means of a paired t-test, useful for comparing the same sample of assets at different time periods. When the network 1-2001 is compared with the two networks covering period of crisis (End 2007 or End 2008), we obtain $p$-values around $10^{-5}$ and $10^{-8}$ that confirm strong evidence against the null hypothesis that the average difference between the two samples is zero. As expected, the test is not statistically significant ($p$-value is 0.31) when networks covering period of the global financial crisis are compared.

\begin{figure}[H]
	\begin{centering}
		\includegraphics[width=0.9\textwidth]{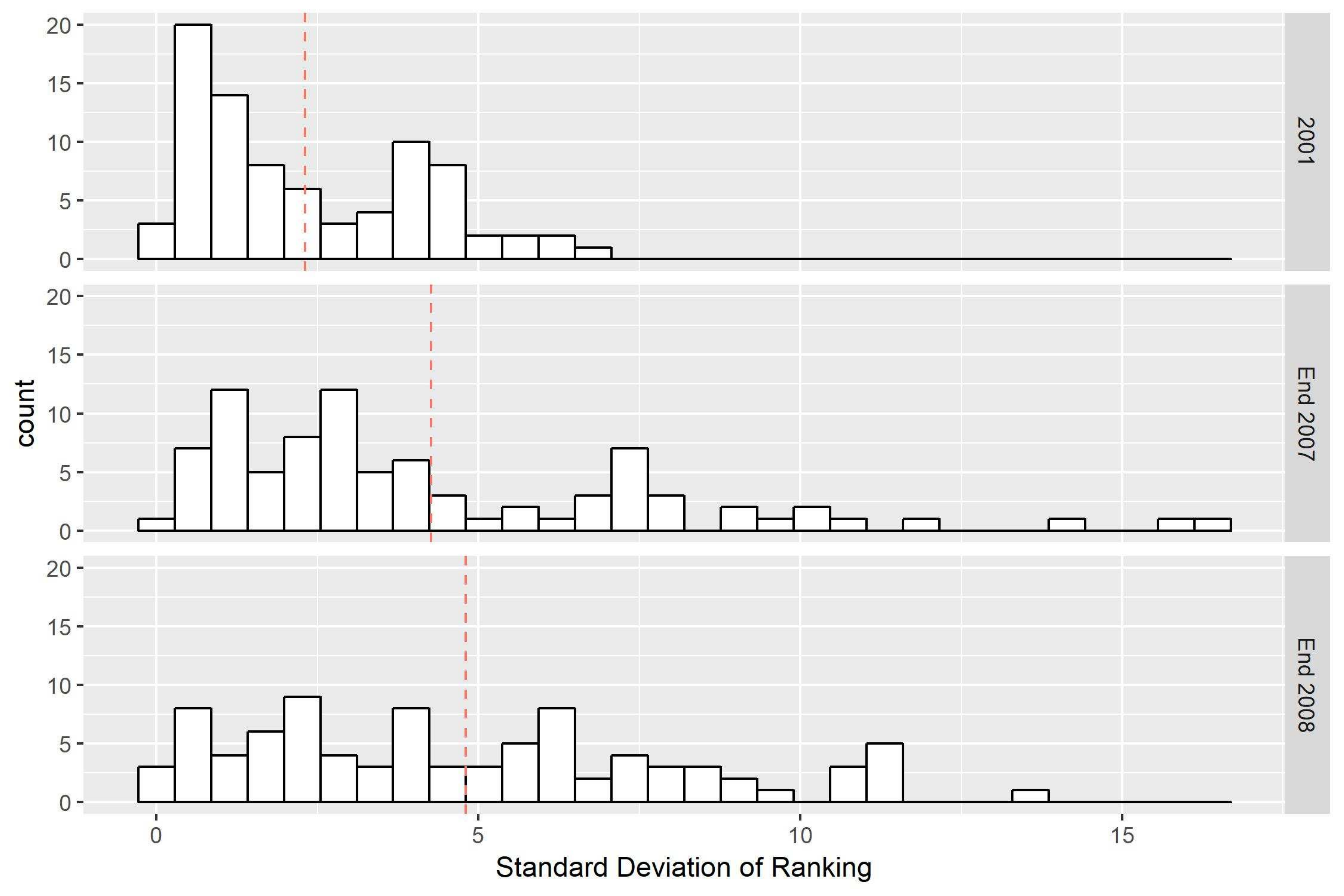}
		\par\end{centering}
	\caption{Figure reports the distribution of standard deviations of nodes' rankings based on $\mathscr{R}_{i}$
		with respect to $\zeta$. For each distribution, the set of outcomes
		is given by the standard deviation of rankings of $\mathscr{R}_{i}$ computed for alternative
		values of $\zeta$. Results regard respectively the network in the first window $1-2001$, at the end of $2007$ and at the end of $2008$. The dotted red lines indicate the average standard deviation: values are equal to 2.31, 4.27 and 4.80, respectively. }
	
	\label{fig:figSD}
\end{figure}

\color{black}

Concerning the behavior of specific assets, we observe, for
instance, that some assets move down by approximately 60 positions
from a low risk to an high risk framework. Two examples are represented
by Danaher Corporation and Honeywell International (assets 28 and
43, respectively, in \cref{fig:figR2007}). Instead, Accenture
PLC (node 3 in \cref{fig:figR2007}) increased its ranking from
position 61 to 11. \\
Even top central nodes are affected by $\zeta$ as the volatilities
of their rankings show. It is instead confirmed the relevance of United
Technologies Corporation (node number 82 in \cref{fig:figR2007}
and 83 in \cref{fig:figR2008}) that is again at the top of
the ranking at the end of 2017, independent of $\zeta$. At the
end of 2008, the centrality of this asset is also confirmed, although,
a bit of variability in the ranking is observed for this firm.

\begin{figure}[H]
	\begin{centering}
		\includegraphics[width=0.9\textwidth]{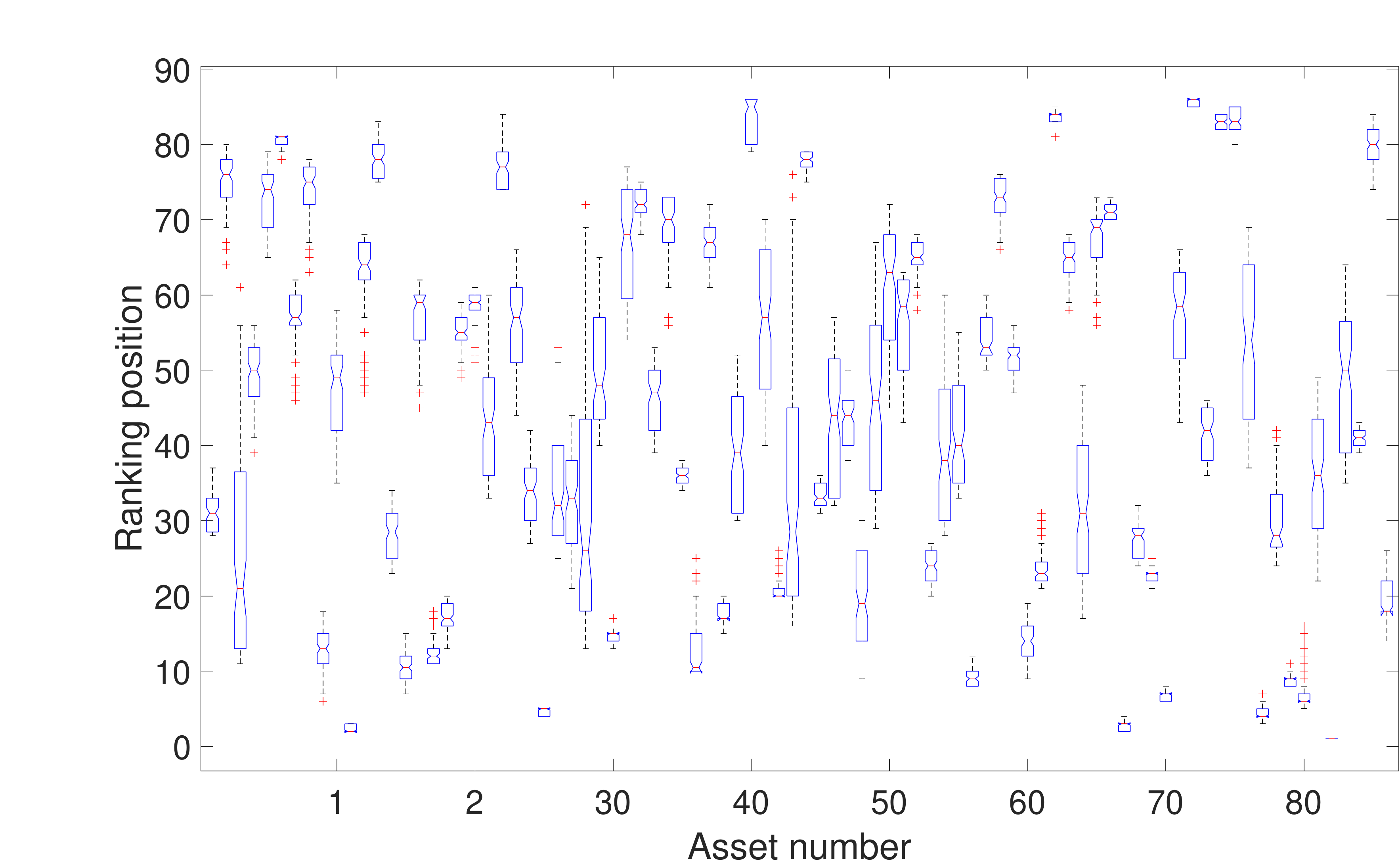}
		\par\end{centering}
	\caption{Figure reports the distribution of nodes' rankings based on $\mathscr{R}_{i}$
		with respect to $\zeta$. For each distribution, the set of outcomes
		is given by the rankings of $\mathscr{R}_{i}$ computed for alternative
		values of $\zeta$. Results regard the asset-tree at the end of $2007$.}
	
	\label{fig:figR2007}
\end{figure}

\medskip{}

\begin{figure}[H]
	\begin{centering}
		\includegraphics[width=0.9\textwidth]{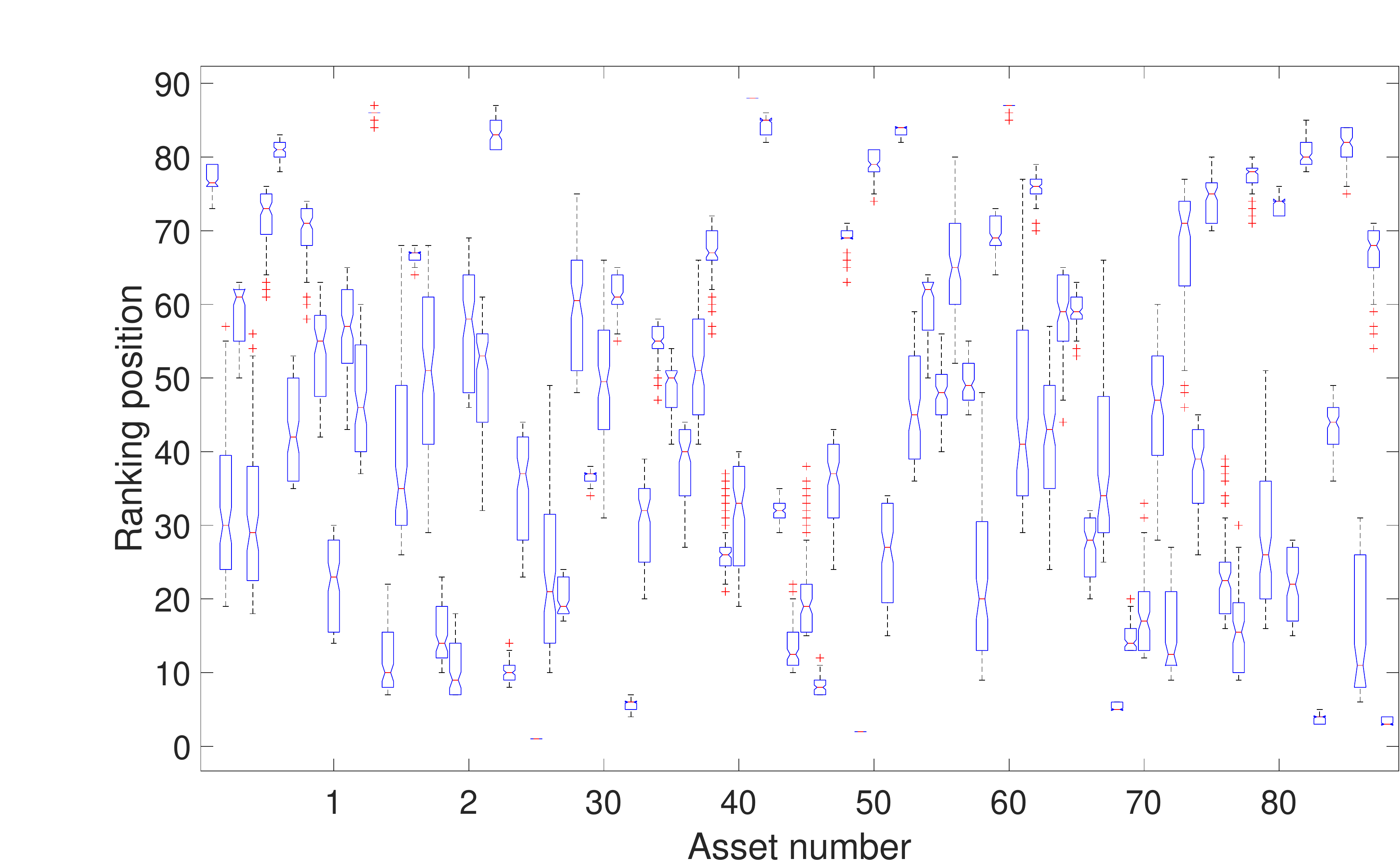}
		\par\end{centering}
	\caption{Figure reports the distribution of nodes' rankings based on $\mathscr{R}_{i}$
		with respect to $\zeta$. For each distribution, the set of outcomes
		is given by the rankings of $\mathscr{R}_{i}$ computed for alternative
		values of $\zeta$. Results regard the asset-tree at the end of $2008$.}
	
	\label{fig:figR2008}
\end{figure}

\subsection{US corporate network}\label{sec:US}

We now analyze the network of US top corporates in 1999 according to Forbes magazine. Before starting our analysis let us explain the importance of studying a spreading dynamics on this network. According to this network, the board of directors of a given corporation is formed by a few members, some of which are also present in the board of other corporations. Then, such directors serving on more than one board can act as spreaders of information between the corresponding corporations. Such information can be about future (favorable or unfavorable) economic situations, alarms, market opportunities, or anything that could be of interest to the companies in which the director is. Due to the global connectivity of the system, such ``information'' can be spread across the whole network ``infecting'' all the corporations in a relatively short time. As we have mentioned before, epidemiological models have also being used for modeling such propagation dynamics (see Section 1.1.). \\
Hence, we devote this section to the investigation about whether a significant increase of the risk-dependent centrality is a proxy of the vulnerability of the corporate to financial infections propagating on the network. At first, we should remark the fact that the network we are considering here was built based on data corresponding to year 1999. At this year the level of stress of the international economic system was relatively
high due to the fact that the East Asian financial crisis occurred in the years 1997--1998, which was also followed by the Russian default of 1998. The two aforementioned financial crises had a ripple effect on the US market. In the literature, for instance, it is well-documented the so-called ``fire-sale'' FDI (Foreign Direct Investment)
phenomenon, that is, the surge of massive foreign acquisitions of domestic firms during a financial crisis (\cite{Alquist2013,Krugman2008}).Thus, the level of stress and infectability of the system for the next few years after 1999 (we will eventually see that these correspond to the period 2000-2002) is expected
to be significantly larger than in the subsequent years when the effects of these crises gradually relaxed. Therefore, we proceed our analysis by considering that the level of infectability in 1999 is high and
we investigate the effects of relaxing such a condition to lower levels of stress. That is, we start by assuming that in 1999 the external market turmoil could be represented by a value of $\zeta=1$ and we want to find out how the companies change their ranking positions in term of risk-dependent centrality\footnote{The analysis has been also developed for circulability and transmissibility, but, since the significantly high rank correlation between $\mathscr{R}_{i}$ and $\mathscr{T}_{i}$ (with Spearman correlation coefficients larger than 0.99), we focus here only on $\mathscr{R}_{i}$.} $\mathscr{R}_{i}$ as $\zeta$ vanishes. To this purpose, we set up different initial conditions in the contagion model described by \cref{eq:310}, assigning to each year a different value of the infectability parameter $\gamma$, according to the environmental conditions of the market. Therefore, we let $\zeta$ factors reduce year by year in order to reflect a reduction in the overall stress on the network. In particular, we decrease $\zeta$ linearly from 1 to 0 in the period 1999-2003. Therefore, rankings based on the risk-dependent centrality computed for $\zeta=1$ allow to assess the relevance of each corporate in $1999$. Lowering  $\zeta$, we test how the positions of firms vary over time when the external risk reduces. It is noteworthy that the connection between this parameter and the risk could be quite loose but as provided by the following analysis the model seems to work quite well in describing firms that reduce their SVC in the period. \\
The variation of rankings is then compared with the pattern of the shareholder value creation (SVC) over time. According to the \textit{OECD Principles of Corporate Governance}, corporations should be run, first and foremost, in the interests of shareholders (OECD 1999). Therefore, companies should work to increase their shareholder values. Increasing shareholders value cannot be done without risk. It is known \cite{Lazonick_2} that in the shareholder value model, companies usually take more risk than needed in order to maximize SVC. As a consequence of this additional risk, companies acquire debts which could make them unstable and more exposed to the risk of bankruptcy. Acquiring large debts is seen as conductive to increasing shareholder value, due to the potential of the company to increase value when it has started from a low baseline. Thus, there is a relation between SVC and risk, because in searching for large SVC the companies increase their risks to attract more investors and increasing potential value gain, but, at the same time, the risk also puts the company in a more vulnerable position to bankruptcy and collapse. \\
To support our interpretation, we make use of SVCs of the companies\footnote{In particular, we use a sample of 337 companies in our network whose SVCs are made available in the dataset available in \cite{Fernandez}.} in the S\&P500 for the period 1999-2003, that have been collected by Fernández and Reinoso (see \cite{Fernandez}). 
Hence, we use SVC as a proxy for risk. Indeed, the global average of SVC reflects very well what happens for the period 1999-2003. After the financial crisis of 1998 the world was at a higher level of risk which is reflected by a dramatic drop of the SVC in year 2000 from a positive value in 1999 to a negative one in 2000. This situations remained until 2002, but eventually recovered to positive in 2003 (see Figure \ref{fig:SVCTot}).
It is noteworthy that the data for SVC was reported by Fernandez and Reinoso for the years 1993-2003. From this long period we select the segment 1999-2003 which contains exactly the valley produced from the financial crisis of 1998 and also because the data used for building the corporate network is of 1999. That is, it corresponds to a segment in time in which the world economy drop due to a crisis and then eventually recovered from it.

\begin{figure}[H]
	\centering
	\includegraphics[width=0.6\textwidth]{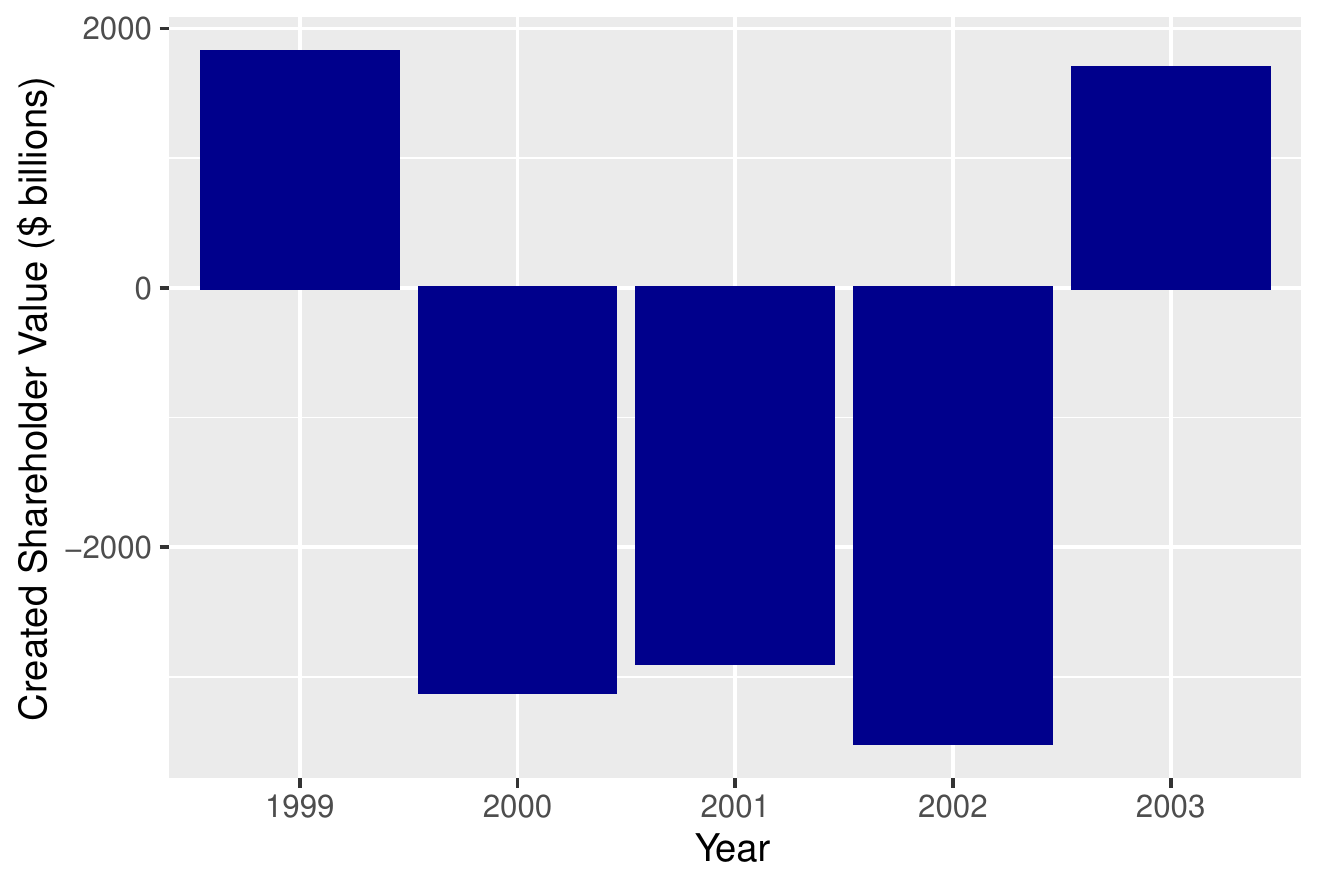}
	\caption{Total created shareholder value (\$ billion) of firms constituent the index S\&P500 for the period 1999-2003 (data taken from \cite{Fernandez}). } \label{fig:SVCTot} 
\end{figure}

We focus our statistical analysis on the predictability of the risk-dependent centrality on the evolution of the SVC. We consider the evolution of the SVC of a company for the period 1999-2003, which is the period immediately after the network of corporate elite in US was built. As a proxy for the evolution of the SVC of a company we consider the Pearson correlation coefficient $\rho$ of the ranking position of the company based on SVC versus the reciprocal of the year. In this case, a negative (positive) value of $\rho$ indicates that the corresponding company decreases (increases) its SVC from 1999 to 2003 when the global external infectability decreases.
Therefore, we apply a Linear Discriminant Analysis (LDA) to classify the companies into two groups: (i) those with negative trend in the SVC for this period, and (ii) those with a positive one. The only predictor used for this classification is the parameter $\Delta{\rm Rank}({\mathscr R}_{i})$. This parameter is the difference between the ranking position of the company $i$ when $\zeta=1$ and the ranking position of the same company when $\zeta=0.01$. In other words, a negative (positive) value of $\Delta{\rm Rank}({\mathscr R}_{i})$ means that the company dropped (increased) its exposure to risk when the infectability of the system is lower. \\
Before proceeding with the application of the LDA on the whole sample at disposal, we eliminate a few companies whose correlation coefficient between SVC and the reciprocal of the year is marginal (i.e., close to zero). We test empirically the effect produced by the removal of companies for which $ |\rho| <a$ for different values of the threshold $a$, e.g., $a=0.01, 0.025, 0.05, 0.075, 0.1$. The best classification of the companies into the two groups analyzed is obtained by eliminating those companies for which $| \rho |<0.05$. 
In this case the total accuracy of the LDA model is 60.5\%. That is, 200 out of 332 of the companies are classified correctly in their respective groups representing their trends in shrinking SVC or expanding it. In particular, the fitted LDA model is $\hat{Y}_{i}=-0.3177+0.0102\Delta{\rm Rank}({\mathscr R}_{i})$, where $\hat{Y}_{i}$ is the predicted response variable of our analysis that allows to classify companies in their respective group. The positive coefficient of the variable $\Delta{\rm Rank}({\mathscr R}_{i})$ indicates that: (i) increasing the exposure to risk ($\Delta{\rm Rank}({\mathscr R}_{i})>0$) tends to expand the SVC of the company, and (ii) decreasing the exposure to risk ($\Delta{\rm Rank}({\mathscr R}_{i})<0$) tends to shrink the SVC of the company.
For both groups, we report in Figure \ref{fig:SVCR}(a) a comparison between the predicted value with the LDA and the observed value for each firm. Red squares below the line and blue circles over the line are well classified, while blue circles below the line and red squares over the line are wrongly classified. Furthermore, in Figure \ref{fig:SVCR}(b) we report the related confusion plot,where the number of true negative and true positive are on the anti-diagonal (bottom and upper parts, respectively) and the number of false negative and false positive are on the main diagonal (bottom and upper parts, respectively). It is noticeable the low classification performance of the model, when only companies that expand their SVC are considered (in this regard, see in Figure \ref{fig:SVCR}(b) companies that belong to the observed class denoted with the sign “+”). For instance, from 147 companies in the network which increase their SVC in the period 1999-2003 only 36 are correctly predicted by $\Delta{\rm Rank}({\mathscr R}_{i})$ in their class. On the contrary, from the 175 companies that shrink their SVC in the period 1999-2003, the variable $\Delta{\rm Rank}({\mathscr R}_{i})$ correctly predicts 157 companies in this class. That is, the risk-dependent centrality of the companies clearly identifies about 90\% of the companies which will shrink their SVC in the period 1999-2003, using only data referring to the year 1999. In plain words, our results indicate that diminishing the exposure to risk when the external conditions of infectability are low, with high probability, reduces the SVC of a company.


\begin{figure}[H]
	\centering
	\subfloat[]{\includegraphics[width=0.45\textwidth]{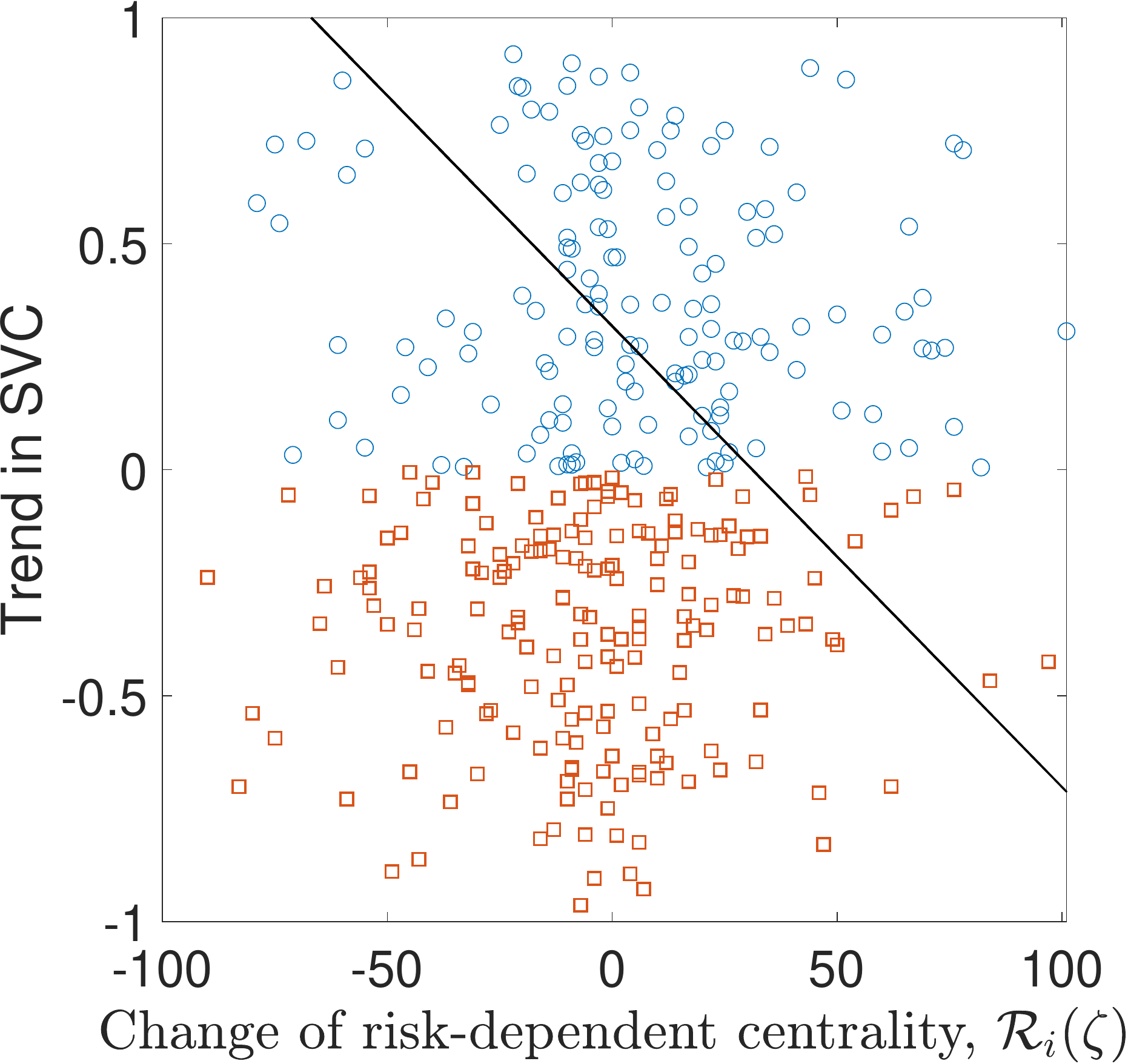}
		
	}\subfloat[]{\includegraphics[width=0.45\textwidth]{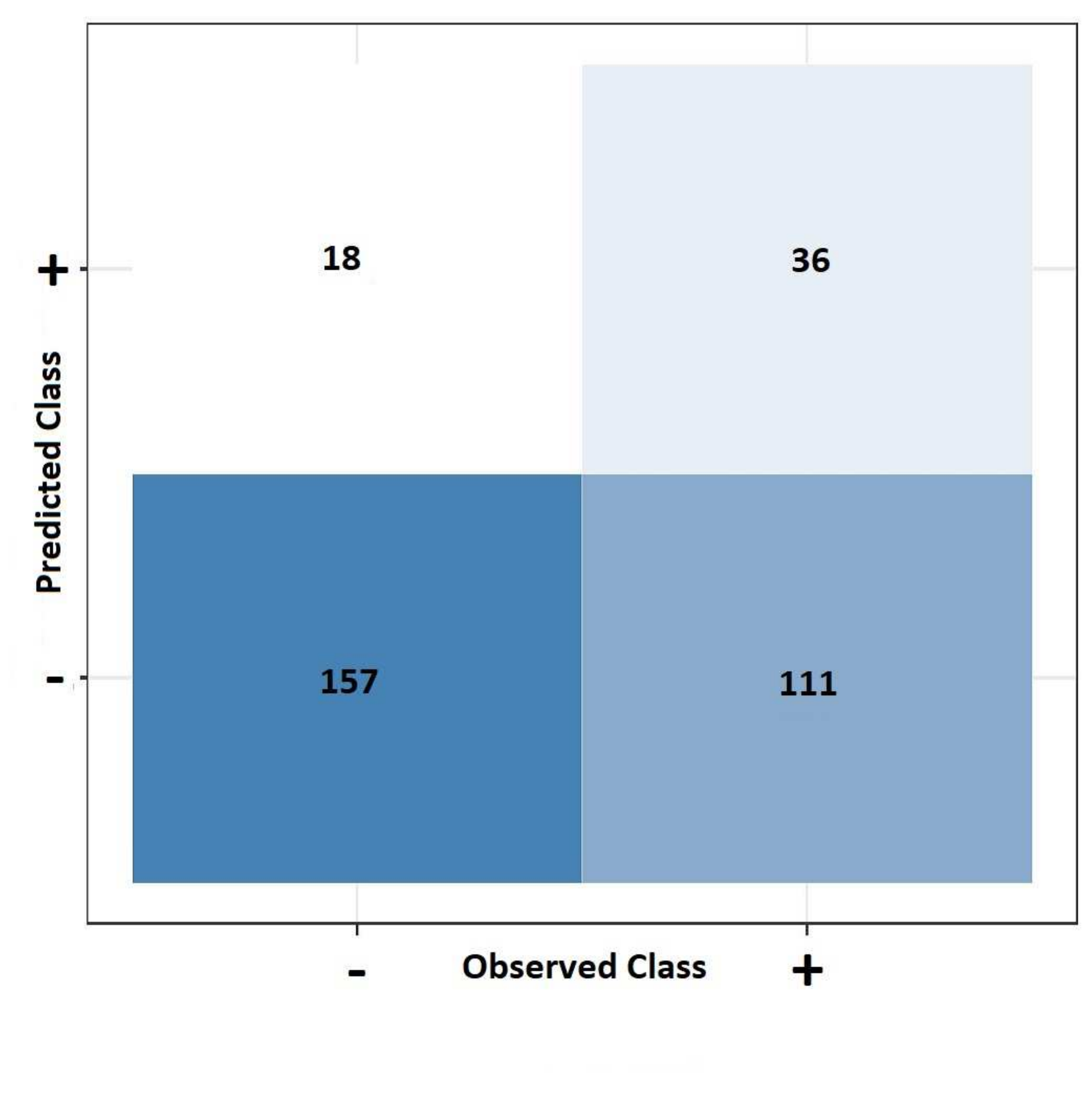}
		
	}
	\caption{(a) Illustration of the linear discriminant analysis (LDA) model classifying
		the trend of corporations into those shrinking their SVC (red squares)
		and those expanding it (blue circles). The black line represents the
		LDA model based on the change of $\mathscr{R}_{i}\left(\zeta\right)$
		for the values of $\zeta=0.01$ and $\zeta=1.0$ to predict the trend
		in the SVC. Red squares below the line and blue circles over the line
		are well classified, while blue circles below the line and red squares
		over the line are wrongly classified. The LDA classifies
		correctly about 90\% of all companies who shrank their SVC (red squares). (b) Plot of the confusion matrix. On the $x$-axis we report the true class (Observed Class), on the $y$-axis the predicted class (Output Class). The number of true negative and true positive cases are on the anti-diagonal (bottom and upper parts, respectively) and the number of false negative and false positive cases are on the main diagonal of the matrix (bottom and upper parts, respectively).  In the class “minus” (“plus”) we consider companies with negative (positive) trend in the SVC for the period 1999-2003} \label{fig:SVCR} 
\end{figure}

Let us conclude with the following remark. Even if ``good'' companies increase their risk-centrality ranking as $\zeta$ vanishes, it is worth noting that this occurs when the global stress in the market is very low. When the infectability rate is very low, the absolute probability of getting infected also remains very low for both ``good''
and ``bad'' companies. To show this fact, let us consider that, according to our model, the probability that a given corporate is not affected by a crisis propagating inside the network is given by $1-x_{i}(t)=\alpha  e^{-\frac{\beta}{\alpha}\left(\mathscr{R}_{i}-1\right)}$, where again $\beta$ and $\alpha=1-\beta$ are the initial probabilities to have infected and not-infected nodes, respectively. Hence, the ratio between the probabilities of two nodes $i$ and $j$ to pass successfully through a crisis is given by $e^{\frac{\beta}{\alpha}\left(\mathscr{R}_{j}-\mathscr{R}_{i}\right)}$. We compute these ratios for different couples of corporates operating in a similar sector, a ``good'' one and a ``bad'' one (\cref{fig:ProbRat}).

\begin{figure}[H]
	\centering
	\subfloat[]{\includegraphics[width=0.45\textwidth]{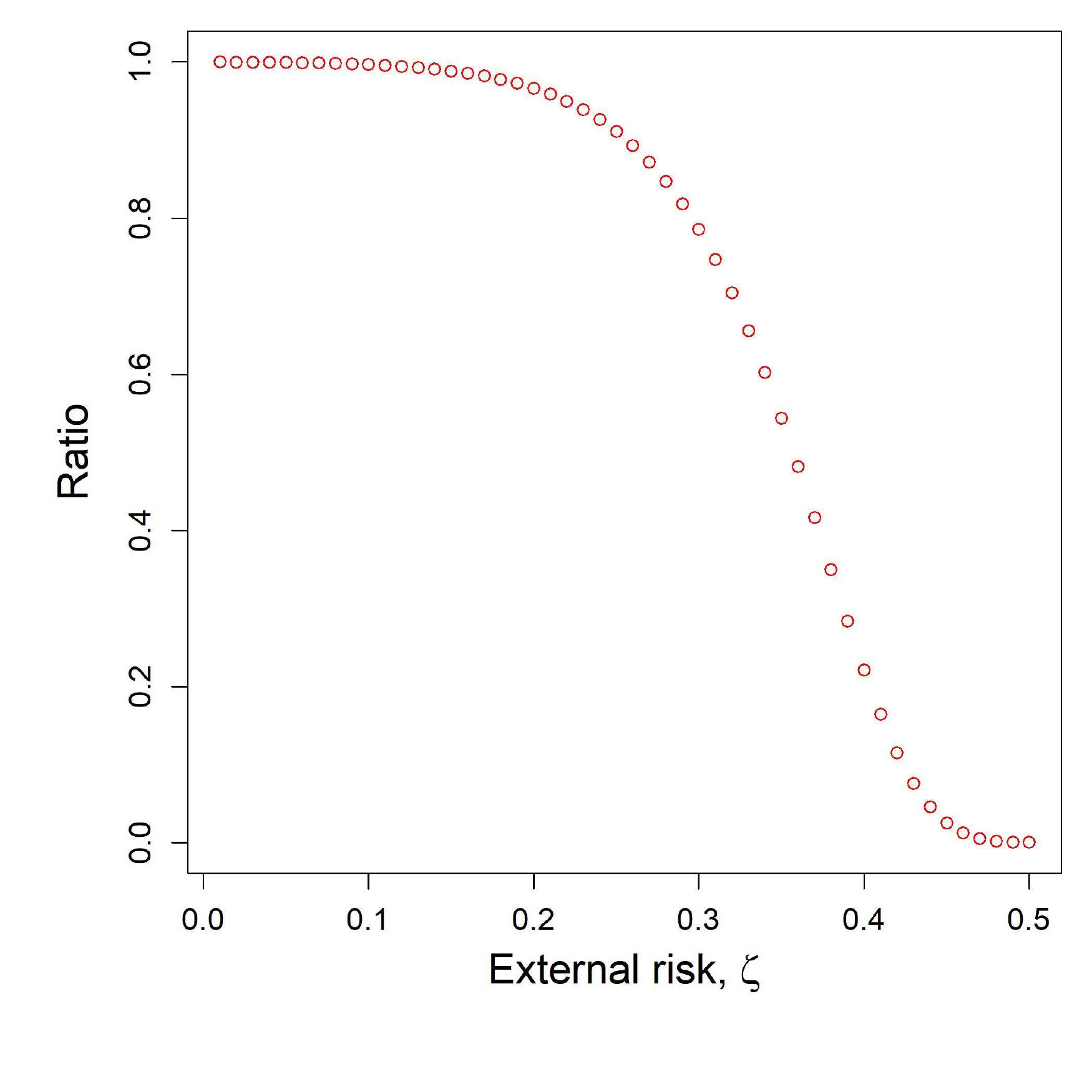}
		
	}\subfloat[]{\includegraphics[width=0.45\textwidth]{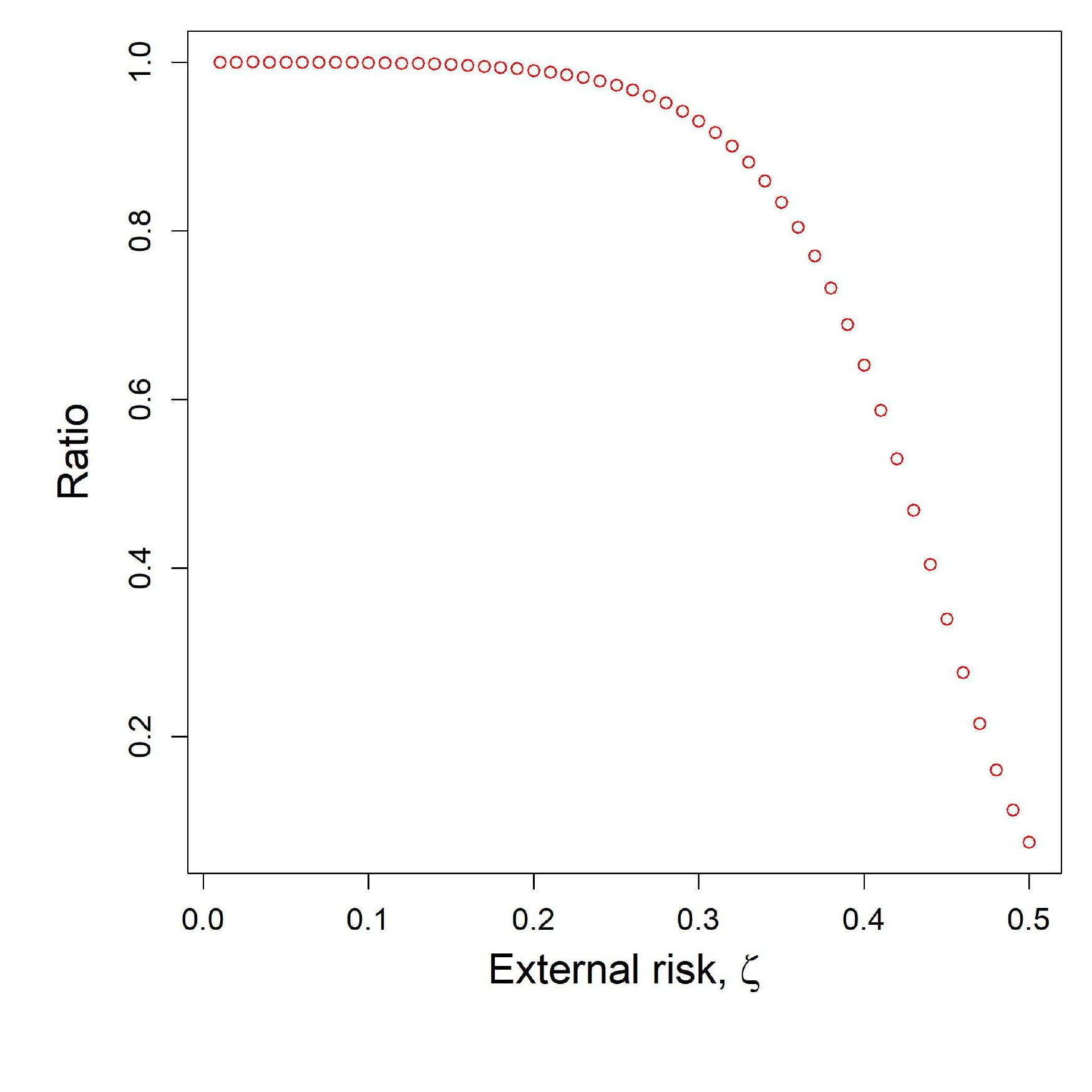}
		
	}
	
	\caption{Figures display the ratios between the probabilities of not being
		infected by a crisis for two different couples of Corporates: a) Lucent
		Technologies Inc. over General Electric Co. b) Morgan Stanley Co.
		over Bank One Corp.  It is noteworthy that Lucent
		Technologies Inc. and Morgan Stanley Co. reduced their rankings over time, while General Electric Co. and  Bank One Corp. increased their rankings.} 	\label{fig:ProbRat} 
\end{figure}

As expected, at low $\zeta$ the probability of not being infected by a crisis is the same for both high and low risk-centrality companies. But this ratio decreases very quickly as $\zeta$ increases and this means that for companies that reduced their risk (e.g., Lucent Technologies, Morgan Stanley, Union Carbide and American Express) the probabilities to stay safe during a crisis are very small if compared with the analogous probabilities for companies that increased their risk (e.g., General Electric, Bank One, Ashland and Bank of America).

\section{Ranking interlacement}
\label{sec:Ranking}
During the analysis of the two real-world networks studied above,
we have noticed that with the change of $\zeta$ some nodes vary
their ranking significantly, to the point of changing their  positions relative
to each other. For instance, in \cref{Interlacement} we illustrate
six pairs of corporates that interlace their positions with the change
of the global infectability in the network. In the first pair, \cref{Interlacement}(a), we see that at low levels of infectability, i.e., $\zeta\rightarrow0$, J.P. Morgan\&Co Inc. (red) occupies a position in the ranking of $\mathscr{C}_{i}$ more
at the bottom than Bank of America Corp. (blue). That is, at low global infectability
J.P. Morgan\&Co is exposed to less risk than Bank of America. However,
when the global infectability in the network increases ($\zeta\rightarrow1$),
Bank of America is exposed to less risk than J.P. Morgan\&Co. A similar
interlacement is observed between the other couples in \cref{Interlacement}. For instance, in \cref{Interlacement}(f), the interlacement between rankings for General Motors Corp. (red) and Boeing Co. (blue) occurs at
a smaller value of $\zeta$ than for the previous cases. Before proceeding
with the analysis of this phenomenon, we would like to remark that
the existence of ranking interlacement means that the ranking of the
nodes in a network based on the risk-dependent centralities is not
unique and fixed as in the case of other classical centrality measures,
e.g., degree, eigenvector, closeness, betweenness. Here instead the
ranking of nodes depends on the global external conditions to which
the network is submitted.

\begin{figure}[H]
	\begin{centering}
		\subfloat[]{\begin{centering}
				\includegraphics[scale=0.33]{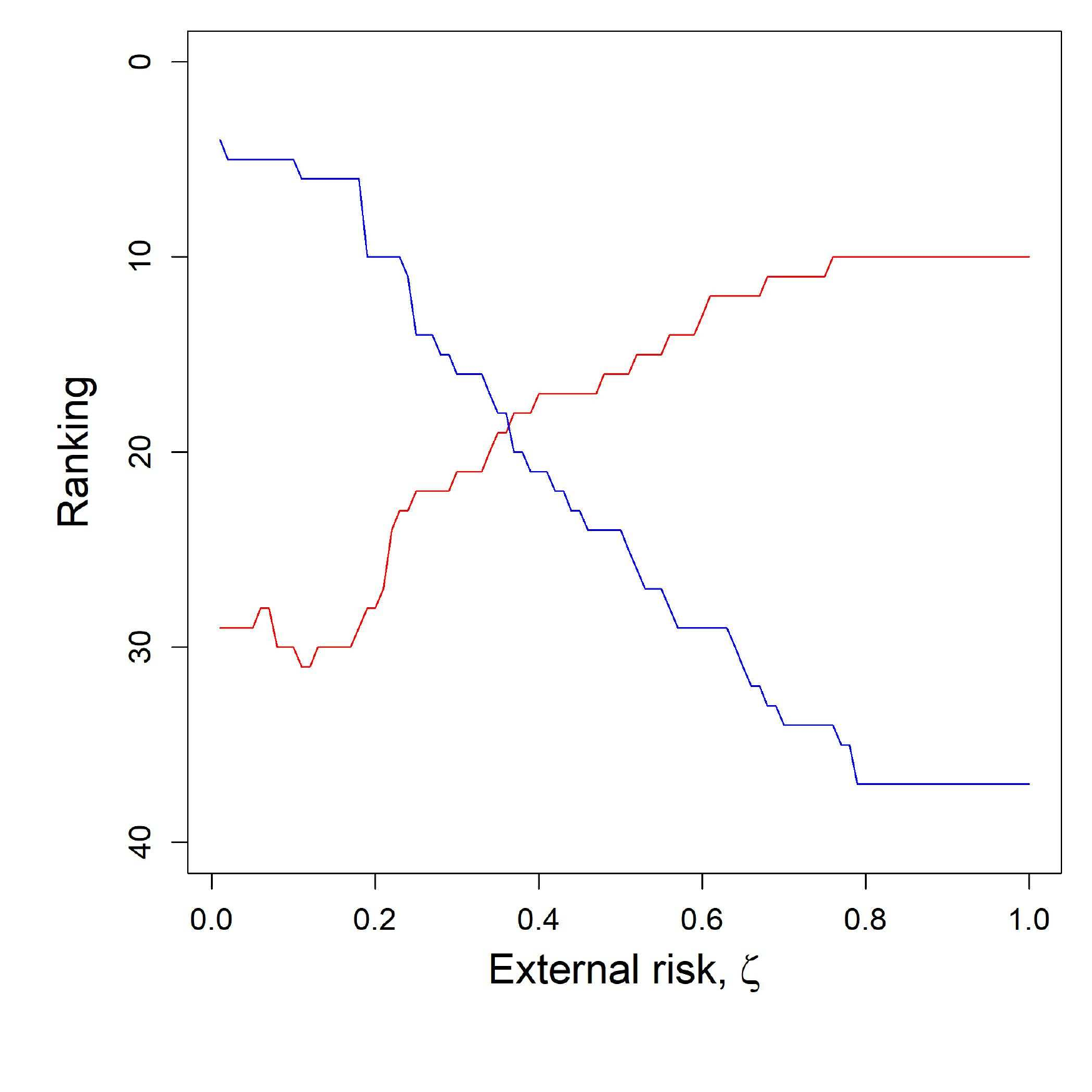}
				\par\end{centering}
			
		}\subfloat[]{\begin{centering}
				\includegraphics[scale=0.33]{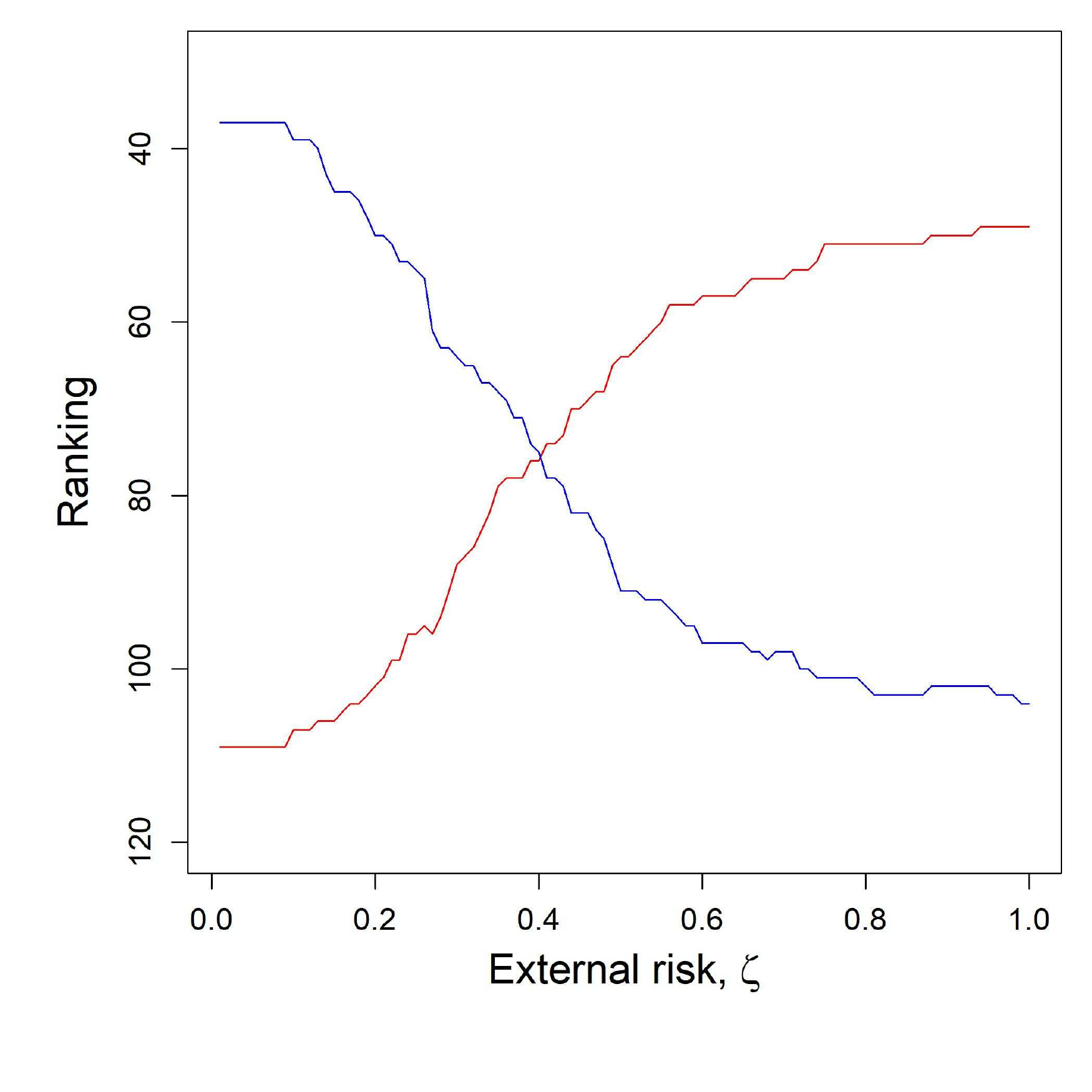}
				\par\end{centering}
		}
		\par\end{centering}
	\begin{centering}
		\subfloat[]{\begin{centering}
				\includegraphics[scale=0.33]{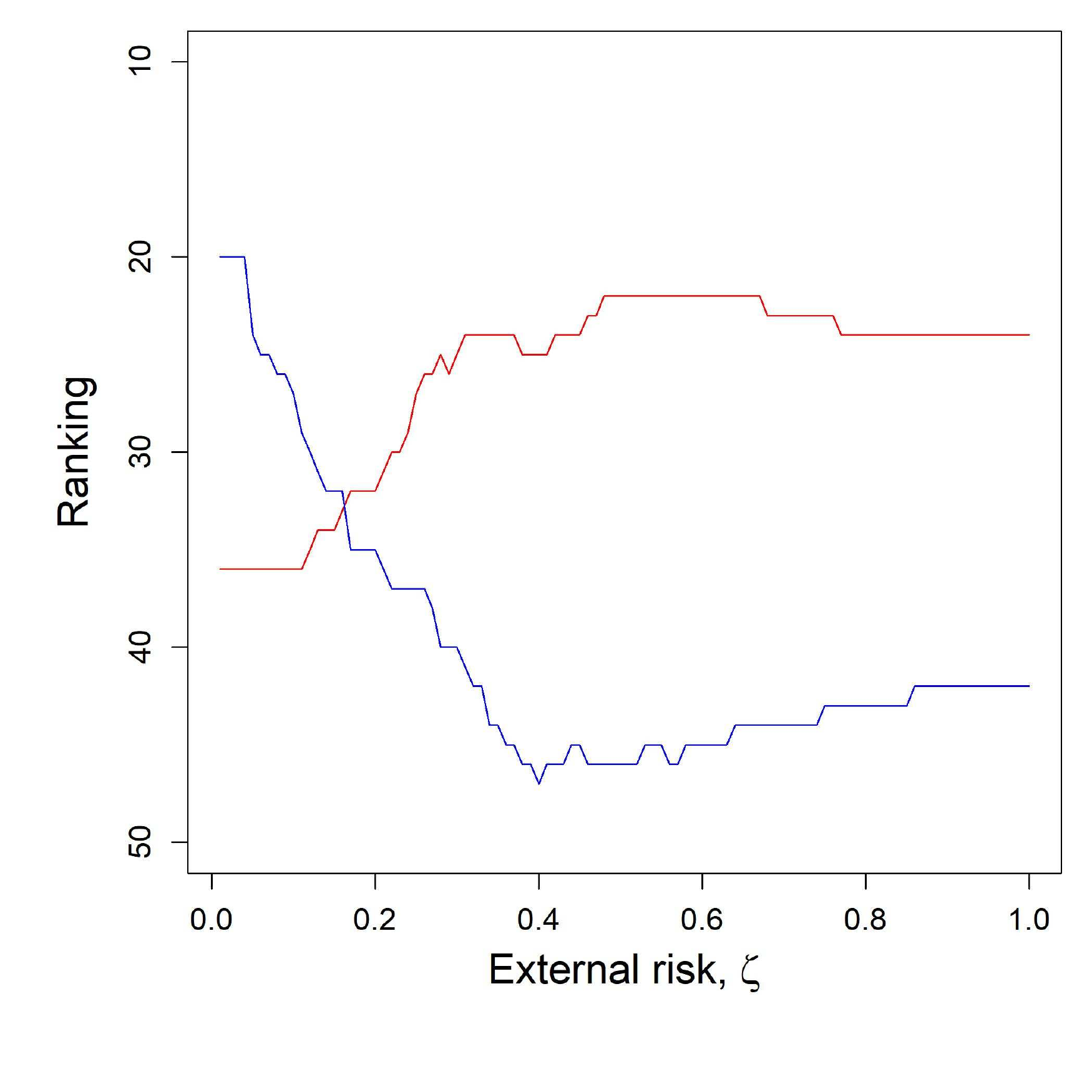}
				\par\end{centering}
		}\subfloat[]{\begin{centering}
				\includegraphics[scale=0.33]{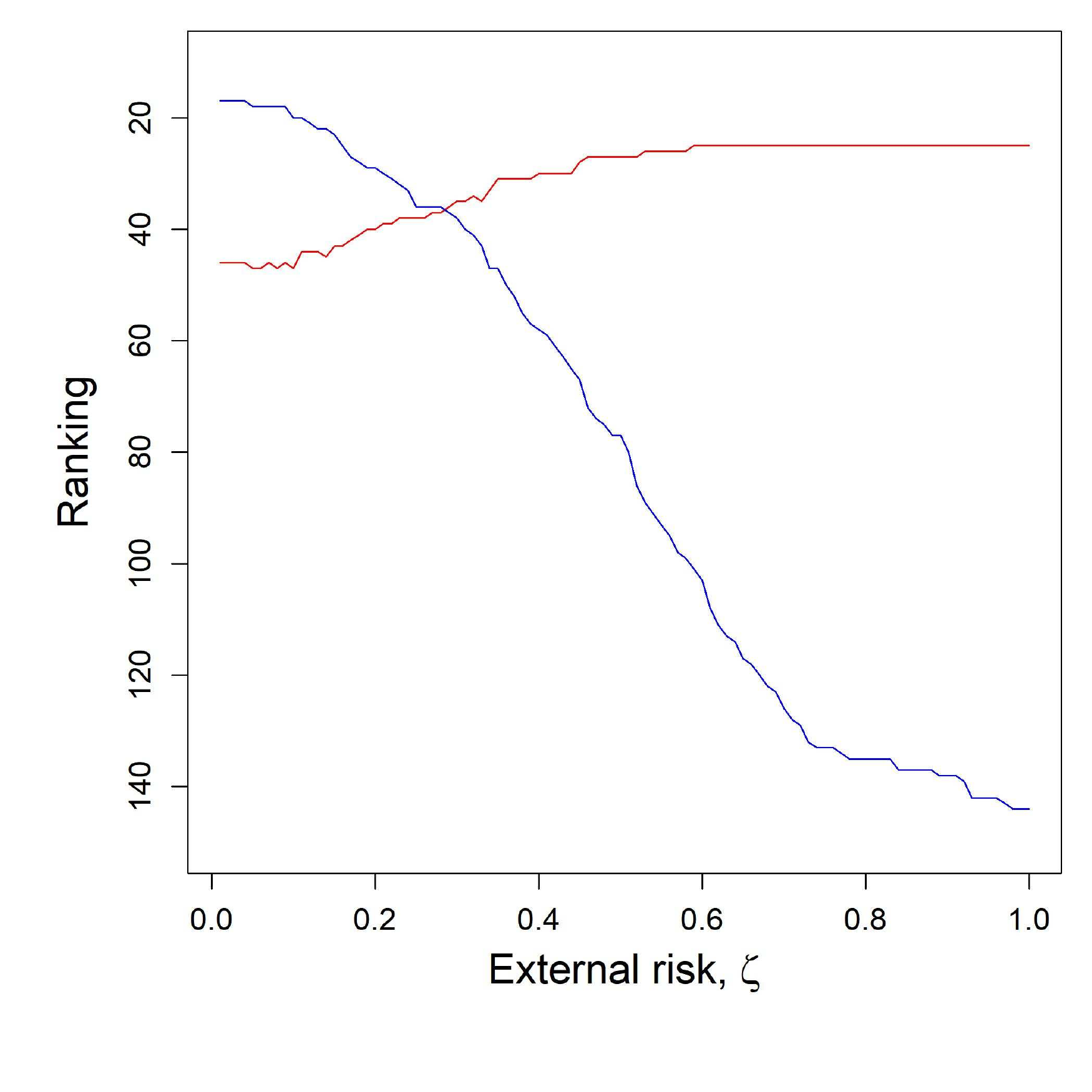}
				\par\end{centering}
		}
		\par\end{centering}
	\begin{centering}
		\subfloat[]{\begin{centering}
				\includegraphics[scale=0.33]{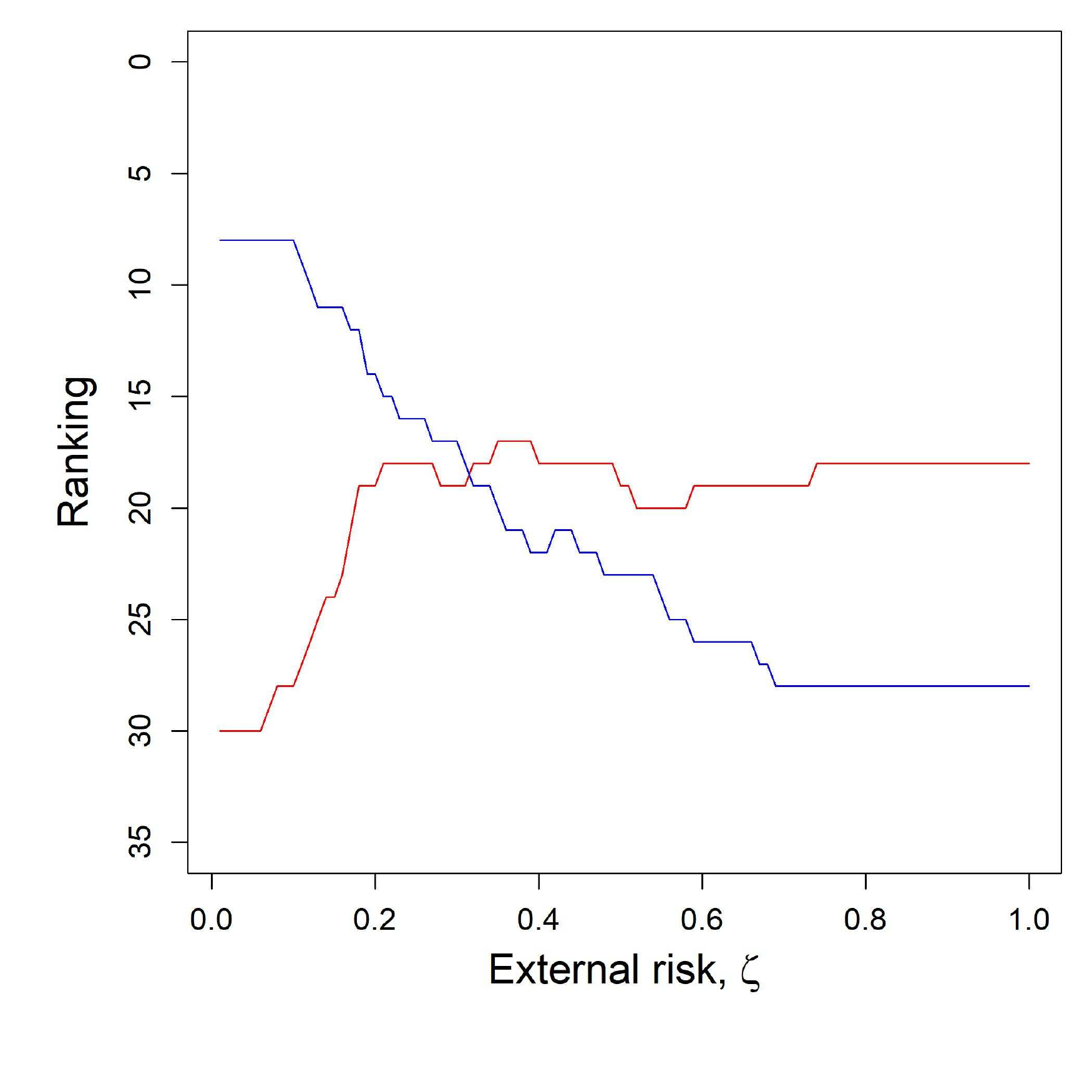}
				\par\end{centering}
		}\subfloat[]{\begin{centering}
				\includegraphics[scale=0.33]{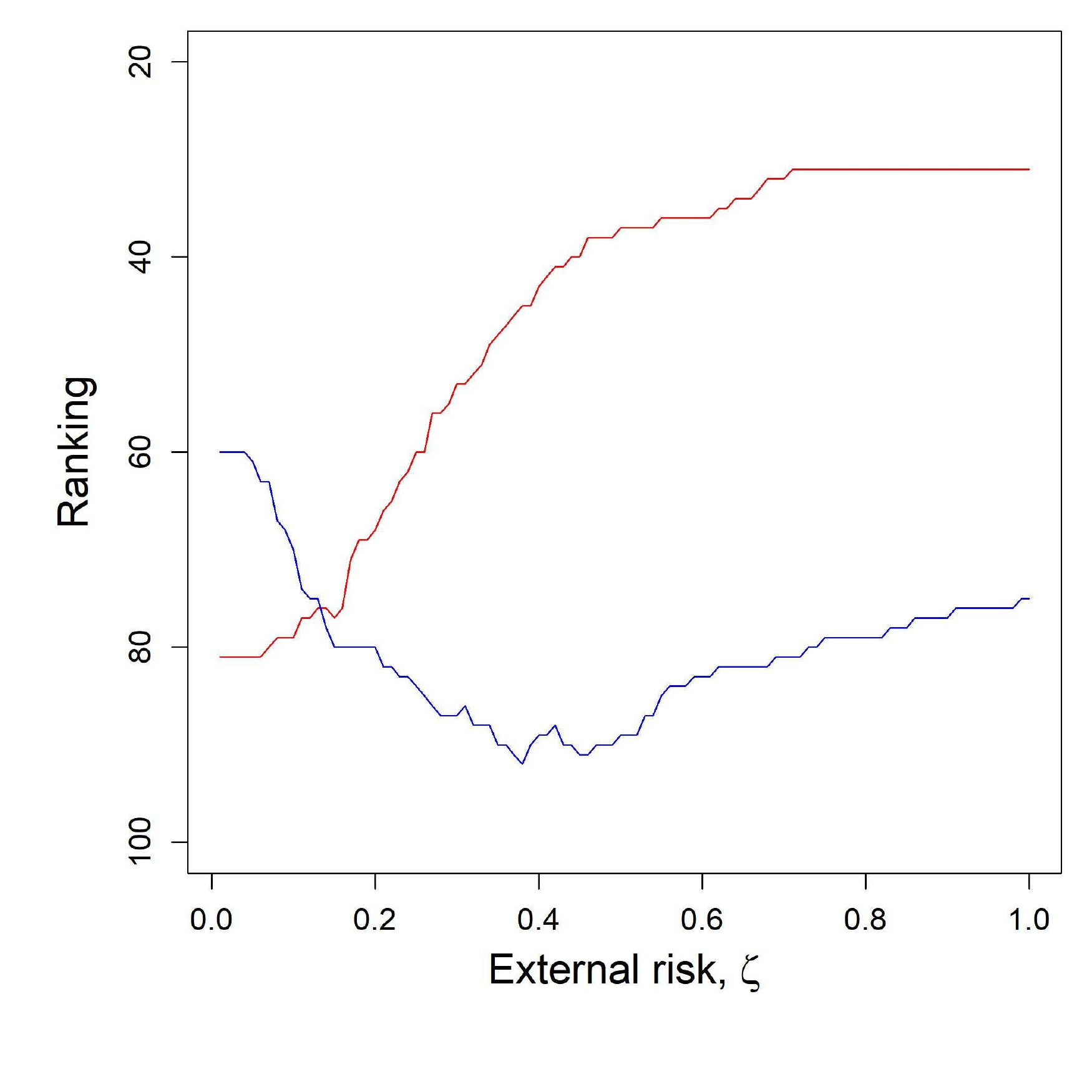}
				\par\end{centering}
		}
		\par\end{centering}
	\caption{Illustration of the Circulability Ranking Interlacement for a) J.P.
		Morgan\&Co Inc. (red) and Bank of America Corp. (blue) b) Pfizer Inc.
		(red) and Ashland Inc (blue) c) Morgan Stanley \& Co. (red) and Bank
		One Corp. (blue) d) AT\&T Corp. (red) and Airtouch Communications
		Inc. (blue) e) Union Carbide Corp. New (red) and AON Corp. (blue)
		f) General Motors Corp. (red) and Boeing Co. (blue)}
	
	\label{Interlacement}
\end{figure}

In order to shed light on the issue of ranking interlacement we will
make use of different representations of the risk-dependent total
communicability ${\mathscr{R}}_{i}(\zeta)$ and circulability ${\mathscr{C}}_{i}(\zeta)$
measures (the transmissibility is obtained as the difference of these
two and can be treated accordingly). First, expanding the matrix exponential
in a power series gives the representation 
\begin{equation}
	{\mathscr{R}}_{i}(\zeta)=\left(e^{\zeta A}{\vec 1}\right)_{i}=\sum_{k=0}^{\infty}\frac{\zeta^{k}}{k!}w_{i}^{(k)},\label{series1}
\end{equation}
where $w_{i}^{(k)}=\left(A^{k}{\vec 1}\right)_{i}$ denotes the number
of walks of length $k$ starting from the node $i$, with $w_{i}^{(0)}=1$.
In particular, $w_{i}^{(1)}=k_{i}$, the degree of node $i$. Similarly,
\begin{equation}
	{\mathscr{C}}_{i}(\zeta)=\left(e^{\zeta A}\right)_{ii}=\sum_{k=0}^{\infty}\frac{\zeta^{k}}{k!}w_{i,i}^{(k)},\label{series2}
\end{equation}
where now $w_{i,i}^{(k)}=\left(A^{k}\right)_{ii}$ is the number of
closed walks of length $k$ through node $i$; in particular, $w_{i,i}^{(0)}=1$,
$w_{i,i}^{(1)}=0$, $w_{i,i}^{(2)}=k_{i}$, and $w_{i,i}^{(3)}=2t_{i}$,
where $t_{i}$ is the number of triangles node $i$ participates in.

Second, we recall that the spectral theorem yields the formulas 
\begin{equation}
	{\mathscr{R}}_{i}(\zeta)=\sum_{k=1}^{n}e^{\zeta\lambda_{k}}\left(\psi_{k}^{T}{\vec 1}\right)\psi_{k,i},\quad{\mathcal{C}}_{i}(\zeta)=\sum_{k=1}^{n}e^{\zeta\lambda_{k}}\left(\psi_{k,i}\right)^{2}.\label{spec}
\end{equation}

Using \cref{series1}-\cref{series2}, we readily see that both functions
of $\zeta$ are {\em absolutely monotonic} for $\zeta>0$, i.e.
they are positive and infinitely differentiable on $(0,\infty)$,
with all the derivatives being nonnegative. In particular, both functions
are strictly increasing and strictly convex.
\begin{definition}
	\label{Def1} We say that the rankings of node $i$ and node $j$
	based on the circulability \textit{interlace at }$\zeta^{*}>0$ if
	${\mathscr{C}}_{i}(\zeta^{*})={\mathscr{C}}_{j}(\zeta^{*})$ and there
	exists an $\varepsilon>0$ such that ${\mathscr{C}}_{i}(\zeta)-{\mathscr{C}}_{j}(\zeta)$
	changes sign exactly once in $(\zeta^{*}-\varepsilon,\zeta^{*}+\varepsilon)$.
\end{definition}

In other words, nodes $i$ and $j$ interlace at $\zeta^* > 0$ if the plots of ${\mathscr C}_i (\zeta)$ and ${\mathscr C}_j (\zeta)$ cross for $\zeta = \zeta^*$. We note that, in principle, it is possible to have ${\mathscr{C}}_{i}(\zeta^{*})={\mathscr{C}}_{j}(\zeta^{*})$ for some value of $\zeta^*$ without interlacing taking place. Two cases are possible: in the first one, the two curves touch at the isolated point $\zeta^*$ (without crossing), and in the second one the two functions are identical on an open neighborhood of $\zeta^{*}$ and, therefore, for all $\zeta$ since they are analytic functions. In practice, either scenario is very unlikely to occur, at least for real world networks. Note that points of tangency must satisfy the additional condition ${\mathscr{C}}_{i}'(\zeta^{*})={\mathscr{C}}_{j}'(\zeta^{*})$.

An analogous definition can be given for the ranking based on other
$\zeta$-dependent measures, like the total communicability ${\mathscr{R}}_{i}(\zeta)$.
In the following we limit our discussion to the interlacing of rankings
according to the circulability, but analogous observations hold for
the total communicability and transmissibility functions.

Identifying the interlacing points (if they exist) requires to find
the roots of the transcendental equation
${\mathscr{C}}_{i}(\zeta)-{\mathscr{C}}_{j}(\zeta)=0$,
or\
\[
\Psi(\zeta):=\sum_{k=1}^{n}e^{\zeta\lambda_{k}}\left[\psi_{k,i}^{2}-\psi_{k,j}^{2}\right]=0.
\]

Even if we knew the eigenvalues and eigenvectors of $A$ explicitly,
there is no general closed form expression for the roots of the transcendental
function $\Psi$. Of course one could resort to numerical root-finding
techniques, but this would be impractical for large networks. Here
and below we give a qualitative discussion followed by a heuristic
approach that yields approximations that seem to work well in practice.

\noindent We begin with the following result. It applies to both circulability
and total communicability-based rankings, and in fact for a much larger
class of parameter-dependent centrality ranking functions, including
Katz centrality \cite{Katz}. We remind the reader that we restrict
the risk rate $\zeta$ to positive values.
\begin{theorem}
	\label{thm_finite} Let $i$ and $j$ be two nodes with different eigenvector centrality: $\psi_{1,i} \ne \psi_{1,j}$. Then the number of interlacing points for $i$ and $j$ is  necessarily finite (possibly zero). 
\end{theorem}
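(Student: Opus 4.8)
The plan is to treat
$\Psi(\zeta)=\sum_{k=1}^n e^{\zeta\lambda_k}\bigl[\psi_{k,i}^2-\psi_{k,j}^2\bigr]$
as an \emph{exponential sum} (a generalized Dirichlet polynomial) in the real variable $\zeta$, and to exploit the classical fact that such functions have only finitely many real zeros unless they vanish identically. Since every interlacing point is in particular a zero of $\Psi$, finiteness of the zero set yields the claim at once. First I would collapse repeated eigenvalues: letting $\mu_1>\mu_2>\cdots>\mu_r$ denote the \emph{distinct} eigenvalues of $A$ and $P_\ell$ the orthogonal projector onto the eigenspace of $\mu_\ell$, I would rewrite
\[
\Psi(\zeta)=\sum_{\ell=1}^r c_\ell\,e^{\mu_\ell\zeta},\qquad c_\ell=(P_\ell)_{ii}-(P_\ell)_{jj},
\]
so that $\Psi$ becomes a combination of exponentials with \emph{distinct} real frequencies $\mu_\ell$ and basis-independent coefficients $c_\ell$ (each $(P_\ell)_{ii}=\sum_{k:\lambda_k=\mu_\ell}\psi_{k,i}^2$ does not depend on the chosen orthonormal basis of the eigenspace).

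The crucial step is to show that $\Psi$ is not identically zero, and this is exactly where the hypothesis enters. Because the network is connected, the Perron--Frobenius theorem guarantees that the largest eigenvalue $\lambda_1=\mu_1$ is simple and that the corresponding eigenvector $\vec{\psi}_1$ can be taken strictly positive; hence $P_1=\vec{\psi}_1\vec{\psi}_1^{\,T}$ and the leading coefficient is $c_1=\psi_{1,i}^2-\psi_{1,j}^2$. Since $\psi_{1,i},\psi_{1,j}>0$, the assumption $\psi_{1,i}\ne\psi_{1,j}$ forces $\psi_{1,i}^2\ne\psi_{1,j}^2$, so $c_1\ne 0$ and therefore $\Psi\not\equiv 0$.

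It then remains to invoke the standard bound on the number of real zeros of an exponential sum. I would argue by induction on the number $r$ of terms using Rolle's theorem: after factoring out $e^{\mu_r\zeta}$ and differentiating, the constant term disappears and the number of distinct exponentials drops by one, while Rolle guarantees the derivative has a zero between consecutive zeros of the original function; iterating down to a single, never-vanishing exponential shows that $\Psi$ has at most $r-1$ real zeros. Consequently $\Psi$ has finitely many zeros, and \emph{a fortiori} finitely many sign changes, so $i$ and $j$ interlace at most $r-1$ times. The main obstacle is really the non-vanishing of $\Psi$: the Rolle argument is routine, whereas ruling out the degenerate case $\Psi\equiv0$ is precisely what the spectral simplicity of $\lambda_1$ together with the positivity of the Perron vector is needed for, and it is the only place the distinctness of eigenvector centralities is used.
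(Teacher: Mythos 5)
Your proof is correct, but it takes a genuinely different route from the paper's. The paper argues softly: it uses the hypothesis $\psi_{1,i}\ne\psi_{1,j}$ to show that $e^{-\zeta\lambda_1}\bigl({\mathscr{C}}_{i}(\zeta)-{\mathscr{C}}_{j}(\zeta)\bigr)$ tends to the nonzero limit $\psi_{1,i}^2-\psi_{1,j}^2$, so all interlacing points are confined to a compact interval $[0,\bar{\zeta}]$; an infinite set of zeros would then have an accumulation point by Bolzano--Weierstrass, forcing the analytic function $\hat{\Psi}$ to vanish identically, a contradiction. You instead collapse $\Psi$ to an exponential sum over the $r$ distinct eigenvalues, use Perron--Frobenius (simplicity of $\lambda_1$ and positivity of $\vec{\psi}_1$) to show the leading coefficient $c_1=\psi_{1,i}^2-\psi_{1,j}^2$ is nonzero, and then invoke the classical Rolle-induction bound on zeros of exponential sums with distinct real frequencies. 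Both arguments exploit the hypothesis at the same point (positivity of the Perron vector turns $\psi_{1,i}\ne\psi_{1,j}$ into $\psi_{1,i}^2\ne\psi_{1,j}^2$), but yours buys an explicit quantitative bound of at most $r-1\le n-1$ interlacing points, which the paper's identity-theorem argument does not provide; conversely, the paper's argument localizes the interlacing points in $[0,\bar{\zeta}]$, a fact it leans on later when justifying the truncated-series heuristics. One minor point of care in your Rolle induction: the iteration should formally be on the number of \emph{nonzero} coefficients (the surviving coefficients get multiplied by nonzero factors $\mu_\ell-\mu_r$, so nothing is lost), and the base case needs $c_1\ne 0$ to guarantee the final single exponential is not the zero function --- which is exactly what your Perron--Frobenius step supplies.
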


\begin{proof}
	Let us assume that there is at least one pair of nodes, $i$ and $j$,
	whose rankings interlace, so that $\Psi(\zeta)=0$ has at least
	one positive root. Observe that the ranking of node $i$ provided by ${\mathscr{C}}_{i}(\zeta)$ is identical to that obtained using
	\[
	\hat{\mathscr{C}}_{i}(\zeta)=e^{-\zeta\lambda_{1}}{\mathscr{C}}_{i}(\zeta)=\psi_{1,i}^{2}+\sum_{k=2}^{n}e^{\zeta(\lambda_{k}-\lambda_{1})}\psi_{k,i}^{2}.
	\]
	As this quantity tends monotonically to $\psi_{1,i}^{2}$ for $\zeta\to\infty$,
	there exists a $\bar{\zeta}$ such that no rank interlacing with node $j$ can occur for $\zeta>\bar{\zeta}$, since all the node rankings must stabilize
	on the eigenvector rankings in the large $\zeta$ limit. 
	Hence, all interlacing points must fall within
	the compact interval $[0,\bar{\zeta}]$. Suppose that the number of
	interlacing points is infinite. By the Bolzano-Weierstrass Theorem,
	this set has a point of accumulation. But since $\hat{\Psi}(\zeta):=e^{-\zeta\lambda_{1}}\Psi(\zeta)$
	is analytic, and zero on this set, it must be identically zero everywhere,
	which contradicts the assumption that there is at least one interlacing
	point in $(0,\infty)$.
\end{proof}

As a consequence:

\begin{corollary}
	If all nodes in the network have different eigenvector centralities, the
	total number of interlacing points is finite (possibly zero).
\end{corollary}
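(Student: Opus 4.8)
The plan is to reduce the statement to Theorem~\ref{thm_finite} applied one pair at a time. First I would observe that the hypothesis ``all nodes in the network have different eigenvector centralities'' means precisely that $\psi_{1,i} \ne \psi_{1,j}$ whenever $i \ne j$. Consequently the hypothesis required by Theorem~\ref{thm_finite} is satisfied by \emph{every} one of the $\binom{n}{2}$ unordered pairs of distinct nodes, not just by some privileged pair.

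Next, for each such pair $(i,j)$ Theorem~\ref{thm_finite} guarantees that the set of interlacing points, i.e.\ the positive values $\zeta^{*}$ at which the circulability (or total communicability) rankings of $i$ and $j$ interlace in the sense of Definition~\ref{Def1}, is finite, possibly empty. The collection of all interlacing points occurring in the network is then the union of these pairwise sets taken over all $\binom{n}{2}$ pairs. Since a finite union of finite sets is finite, the total number of interlacing points is finite, bounded above by the sum of the individual pairwise counts. Whether one counts interlacing events with multiplicity (summing over pairs) or counts distinct values of $\zeta^{*}$ (taking the cardinality of the union), the conclusion is the same.

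There is essentially no obstacle in this argument: the only thing one might worry about is whether combining the contributions of different pairs could manufacture an accumulation of interlacing points, but this cannot happen because there are only finitely many pairs and each contributes a finite set. All the genuine mathematical content of the corollary is already carried by Theorem~\ref{thm_finite}, whose analyticity/Bolzano--Weierstrass argument is exactly what rules out accumulation \emph{within} a single pair. The corollary is therefore a direct bookkeeping consequence of the theorem together with the finiteness of the node set.
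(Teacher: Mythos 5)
Your argument is correct and is exactly the reasoning the paper intends: the corollary is stated as an immediate consequence of Theorem~\ref{thm_finite}, obtained by applying it to each of the finitely many pairs of nodes and noting that a finite union of finite sets is finite. The paper omits this bookkeeping entirely, so your write-up simply makes the implicit argument explicit.
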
 

\color{black}


A sufficient condition for the existence of at least one interlacing point for the pair of nodes $i$ and $j$ is that $k_{i}\ge k_{j}$ (or $k_{j}\ge k_{i}$) while $\psi_{1,i}<\psi_{1,j}$ (resp., $\psi_{1,i}>\psi_{1,j}$). This follows from Theorem \ref{thm2}: since ${\mathscr{C}}_{i}(\zeta)$ interpolates smoothly between degree centrality and eigenvector centrality, the only way that a node with higher degree
can have lower eigenvector centrality than another node is that the corresponding circulabilities interlace at some  value $\zeta^* > 0$. If more than one interlacing point exists, this number must be odd, for otherwise the node with higher degree would also have higher eigenvector centrality than the other node. That the above condition is not necessary is made clear considering the possibility of an even number of interlacing points. A necessary condition for the existence of at least one interlacing point is that there exist at least two values of $k$, say $k_{1}$ and $k_{2}$,
for which $(A^{k_{1}})_{ii}-(A^{k_{1}})_{jj}$ and $(A^{k_{2}})_{ii}-(A^{k_{2}})_{jj}$
have different sign. Indeed, it is obvious from \cref{series1}-\cref{series2}
that if (say) $(A^{k})_{ii}\ge(A^{k})_{jj}$ for all $k$, then no
rank interlacing point exists. That this condition may not be sufficient
is suggested by the fact that the series expansions contain an infinity
of terms.

We mention that the same problem has been studied, for a different centrality function (the Katz resolvent),
by \cite{Kloster2019} independently of us.

\subsection{A back of envelop approach}

We now consider heuristics based on truncated series expansions. Let
$k_{0}\ge3$ be the smallest value of $k$ such that the sequence
of values $\{(A^{k})_{ii}-(A^{k})_{jj}\}_{k\ge2}$ undergoes a sign
change (here zero is considered positive). If no such $k_{0}$ exists,
then no interlacing can take place, as we already observed. We consider
approximating ${\mathscr{C}}_{i}(\zeta)$ with its truncation to an
order $k\ge k_{0}$: 
\begin{equation}
	{\mathscr{C}}_{i}(\zeta)\approx1+\frac{1}{2!}\zeta^{2}w_{i,i}^{(2)}+\frac{1}{3!}\zeta^{3}w_{i,i}^{(3)}+\cdots+\frac{1}{k!}\zeta^{k}w_{i,i}^{(k)}=\tilde{\mathscr{C}}_{i}(\zeta),\label{approx}
\end{equation}
where we recall that $w_{i,i}^{(k)}=(A^{k})_{ii}$. We emphasize that
this polynomial approximation assumes that $\zeta$ is small, since
the error in it is $O(\zeta^{k+1})$. In alternative we can also use
as a surrogate for ${\mathscr{C}}_{i}$ the same polynomial shifted
by 1 and divided by $\zeta^{2}$: 
\[
\frac{\tilde{\mathscr{C}}_{i}(\zeta)-1}{\zeta^{2}}=\frac{1}{2!}w_{i,i}^{(2)}+\frac{1}{3!}\zeta w_{i,i}^{(3)}+\cdots+\frac{1}{k!}\zeta^{k-2}w_{i,i}^{(k)},
\]
where now the error is $O(\zeta^{k-1})$. We can now use these polynomial
approximations to try to locate, approximately, any interlacing points
sufficiently small in magnitude. This requires finding the (positive)
roots, if any, of the polynomial equation of degree $k-2$: 
\begin{equation}
	q(\zeta)=\frac{(w_{i,i}^{(k)}-w_{j,j}^{(k)})}{k!}\zeta^{k-2}+\frac{(w_{i,i}^{(k-1)}-w_{j,j}^{(k-1)})}{(k-1)!}\zeta^{k-3}+\cdots+\frac{(w_{i,i}^{(3)}-w_{j,j}^{(3)})}{3!}\zeta+\frac{(w_{i,i}^{(2)}-w_{j,j}^{(2)})}{2!}=0.\label{poly2}
\end{equation}
It is well known that for degree greater than or equal to 5 there
is no closed form expression of the solutions of an algebraic equation
involving only arithmetic operations and root extractions, so in general
if $k\ge7$ we will have to resort to numerical methods for solving
\cref{poly2}. Evaluation of the coefficients requires computing
the diagonal entries of powers of the adjacency matrix $A$, which
can be expensive for very large graphs and large values of $k$.

As the simplest possible example, we consider the case where $w_{i,i}^{(2)}>w_{j,j}^{(2)}$
and $w_{i,i}^{(3)}<w_{j,j}^{(3)}$ (or vice-versa), i.e., $k_{0}=3$.
Taking $k=k_{0}$, equation \cref{poly2} becomes the linear equation
\[
\frac{(w_{i,i}^{(3)}-w_{j,j}^{(3)})}{3!}\zeta+\frac{(w_{i,i}^{(2)}-w_{j,j}^{(2)})}{2!}=0,
\]
which admits the unique solution $\zeta^{*}=\frac{3(w_{i,i}^{(2)}-w_{j,j}^{(2)})}{w_{i,i}^{(3)}-w_{j,j}^{(3)}}$,
which is of course positive. In terms of the degree of the nodes and
the number of triangles in which they take place, this can be written
in the form:

\begin{equation}
	\zeta^{*}=\frac{3}{2}\left|\frac{k_{i}-k_{j}}{t_{i}-t_{j}}\right|.\label{eq:Bound-1}
\end{equation}

In the case of weighted networks, the degree is replaced by the weighted
degree or strength, and the number of triangles is replaced by the
weighted number of cycles of length 3, i.e., the weight of a cycle
of length 3 is the product of the weights at its three edges. A priori,
there is no reason to expect that this value is close to an actual
interlacing point (assuming it even exists), since the behavior of
higher order terms may more than offset the influence of the negative
term involving $t_{i}-t_{j}$. Better approximations might be obtained
by considering higher order approximations; for example using $k=4$
leads to an easily solved quadratic equation in $\zeta$, $k=5$ leads
to a cubic, and so forth. In any case, these are heuristics whose
usefulness can only be assessed experimentally on concrete examples.
We emphasize that the use of power series truncation requires knowledge
of $k_{0}$, since truncating the series at orders lower than $k_{0}$
would lead to an equation devoid of positive solutions and therefore
to concluding that no interlacing points exist for a given pair of
nodes, even if such points do exist.

It is also worth recalling Descartes's Rule of Signs, according to
which the number of positive real roots of a polynomial (counted with
their multiplicities) is equal to the number of sign changes in the
(nonzero) coefficients or less than that by an even whole number,
when the powers are ordered in descending order. If, moreover, the polynomial
is known to have only real roots (as in the case of a symmetric adjacency matrix,
i.e., of undirected networks) then the number of sign changes is exactly equal
to the number of positive roots.
It is then obvious
that if the power series is truncated at order $k_{0}$, i.e., as
soon as we observe the first sign change in the coefficients, then
there will be exactly one positive root and therefore only one (approximate)
interlacing point can be found by this method. 
A polynomial truncation of higher
degree $k>k_{0}$ may have more than one positive root, depending
on the number of changes in the coefficients (assuming the network is undirected).
We will come back to this case shortly.

To exemplify the previous finding let us consider a pair of nodes
with a small difference in their degree, e.g., $k_{i}-k_{j}=2$, then
$-\left(k_{i}-2\right)^{2}\leq\left(t_{i}-t_{j}\right)\leq k_{i}^{2}$,
such that if, for instance, $k_{i}\leq10$ and we let $\zeta$ vary from $0$ to $0.1$
we obtain the plot given in \cref{surfaces}(a). As can be seen
there are certain values of $\varDelta=t_{i}-t_{j}<0$ for which we
can obtain positive and negative values of $\mathscr{C}_{i}-\mathscr{C}_{j}$.
This is illustrated in \cref{surfaces}(b) where we can see
that when $-100\leq\varDelta\leq-40$ there are both positive and
negative values of $\mathscr{C}_{i}-\mathscr{C}_{j}$. In other words,
it is possible to find pairs of nodes for which  $\mathscr{C}_{i}\left(\zeta_{1}\right)>\mathscr{C}_{j}\left(\zeta_{1}\right)$
and then $\mathscr{C}_{i}\left(\zeta_{2}\right)<\mathscr{C}_{j}\left(\zeta_{2}\right)$,
which means that these nodes will change their ranking position in
terms of the risk-dependent centrality when the values of $\zeta$
change even for a relatively narrow window. Notice that if $k_{i}-k_{j}=2$,
and $\varDelta\geq-30$ such change is not observed for the corresponding
range of $\zeta$ analyzed.

\begin{figure}[H]
	\begin{centering}
		\subfloat[]{\begin{centering}
				\includegraphics[width=0.45\textwidth]{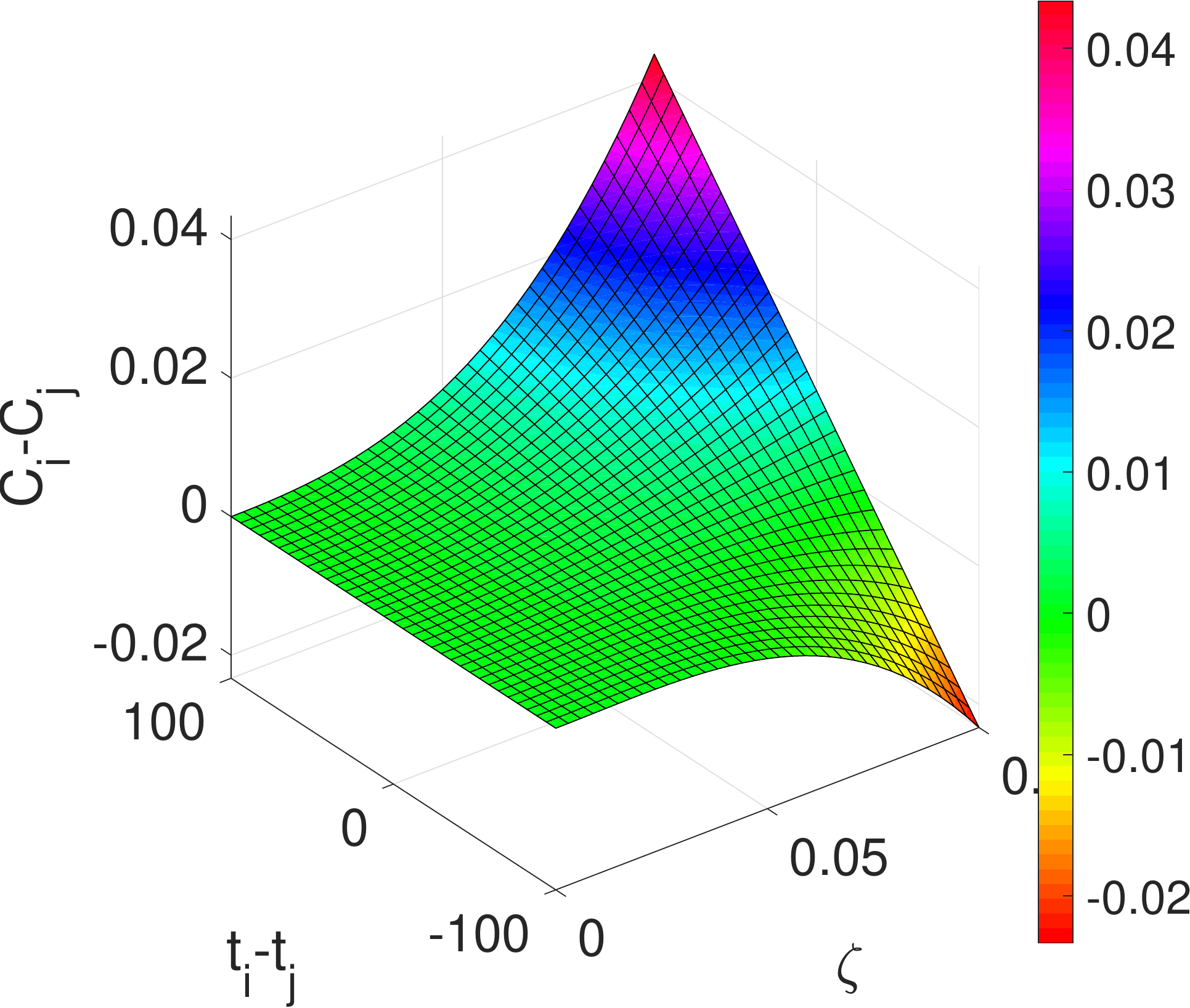}
				\par\end{centering}
		}\subfloat[]{\begin{centering}
				\includegraphics[width=0.45\textwidth]{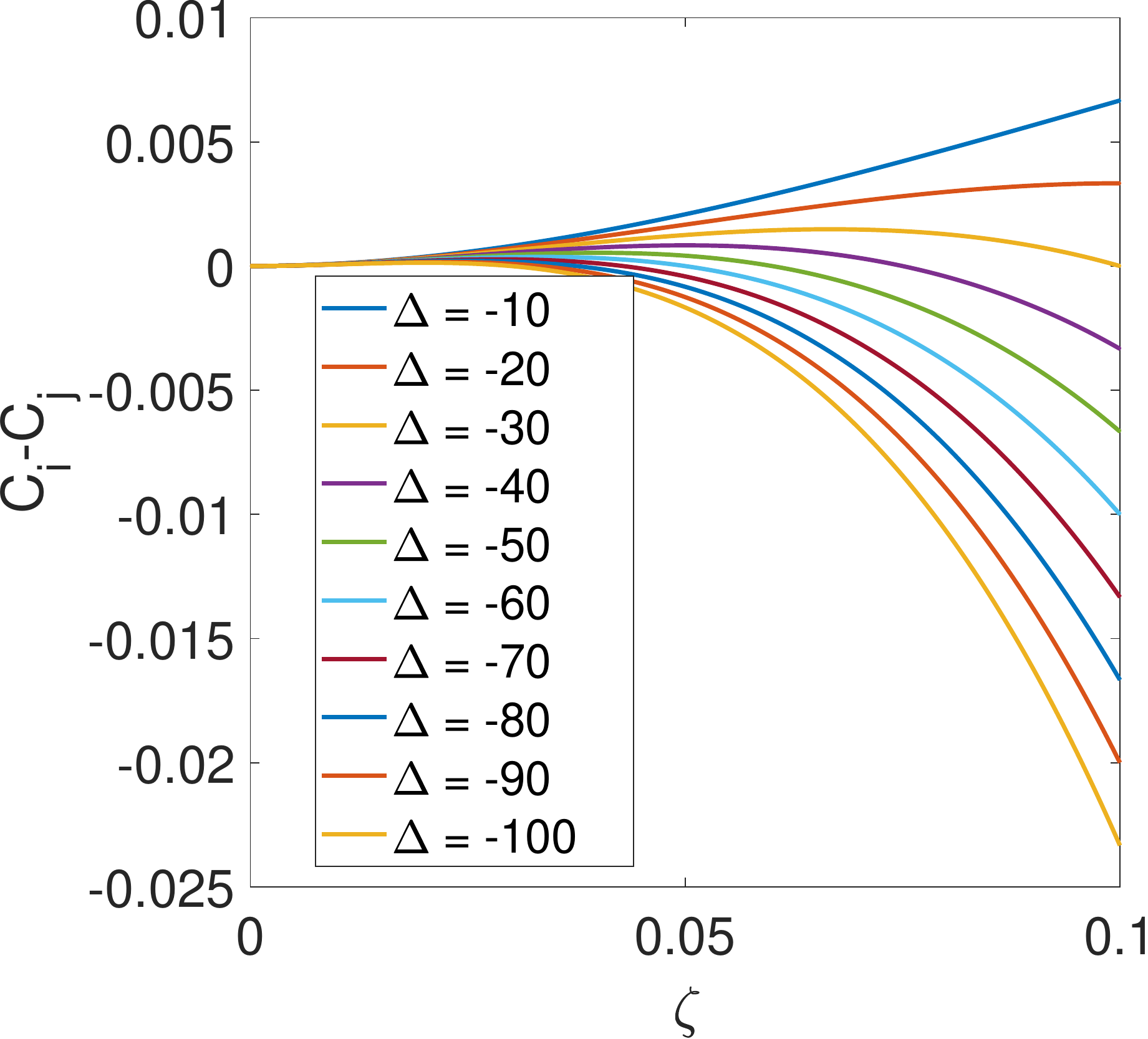}
				\par\end{centering}
		}
		\par\end{centering}
	\caption{(a) Illustration of the change in the difference in the risk-dependent
		centrality of nodes having a small difference in degrees, $k_{i}-k_{j}=2$,
		as a function of the difference in the number of triangles, $t_{i}-t_{j}$,
		and of the network infectivity risk $\zeta$. (b) Some of the curves
		obtained for $k_{i}-k_{j}=2$ and a given value of $\varDelta=t_{i}-t_{j}$
		as a function of $\zeta$.}
	
	\label{surfaces}
\end{figure}

If we now consider a large difference in the node degrees, e.g., $k_{i}-k_{j}=100$,
and the same range of change for the difference in the number of triangles,
e.g., $-100\leq\varDelta\leq 100$ we do not observe any variation
in the ranking of pairs of nodes as can be seen in \cref{surfaces_2}(a).
In this case the range of $\varDelta$ must be increased dramatically
to obtain inversions in the ranking of pairs of nodes (see \cref{surfaces_2}(b)).

\begin{figure}[H]
	\begin{centering}
		\subfloat[]{\begin{centering}
				\includegraphics[width=0.45\textwidth]{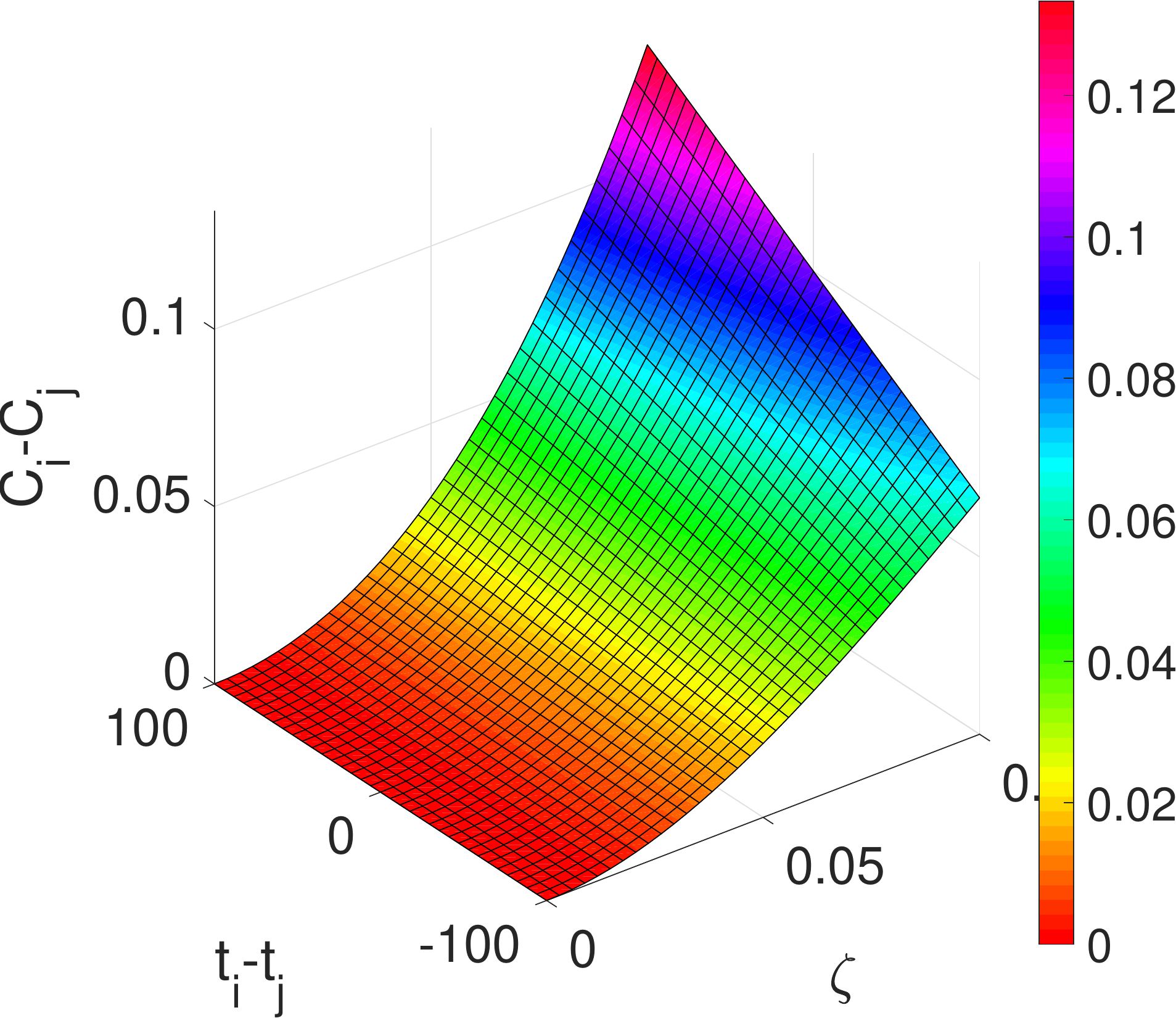}
				\par\end{centering}
		}\subfloat[]{\begin{centering}
				\includegraphics[width=0.45\textwidth]{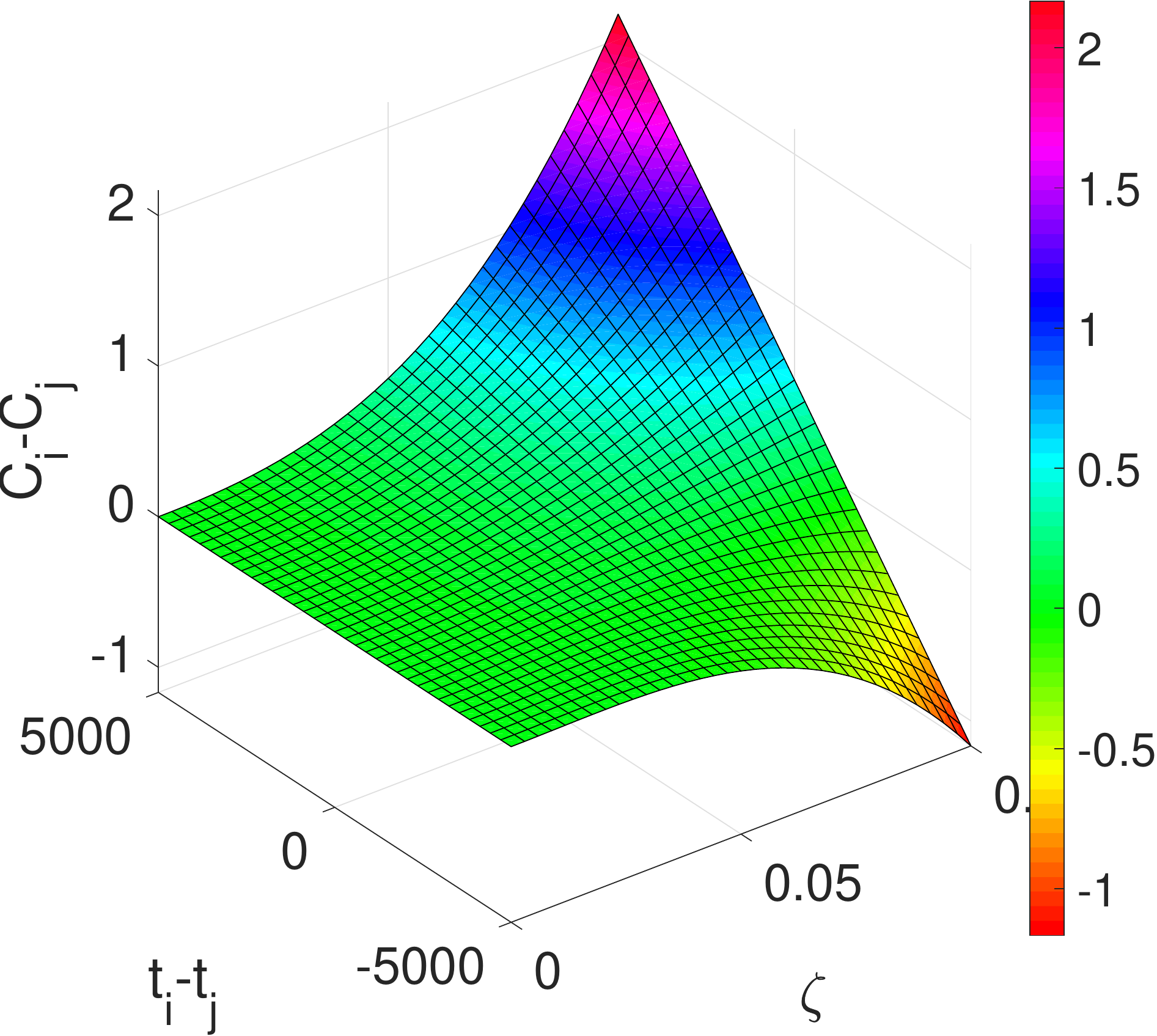}
				\par\end{centering}
		}
		\par\end{centering}
	\caption{Illustration of the change in the difference in the risk-dependent
		centrality of nodes having a small difference in degrees, $k_{i}-k_{j}=100$,
		as a function of the difference in the number of triangles, $-100\protect\leq\varDelta\protect\leq 100$
		(a) and $-5000\protect\leq\varDelta\protect\leq 5000$ (b), and of
		the network infectivity (risk) $\zeta$.}
	
	\label{surfaces_2}
\end{figure}

To illustrate how well the estimate  \cref{eq:Bound-1} performs, we
use it for approximating the interlacement point for several pairs of
corporates and compare them with the observed values in \cref{TabR-1}
for the weighted version of the US corporate network.

\begin{table}[!h]
	\small
	\caption{Calculation of the crossing point of ranking interlacement
		for several pairs of corporates in the US corporates network of 1999
		as well as the observed values at which such interlacements occur.}
	\centering{} %
	\begin{tabular}{|c|l|l|c|c|}
		\hline 
		\textbf{Plot} & \textbf{Corporate 1} & \textbf{Corporate 2} & \textbf{$\zeta^{*}$ calculated} & \textbf{$\zeta^{*}$ observed}\tabularnewline
		\hline 
		\hline 
		(a) & J.P. Morgan\&Co Inc. & Bank of America Corp. & 0.375 & 0.37\tabularnewline
		\hline 
		(b) & Pfizer Inc. & Ashland Inc. & 0.441 & 0.41\tabularnewline
		\hline 
		(c) & Morgan Stanley \& Co. & Bank One Corp. & 0.176 & 0.17\tabularnewline
		\hline 
		(d) & AT\&T Corp. & Airtouch Communications  & 0.273 & 0.27\tabularnewline
		\hline 
		(e) & Union Carbide Corp. New & AON Corp. & 0.353 & 0.32\tabularnewline
		\hline 
		(f) & General Motors Corp. & Boeing Co. & 0.214 & 0.14\tabularnewline
		\hline 
	\end{tabular}
	\label{TabR-1}
\end{table}

A few more general considerations on the validity of the power series
truncation heuristic can be made. The size of the interval containing
any interlacing points is dictated to a large extent by how quickly
the rankings based on the measures ${\mathscr{C}}_{i}(\zeta)$ (or
$\tilde{\mathscr{C}}_{i}(\zeta)$) stabilize near the rankings obtained
using eigenvector centrality. This, in turn, depends on the spectral
gap $\lambda_{1}-\lambda_{2}$: the larger the gap, the faster the
eigenvector centrality rankings are approached for increasing values
of $\zeta$. Hence, in the case of relatively large gaps, we expect
any interlacing values to occur for fairly small values of $\zeta$.
In this case, the heuristics based on polynomial approximations may
be justified, since interlacing is likely to occur already for small
values of $\zeta$. As is well known, however, it is not easy to determine
when the spectral gap is ``sufficiently large''. On the other hand,
when the spectral gap is tiny, then the interval $[0,\bar{\zeta}]$
is going to be larger and therefore there is ``more room'' for the
occurrence of interlacing. Unfortunately, in this case it is not clear
that polynomial truncation will be effective in approximately locating
the interlacing points. In this case, a possible solution is to expand
the functions ${\mathscr{C}}_{i}(\zeta)$ not around the value $\zeta=0$,
but also around a few values $\zeta_{0}>0$. This strategy can also
be used to find a possible second point of interlacing after having
found a first such point $\zeta^{*}$. Expanding around $\zeta^{*}$
leads to 
\begin{align}
& \Psi(\zeta^{*}+\eta)={\mathscr{C}}_{i}(\zeta^{*}+\eta)-{\mathscr{C}}_{j}(\zeta^{*}+\eta)= \nonumber \\
& \frac{1}{2!}(w_{i,i}^{(2)}-w_{j,j}^{(2)})\eta^{2}+\frac{1}{3!}(w_{i,i}^{(3)}-w_{j,j}^{(3)})\eta^{3}+\cdots+\frac{1}{k!}(w_{i,i}^{(k)}-w_{j,j}^{(k)})\eta^{k}+O(\eta^{k+1}). \nonumber 
\end{align}
Dividing by $\eta^{2}$ and setting the result equal to zero leads
to an algebraic equation of degree $k-2$ for $\eta$; the smallest
positive root $\eta^{*}$ of this equation, if there are any, leads
to the approximation $\zeta^{*}+\eta^{*}$ for the next interlace
point, and so forth.

Completely analogous considerations apply to the approximation of
interlacing points when the ranking of nodes is done according to
the risk-based total communicability measure ${\mathscr{R}}_{i}(\zeta)$.
In this case the transcendental equation to be solved is given by
\[
\chi(\zeta)={\mathscr{R}}_{i}(\zeta)-{\mathscr{R}}_{j}(\zeta)=\sum_{k=1}^{n}e^{\zeta\lambda_{k}}\left(\psi_{k}^{T}{\vec 1}\right)\left[\psi_{k,i}-\psi_{k,j}\right]=0.
\]
Let $w_{i}^{(k)}=(A^{k}{\vec 1})_{i}$. Then, truncating the series
expansion \cref{series1} and dividing by $\zeta>0$ leads to the
approximation 
\begin{equation}
	\frac{(w_{i}^{(k)}-w_{j}^{(k)})}{k!}\zeta^{k-1}+\cdots+\frac{(w_{i}^{(2)}-w_{j}^{(2)})}{2!}\zeta+(w_{i,i}^{(2)}-w_{j,j}^{(2)})=0\label{poly3}
\end{equation}
for the equation whose smallest positive solution approximates the
first interlacement value for the rankings of nodes $i$ and $j$,
assuming it exists; here again $k\ge k_{0}$ where now $k_{0}\ge2$
is the smallest integer value for which the sequence $\{w_{i}^{(k)}-w_{i}^{(k)}\}_{k}$
changes sign. The simplest possible case is when $k=k_{0}=2$, which
occurs when $w_{i,i}^{(2)}-w_{j,j}^{(2)}$ and $w_{i}^{(2)}-w_{j}^{(2)}=(A^{2}{\vec 1})_{i}-(A^{2}{\vec 1})_{j}$
have different sign. In this case \cref{poly3} reduces to the linear
equation 
\[
\frac{(w_{i}^{(2)}-w_{j}^{(2)})}{2}\zeta+(w_{i,i}^{(2)}-w_{j,j}^{(2)})=0,
\]
with the unique root 
\[
\zeta^{*}=2\frac{w_{i,i}^{(2)}-w_{j,j}^{(2)}}{w_{j}^{(2)}-w_{i}^{(2)}}>0.
\]

\section{Risk prediction and COVID-19}\label{sec:COV}
Since we submitted the first versions of this work, a pandemic has
been expanding from the city of Wuhan, Hubei province of China \cite{SARS-CoV-2,COVID-19} starting on December 2019.
This disease is produced by a new coronavirus
named SARS-CoV-2 \cite{SARS-CoV-2_1} and has affected in about three
months more than 200 countries around the world. The main problem
right now is of a medical nature, but as stated by Balwing and Weder
di Mauro this coronavirus is ``as contagious economically as it is
medically'' \cite{Book}. One of the most important characteristics
of this pandemic in comparison with recent ones is that it is hitting
very strongly the most important economies in the world: China, USA,
Germany, Italy, Spain. There are some preliminaries studies about
the macroeconomic impacts of this pandemic (see for instance \cite{Book}).
However, it is important to apply mathematical and computational techniques
to forecast, at regional, national and international level, the impact
of this crisis on financial institutions, corporations and small companies.
All of them are highly interconnected in a globally dependent economy,
forming series of complex networks. In this new scenario the current
work represents an opportunity for modelers to advance predictions
on the potential risks to which different institutions are submitted
to in the current situation. This modeling scenario consists of the
networks of interactions between the institutions under analysis assuming
a high infectability in the network. Using the transmissibility and
circulability measures defined here, the modeler can understand how
at risk of transmitting the crisis to others or, respectively, of staying in a cycle
of repeated economic difficulties, a company is. At
the same time, the current work allows to model how different palliative
measures taken by regional or global financial institutions in the
European Union, USA or China can impact these companies. In this
case, the modeler should drop the infectivity of the system and analyze
how the ranking of risk for the different companies changes to gain
insights about their potential recovery or bankruptcy.

\section{Conclusions}
\label{sec:conclusions}
In general, node centrality in networks are of either of two types:
(i) node centrality in networks of time-invariant topology \cite{Estrada book},
or (ii) node centrality in networks of time-dependent topology (aka
temporal networks) \cite{temporal networks}. In this work we have
developed a new concept of node centrality, depending on both the topology of the network
and the external conditions to which the network as a whole is submitted. In particular, we have
focused on global risk as the external factor by which an economic and financial network is affected.
We started by considering the ``Susceptible-Infected'' model and its
connection to the communicability functions of nodes and edges in
a network. Then, we developed a few centrality measures which depend
not only on the local and global topological environment of a node
but also on the level of infectivity stressing the system as a
whole. In this way we have been able to make predictions in financial
and economic systems about the changes in the risk-dependent centralities
of nodes as a function of the global level of infectivity in the system.
We observe that without altering the topology of the network, i.e.,
without varying any connection between the nodes, the ranking of
the nodes, according to these new centrality measures, changes significantly as the infectivity rate changes. In the real-world
networks studied here we have been able to associate those changes in the risk-dependent centrality of nodes 
with events of
the real financial and economic worlds in which these networks are
embedded. In closing, we provide here both theoretical, computational
and empirical evidences that the node centrality is not a static
function even when the topology of the system is not varying at all.
This new paradigm is expected to play a fundamental role in assessing
the robustness of financial and economic systems to the variation of the external conditions which they are submitted to.


\bibliographystyle{siamplain}
\bibliography{references}

\newpage
\appendix

	\section{Risk-dependent centrality measures for complete graphs}
	The following theorem provides a close expression for $\mathscr{R}_{i}$,
	$\mathscr{C}_{i}$ and $\mathscr{T}_{i}$ for a complete network.
	\begin{theorem}\label{thm3} The risk-dependency ${\mathscr{R}}_{i}$ for each
		node in a complete graph is given by
		\[
		{\mathscr{R}}_{i}=e^{(n-1)\zeta}
		\]
		and the circulability and transmissibility are given by
		\[
		{\mathscr{C}}_{i}(\zeta)=\frac{n-1}{n}\left[\frac{e^{(n-1)\zeta}}{n-1}+\frac{1}{e^{\zeta}}\right]\qquad{\mathscr{T}}_{i}(\zeta)=\frac{n-1}{n}\left[e^{(n-1)\zeta}-\frac{1}{e^{\zeta}}\right]
		\]
	\end{theorem}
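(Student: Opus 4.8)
The plan is to exploit the simple algebraic structure of the complete graph, whose adjacency matrix is $A = J - I$, where $J$ is the $n \times n$ all-ones matrix. Everything will follow from two elementary facts: that $I$ and $J$ commute, so that $e^{\zeta A} = e^{-\zeta I}\,e^{\zeta J} = e^{-\zeta}\,e^{\zeta J}$, and that $J^{k} = n^{k-1} J$ for every $k \geq 1$.

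First I would establish the formula for $\mathscr{R}_i$. Since $\vec{1}$ is an eigenvector of $J$ with eigenvalue $n$, it is an eigenvector of $A = J - I$ with eigenvalue $n-1$; hence $e^{\zeta A}\vec{1} = e^{(n-1)\zeta}\vec{1}$, and reading off the $i$th entry gives $\mathscr{R}_i = e^{(n-1)\zeta}$ for every node $i$. By the vertex-transitivity of $K_n$ this value is necessarily independent of $i$, which serves as a useful consistency check.

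Next I would compute the full matrix $e^{\zeta A}$ in closed form in order to extract the circulability $\mathscr{C}_i = (e^{\zeta A})_{ii}$. Using $J^{k} = n^{k-1}J$ I would sum the exponential series for $e^{\zeta J}$:
\[
e^{\zeta J} = I + \sum_{k=1}^{\infty} \frac{\zeta^{k}}{k!}\, n^{k-1} J = I + \frac{e^{n\zeta}-1}{n}\, J,
\]
so that $e^{\zeta A} = e^{-\zeta}\left[I + \frac{e^{n\zeta}-1}{n}J\right]$. Since the diagonal entries of both $I$ and $J$ equal $1$, the $i$th diagonal entry is $e^{-\zeta}\left[1 + \frac{e^{n\zeta}-1}{n}\right] = \frac{1}{n}e^{(n-1)\zeta} + \frac{n-1}{n}e^{-\zeta}$, which matches the stated expression for $\mathscr{C}_i(\zeta)$ after factoring out $\frac{n-1}{n}$. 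Equivalently, one may use the spectral projectors: $A$ has eigenvalue $n-1$ with projector $\frac{1}{n}J$ and eigenvalue $-1$ with projector $I - \frac{1}{n}J$, giving $e^{\zeta A} = e^{(n-1)\zeta}\frac{1}{n}J + e^{-\zeta}\left(I - \frac{1}{n}J\right)$ and the same diagonal entry.

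Finally, the transmissibility follows for free from the decomposition $\mathscr{R}_i = \mathscr{C}_i + \mathscr{T}_i$ introduced in Section \ref{sec:RDcent}: subtracting the circulability from $\mathscr{R}_i = e^{(n-1)\zeta}$ yields $\mathscr{T}_i(\zeta) = \frac{n-1}{n}\left[e^{(n-1)\zeta} - e^{-\zeta}\right]$. There is no genuine conceptual obstacle in this proof; the only points requiring care are the correct summation of the series for $e^{\zeta J}$ (in particular isolating the $k=0$ term, which contributes $I$ rather than a multiple of $J$) and the routine algebra needed to rewrite the diagonal entry in the grouped form $\frac{n-1}{n}\left[\cdots\right]$ appearing in the statement.
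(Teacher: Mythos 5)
Your proof is correct, and for $\mathscr{R}_i$ it is essentially the paper's argument in different clothing: the paper expands $e^{\zeta A}\vec{1}$ spectrally and uses the orthogonality of $\vec{\psi}_j$ ($j\neq 1$) to the constant Perron vector, which is exactly the observation that $\vec{1}$ is the eigenvector of $A$ with eigenvalue $n-1$. Where you genuinely diverge is in the circulability: the paper does not compute $e^{\zeta A}$ at all, but instead quotes the known closed form of the subgraph centrality of $K_n$ from the Estrada--Rodr\'{\i}guez-Vel\'azquez reference, namely $SC(i)=\frac{1}{n}\bigl[e^{n-1}+\frac{n-1}{e}\bigr]$, and then rescales the power series by replacing $A$ with $\zeta A$ to insert the parameter $\zeta$. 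You instead derive the full matrix exponential from scratch via $A=J-I$, $J^k=n^{k-1}J$ (equivalently via the spectral projectors $\frac{1}{n}J$ and $I-\frac{1}{n}J$), and read off the diagonal entry. Your route is self-contained and arguably cleaner — it avoids the external citation and makes the $\zeta$-dependence transparent rather than obtained by a rescaling argument — at the modest cost of one extra series summation; the paper's route is shorter on the page but leans on a prior result. Both then obtain $\mathscr{T}_i$ by subtraction, and your algebra checks out: $e^{-\zeta}\bigl[1+\frac{e^{n\zeta}-1}{n}\bigr]=\frac{1}{n}e^{(n-1)\zeta}+\frac{n-1}{n}e^{-\zeta}$ agrees with the stated $\mathscr{C}_i(\zeta)$, and the difference gives the stated $\mathscr{T}_i(\zeta)$.
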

	\begin{proof} For a complete graph, ${\psi}_{j}^{T}\cdot\vec{1}=0,\ j\neq1$,
		because of the mutual orthogonality between ${\psi}_{j},\ j\neq1$
		and the principal eigenvector ${\psi}_{1}$ of constant components.
		That is, ${\mathscr{R}}_{i}$ is completely determined by the eigenvector
		centralities $\psi_{1,i}$ which of course are equal for every node
		and equal to $\psi_{1,i}=\frac{1}{\sqrt{n}}$. Since $\lambda_{1}=n-1$,
		we obtain:
		\[
		{\mathscr{R}}_{i}=e^{\zeta\lambda_{1}}\left({\psi}_{1}^{T}\cdot\vec{1}\right)\psi_{1,i}+0=e^{(n-1)\zeta}\left(\frac{1}{\sqrt{n}}\cdot n\right)\frac{1}{\sqrt{n}}=e^{(n-1)\zeta}
		\]
		Subgraph centrality close expression for a complete graph is provided in \cite{Estrada05}:
		\[
		{\mathscr{C}}_{i}(1)=SC(i)=\frac{1}{n}\left[e^{n-1}+\frac{n-1}{e}\right]
		\]
		Multiplying each entry in $A$ by $\zeta$ and summing up the power
		series, we get
		\[
		{\mathscr{C}}_{i}(\zeta)=\frac{n-1}{n}\left[\frac{e^{(n-1)\zeta}}{n-1}+\frac{1}{e^{\zeta}}\right]
		\]
		By difference, we get ${\mathscr{T}}_{i}(\zeta)$.
	\end{proof}
	An important remark concerns the ratio $\frac{{\mathscr{C}}_{i}}{{\mathscr{R}}_{i}}$. Indeed:
	\begin{equation}
	\underset{\zeta\rightarrow+\infty}{\lim}\frac{\mathscr{C}_{i}}{\mathscr{R}_{i}}=\underset{\zeta\rightarrow+\infty}{\lim}\frac{\frac{n-1}{n}\left[\frac{e^{(n-1)\zeta}}{n-1}+\frac{1}{e^{\zeta}}\right]}{e^{(n-1)\zeta}}=\underset{\zeta\rightarrow+\infty}{\lim}\left[\frac{1}{n}+\frac{n-1}{n}\frac{1}{e^{n\zeta}}\right]=\frac{1}{n}. \label{ratio}
	\end{equation}
	Similarly,
	\[
	\underset{\zeta\rightarrow+\infty}{\lim}\frac{\mathscr{C}_{i}}{\mathscr{T}_{i}}=\frac{\frac{e^{(n-1)\zeta}}{n-1}+\frac{1}{e^{\zeta}}}{e^{(n-1)\zeta}-\frac{1}{e^{\zeta}}}=\frac{1}{n-1}.
	\]

\end{document}